\definecolor{ddgreen}{rgb}{.05,.4,.05}
\definecolor{damethyst}{rgb}{0.4, 0.2, 0.6}
\newtheorem{theorem}{Theorem}
\newtheorem{cor}{Corollary}
\newtheorem{definition}{Definition}
\newtheorem{lem}{Lemma}
\newtheorem{construction}{Construction}
\newtheorem{remark}{Remark}
\newcommand{\beq}{\begin{eqnarray}}
\newcommand{\eeq}{\end{eqnarray}}
\newcommand{\Gen}{\mathsf{Gen}}
\newcommand{\sk}{\mathsf{sk}}
\newcommand{\pk}{\mathsf{pk}}
\newcommand{\poly}{\mathsf{poly}}
\newcommand{\negl}{\mathsf{negl}}
\newcommand{\niton}{\not\owns}
\newcommand{\comp}{\mathsf{CPhO}}
\newcommand{\X}{\mathcal{X}}
\newcommand{\Y}{\mathcal{Y}}
\newcommand{\K}{\mathcal{K}}
\newcommand{\R}{\mathcal{R}}
\newcommand{\A}{\mathcal{A}}
\newcommand{\supp}{\textsc{Supp}}
\newcommand{\genf}{\textsc{Gen}_{\mathcal{F}}}
\newcommand{\invf}{\textsc{Inv}_{\mathcal{F}}}
\newcommand{\chkf}{\textsc{Chk}_{\mathcal{F}}}
\newcommand{\sampf}{\textsc{Samp}_{\mathcal{F}}}
\newcommand{\geng}{\textsc{Gen}_{\mathcal{G}}}
\newcommand{\invg}{\textsc{Inv}_{\mathcal{G}}}
\newcommand{\chkg}{\textsc{Chk}_{\mathcal{G}}}
\newcommand{\sampg}{\textsc{Samp}_{\mathcal{G}}}
\DeclareMathOperator{\bw}{\mathbf{w}}
\DeclareMathOperator{\iO}{iO}
\definecolor{amethyst}{rgb}{0.6, 0.4, 0.8}
\newcommand{\anote}[1]{\textcolor{blue}{[Andrea: #1]}}
\title{Deniable Encryption in a Quantum World}
\author{Andrea Coladangelo\thanks{UC Berkeley \& Simons Institute for the Theory of Computing \& qBraid. Email: \texttt{andrea.coladangelo@gmail.com}} \and Shafi Goldwasser\thanks{UC Berkeley \& Simons Institute for the Theory of Computing. Email: \texttt{shafi.goldwasser@gmail.com}} \and Umesh Vazirani\thanks{UC Berkeley \& Simons Institute for the Theory of Computing. Email: \texttt{vazirani@berkeley.edu}}}
\date{}
\begin{document}

\maketitle
\begin{abstract}

(Sender-)Deniable encryption provides a very strong privacy guarantee: a sender who is coerced by an attacker into ``opening'' their ciphertext after-the-fact is able to generate ``fake'' local random choices that are consistent with any plaintext of their choice. 

In this work, we study (sender-)deniable encryption in a setting where the encryption procedure is a quantum algorithm, but the ciphertext is classical. We show that quantum computation unlocks a fundamentally stronger form of deniable encryption, which we call \emph{perfect unexplainability}.
The primitive at the heart of unexplainability is a quantum computation for which there is provably no efficient way, such as exhibiting the ``history of the computation'', to establish that the output was indeed the result of the computation. 
We give a construction that is secure in the random oracle model, assuming the quantum hardness of LWE. Crucially, this notion implies a form of protection against coercion ``before-the-fact'', a property that is impossible to achieve classically.\footnote{Note: A previous version of this paper also described an alternative notion of quantum deniability called ``$\ell$-deniability'' 
and proposed a construction that was claimed to be ``1-deniable''. This construction was later found to contain a fatal mistake. Thus, we have removed all discussion of $\ell$-deniability from the current version. The authors thank Ashwin Nayak and Rahul Jain for pointing out this mistake. The mistake does not affect what we consider to be the main conceptual notion that we propose and study in this paper, namely ``unexplainability''.}
\end{abstract}

\tableofcontents

\section{Introduction}
This work is motivated by the following overarching question: do \emph{local} quantum computations alone provide an advantage in cryptography? In other words, is there any quantum advantage in the setting where honest parties can perform \emph{local} quantum computations but are restricted to sending and storing \emph{classical} information?


While many examples of quantum advantage are known that leverage quantum communication (or shared entanglement), e.g.\ key distribution \cite{bennett1984quantum, ekert1992quantum} and oblivious transfer \cite{crepeau1988achieving, bartusek2021one, grilo2021oblivious}, or that rely on storing quantum information, e.g.\ quantum money \cite{Wiesner83}, copy-protection \cite{aaronson2009quantum}, and various other unclonable primitives \cite{ben2016quantum, broadbent2019uncloneable, broadbent2020quantum, coladangelo2021hidden}, examples of quantum advantage in the setting where parties rely solely on local quantum computations are much more limited~\cite{brakerski2018cryptographic, liu2022beating, gheorghiu2022quantum}.

Here, we study the notion of \emph{deniable encryption} in this setting. Deniable encryption was introduced by Canetti et al. \cite{canetti1997deniable}. In a deniable encryption scheme, honest parties are able to generate a ``fake'' secret key (in the case of \emph{receiver} deniability) and ``fake'' randomness (in the case of \emph{sender} deniability) to claim that the public communication is consistent with any plaintext of their choice. This allows them to preserve the privacy of the true plaintext even if an adversary coerces them after-the-fact into disclosing their private information. Here, we restrict our attention to non-interactive public-key schemes, and we focus on \emph{sender-deniable} encryption, namely the setting in which we only protect the sender against coercion by an attacker. A bit more formally, a public-key encryption scheme is sender-deniable if there exists a ``faking'' algorithm that takes as input a pair of messages $m_0, m_1$, an encryption $c = \textsf{Enc}(m_0, r)$, and the randomness $r$ used in the encryption, and outputs some ``fake'' randomness $r'$, which should look consistent with a genuine encryption of $m_1$. More precisely, the view of an attacker who receives $m_1$, $c$, and the fake randomness $r'$, should be computationally indistinguishable from the view of an attacker who receives $m_1$, along with a genuine encryption of $m_1$, and the true randomness used. 


In their original paper \cite{canetti1997deniable}, Canetti et al.\ gave a construction of a deniable encryption scheme where the real and fake views are computationally indistinguishable up to inverse polynomial distinguishing advantage - we will call the distinguishing advantage the ``faking probability''.
More generally, they show that the size of the ciphertext grows with the inverse of the faking probability. In a breakthrough work \cite{sahai2014use}, Sahai and Waters gave the first construction with negligible faking probability and compact (i.e polynomial-size) ciphertexts under the assumption that  secure indistinguishability obfuscation ($\iO$) exists. 
A more recent work \cite{agrawal2021deniable}, achieves deniable encryption with compact ciphertexts and negligible faking probability based on polynomial-time hardness of LWE, albeit the running time of the encryption algorithm is non-polynomial.

Despite this progress, an important shortcoming of the classical notion of deniability is that it can only handle coercion \emph{after-the-fact} (i.e.\ the attacker approaches the sender \emph{after} she has sent her ciphertext demanding an explanation), but it fundamentally cannot handle coercion \emph{before-the-fact} (i.e.\ the attacker approaches the sender \emph{before} she sends her ciphertext - even before the public key is revealed - and prescribes the randomness to be used by the sender in their encryption: this clearly rules out any possibility of the sender later ``faking'' an explanation). Thus, in this work, we raise the following question:

\begin{center}
\emph{Is there a notion of deniability achievable in the quantum world \\
that can provide some protection against coercion before-the-fact?} 
\end{center}

We provide an affirmative answer to this question in the setting where encryption is a \emph{quantum} algorithm, and the ciphertext is \emph{classical}. For simplicity, we will refer to this as the \emph{quantum setting}. The notion that we propose, and the corresponding construction, illustrate that quantum computation provides a fundamentally new kind of advantage for deniability, which provides some form of protection against coercion \emph{before-the-fact}.

\paragraph{The difficulty in defining deniability in the quantum setting} When attempting to formulate a definition of deniability in the quantum setting that mimics the classical definition, one quickly runs into two basic obstacles:
\begin{itemize}
\item[1.] The classical definition of deniability is centered around the notion of ``input randomness''. However, the concept of ``input randomness'' is not well-defined in the quantum setting: quantum algorithms can sample randomness by making measurements! Moreover, they can also ``cover their tracks'' by repeatedly performing measurements in an incompatible basis. To illustrate this, one might consider a definition of deniability where the sender provides the attacker with all of her leftover quantum work registers. However, such a definition would be trivial to satisfy by slightly modifying any classical encryption algorithm, and running it on a quantum computer as follows. 
Run the classical encryption algorithm using some register to sample the randomness; at the very end, measure the register containing the randomness in the Hadamard basis, and append the measurement outcome to the ciphertext (and accordingly modify decryption so that it ignores this part of the ciphertext). What this has accomplished is that the input randomness has effectively been ``erased'' as part of the honest encryption algorithm, and the leftover quantum state trivially does not reveal anything to the attacker.
\item[2.] More generally, beyond input randomness, the notion of a \emph{transcript} of a computation is also not well-defined for a quantum algorithm, since ``observing'' the computation at any step disturbs the computation in general.
\end{itemize}

\subsection{Our Contributions}
\label{sec: our results}

\paragraph{Unexplainable encryption} In light of the above, we put forward the notion of \emph{unexplainability}, which takes a different viewpoint on the concept of input randomness: in the classical setting, the input randomness can be viewed as a proxy for a ``proof'' that the ciphertext is a valid encryption of a certain plaintext. The notion of unexplainability formalizes this notion of a proof, and achieves a single definition that provides a natural common view of deniability in the classical and quantum setting.

In an unexplainable encryption scheme, it is computationally intractable for a sender, except with negligible probability, to ``prove'' that they encrypted a particular plaintext (thus an attacker has simply no reason to bother coercing a sender into opening their ciphertext in the first place: the sender would simply not be able to provide a convincing proof, even if they wanted to). Notice that, for encryption schemes with perfect decryption (i.e.\ where there is a unique plaintext consistent with a given ciphertext), unexplainability is impossible to achieve classically: the input randomness used to generate the ciphertext, along with the plaintext $m$, constitutes a proof that the ciphertext is an encryption of $m$. This impossibility does not apply to quantum encryption algorithms, for which randomness may inherently be the result of a measurement.

The crux in formalizing the definition of unexplainability is to formalize what it means for a sender to ``prove'' that they encrypted a particular plaintext (without referring to any input randomness). We argue that the appropriate notion of a proof is akin to that of an \emph{argument}. We start by defining ``explainability'', and then define unexplainability by taking the contrapositive. The following definition is not restricted to encryption schemes with perfect decryption, so it is achievable classically (and in some sense, provides a unified view of classical and quantum deniability). 

\begin{definition}[Explainability (informal)]
\label{def: exp}
A public-key encryption scheme is
\emph{explainable} if there exists an efficient verification procedure $\mathsf{Verify}$, taking as input a tuple of public key, ciphertext, message, and alleged proof $(\mathsf{pk},c,m, w)$, such that $\mathsf{Verify}(\mathsf{pk},c,m,w) = 0$ if the triple $(\mathsf{pk},c,m)$ is inconsistent. Moreover, $\mathsf{Verify}$ should satisfy the following:
\begin{itemize}
    \item \emph{Completeness}: there exists an efficient procedure that, on input $m$, $\mathsf{pk}$, generates $c,w$ such that $\mathsf{Verify}(\mathsf{pk},c,m,w)$ outputs $1$ with high probability.
    \item \emph{Soundness}: no efficient procedure, on input $\mathsf{pk}$, $m, m'$ with $m\neq m'$, can generate $c, w, w'$ such that $\mathsf{Verify}(\mathsf{pk}, c,m,w)$ and $\mathsf{Verify}(\mathsf{pk}, c,m',w')$ both output $1$, except with negligible probability.
\end{itemize}
\end{definition}

By contrapositive, a scheme is \emph{unexplainable} if, for any such $\mathsf{Verify}$, one of completeness or soundness fails. Thus, the sense in which a classical deniable scheme is unexplainable is that a sender can never convincingly ``prove'' to an attacker that they encrypted a particular plaintext (even if the sender wishes to do so honestly): this is precisely because, by the classical definition of deniability, it is always possible for a sender to efficiently generate randomness consistent with \emph{any} plaintext of their choice. In other words, a classical deniable encryption scheme is unexplainable because the soundness condition above fails when one takes $\mathsf{Verify}$ to be the procedure that interprets $w$ as the randomness used in the encryption, and simply checks that $c = \mathsf{Enc}(\pk, m; w)$ (technically, unexplainability, as stated, requires the soundness condition to fail for all $\mathsf{Verify}$, though this distinction is not important for the intuition in this introduction - we refer the reader to Section~\ref{sec: den vs unexp} for a more detailed discussion of the relationship between deniability and unexplainability).

The notion of unexplainability can be thought of as a rephrasing of deniability from a different perspective. In fact, we show that the appropriate variation on the definition of unexplainability is \emph{equivalent} to deniability. However, unlike deniability, the notion of unexplainability has a very natural extension to the quantum setting, as it is not centered around randomness, but rather, more abstractly, around the notion of a proof: without modifications, the definition above makes sense even in the quantum setting (where one may choose to allow the ``proof'' $w$ to be a quantum state).



Notice that for an encryption scheme with \emph{perfect} decryption (i.e. one in which there is a unique plaintext consistent with a given ciphertext), the notion of a proof described above coincides with that of an $\mathsf{NP}$ (or $\mathsf{QMA}$) proof: since we require that $\mathsf{Verify}(\mathsf{pk},c,m,w) = 0$ if the triple $(\mathsf{pk},c,m)$ is inconsistent, then $\mathsf{Verify}$ is precisely an $\mathsf{NP}$-relation (or $\textsf{QMA}$-relation) for the language $$L = \{x = (m, c, \mathsf{pk}) : c \text{ is a valid encryption of }m \text{ under } \mathsf{pk}\}\,.$$
The definition then has the additional requirement of \emph{completeness}, which asks that there is an efficient procedure that takes as input $\pk, m$, and generates valid $c,w$.

As anticipated earlier, for encryption schemes with perfect decryption, unexplainability is impossible to achieve classically. First notice, that for a scheme with perfect decryption, the \emph{soundness} condition in Definition \ref{def: exp} is trivially satisfied. Thus, the scheme is unexplainable if and only if for all $\mathsf{Verify}$ (as in Definition \ref{def: exp}) the \emph{completeness} condition fails. However, the latter is contradicted by the following. Consider the $\mathsf{NP}$-relation $\mathsf{Verify}\left((m,c,\mathsf{pk}), w\right)$ where the witness is the randomness used to encrypt, i.e. $\mathsf{Verify}\left((m,c,\mathsf{pk}), w\right) = 1$ if $\mathsf{Enc}(\mathsf{pk}, m; w) = c$, and $\mathsf{Verify}\left((m,c,\mathsf{pk}), w\right) = 0$ otherwise. Then, simply consider the efficient procedure that encrypts honestly, and outputs the randomness as the witness.




In this work, we show the following.
\begin{theorem}[Informal]
\label{thm: 2 inf}
There exists an unexplainable public-key encryption scheme with perfect decryption, with security in the quantum random oracle model (QROM), assuming the quantum hardness of LWE.
\end{theorem}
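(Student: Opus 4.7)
The plan is to build the scheme on top of the TCF-based construction of Theorem~\ref{thm: 1}, and to use the quantum random oracle as a universal ``extractor'' that rules out all efficient explanation procedures simultaneously.

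First I would describe the scheme. The public key is a trapdoor claw-free function key $k$ for an injective-invariant, adaptive-hardcore TCF family from LWE, together with oracle access to a random oracle $H$. Encryption of a bit $b$ proceeds quantumly as in Theorem~\ref{thm: 1} to produce a classical triple $(y,d,z)$ satisfying $z = b \oplus d\cdot(x_0 \oplus x_1)$, where $(x_0,x_1)$ is the claw at image $y$; the ciphertext additionally contains a random-oracle tag of the form $H(y,d,z)$ (or a related binding quantity), so that every efficient explanation must route through $H$. Decryption uses the trapdoor to recover $x_0, x_1$ from $y$ and outputs $z \oplus d\cdot(x_0\oplus x_1)$; perfect decryption is immediate from injectivity of the TCF branches, and CPA security reduces to the adaptive hardcore bit property exactly as in Theorem~\ref{thm: 1}.

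Since the scheme has perfect decryption, the soundness condition of Definition~\ref{def: exp} is automatic, so unexplainability reduces to showing that, for every efficient $\Verify$ and every efficient (quantum, QROM-accessing) sender $\mathsf{Enc}^*$, the probability that $\mathsf{Enc}^*(\pk,m)$ outputs $(c,w)$ with $\Verify(\pk,c,m,w)=1$ is negligible. I would argue this by contradiction. Supposing such a pair $(\mathsf{Enc}^*,\Verify)$ exists, I would simulate $H$ via Zhandry's compressed-oracle technique and run $\mathsf{Enc}^*$ followed by $\Verify$ coherently against this simulator. An on-the-fly extractor in the style of Don--Fehr--Majenz--Schaffner's measure-and-reprogram, or Unruh's rewinding, is then applied to pull from the joint query record a classical certificate that binds $(y,d,z)$ to the claimed bit $m$ and effectively reveals the hardcore bit $d\cdot(x_0\oplus x_1)$. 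Running the extraction against several independently sampled $d$'s (and invoking the adaptive hardcore bit property to argue that the extracted bits are mutually consistent with a fixed claw) recovers $x_0\oplus x_1$; combined with any single TCF preimage of $y$, this yields a full claw, contradicting claw-freeness and hence LWE.

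The main obstacle is making the extraction argument uniform in the choice of $\Verify$. The ``natural'' verifier that interprets $w$ as classical encryption randomness is easy to defeat, since the coherent encryption of Theorem~\ref{thm: 1} leaves no reproducible randomness trail; but an adversarial $\Verify$ could be a bizarre predicate, potentially even accepting a quantum witness $w$. The QROM is precisely what lets us reduce any such $\Verify$ to a polynomial-sized set of canonical oracle queries: the compressed-oracle simulation yields a joint description of the adversary's state together with a small database of query--response pairs from which relevant classical information can be extracted. A secondary subtlety is handling quantum witnesses non-destructively, for which one invokes gentle measurement or the measure-and-reprogram framework to guarantee that the extracted classical data is jointly consistent with $\Verify$-acceptance --- exactly the consistency needed to feed into the claw-freeness reduction.
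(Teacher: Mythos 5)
Your construction does not work, and this is the central gap. You propose to take the Theorem~\ref{thm: 1} scheme unchanged and append a tag $H(y,d,z)$ to the ciphertext, ``so that every efficient explanation must route through $H$.'' But nothing forces an adversary's oracle queries to reveal anything useful: a strategy that produces a valid equation $(z,d,y)$ by whatever means (and the paper notes that, absent the oracle, there is in fact a \emph{classical} strategy that produces valid equations, which is precisely why Theorem~\ref{thm: 1}'s scheme alone cannot be perfectly unexplainable without implying a BPP/BQP separation) can then query $H$ once on the classical string $(y,d,z)$ and append the result. The compressed-oracle database would then contain $(y,d,z)$ --- not a preimage $x_0$ or $x_1$ --- so there is no claw to extract and no contradiction with claw-freeness. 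The scheme is in fact explainable: take $\mathsf{Explain}$ to run the classical equation-producing strategy and output its coin tosses, and $\mathsf{Verify}$ to re-run it. What the paper does instead is XOR $H(x_0)\oplus H(x_1)$ \emph{into the pad}, so that a valid equation is $z = d\cdot(x_0\oplus x_1)\oplus H(x_0)\oplus H(x_1)$. This is the load-bearing design choice: it makes the random oracle's values at the two preimages part of the decryption relation, so that any strategy producing valid equations with probability noticeably above $1/2$ must have a compressed-oracle database supported (with the right phase structure) on one of the two preimages. This rigidity is what the extraction then exploits.

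Your extraction strategy also diverges from the paper in ways that would not survive even with the corrected scheme. You invoke measure-and-reprogram / Unruh rewinding, then propose to run the extraction against many independently sampled $d$'s and stitch the results together via the \emph{adaptive} hardcore bit property in Goldreich--Levin fashion to recover $x_0\oplus x_1$. The paper explicitly points out that its construction does \emph{not} require the adaptive hardcore bit property, and its extractor is explicitly \emph{online}: no rewinding. The extractor runs $\mathsf{Explain}^{\comp}$ once, measures the database hoping to find one preimage of $y$, then runs $\mathsf{Verify}_k^{\comp}$ on the leftover state and, conditioned on acceptance, measures the database again hoping to find the \emph{other} preimage. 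The correctness of this single-pass procedure rests on two ingredients you do not have: (i) a rigidity lemma showing that any state leading to acceptance must (after decompression) be close to a uniform-weight superposition over the two preimage-containing branches --- because $\mathsf{Verify}$ never accepts inconsistent tuples, this rigidity applies to $\mathsf{Verify}$'s post-state itself; and (ii) an iterated argument over the $L$ parallel repetitions (sampling one of $L/2$ carefully chosen projective measurements uniformly at random) to guarantee that the first database measurement succeeds with noticeable probability while preserving enough $\mathsf{Verify}$-acceptance weight for the second. Rewinding-based extractors fail here exactly because the images $y$ are fresh every run, so two independent executions would almost never land on the same claw --- the paper discusses this obstacle explicitly, and the online extractor is designed to sidestep it.
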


The encryption scheme that makes the theorem true is simple, and is inspired by the use of trapdoor claw-free functions in \cite{brakerski2018cryptographic, mahadev2018classical}, and in the follow-up work~\cite{brakerski2020simpler}, which makes use of a random oracle.

The public key for the encryption scheme is a choice of trapdoor claw-free function pair $(f_{k,0}, f_{k,1})$. The encryption of a single bit $m$ is a triple $(z,d,y)$ such that $z = m \oplus d \cdot (x_0\oplus x_1) \oplus H(x_0) \oplus H(x_1) $, where $x_0, x_1$ are the pre-images of $y$. As mentioned earlier, while the encryption algorithm is quantum, the ciphertext is \emph{classical}. In  other words, the bit $d \cdot (x_0 \oplus x_1) \oplus H(x_0) \oplus H(x_1)$ is used as a \emph{one-time pad}. CPA security then follows straightforwardly from the fact that $d \cdot (x_0 \oplus x_1) \oplus H(x_0) \oplus H(x_1)$ is a hardcore bit, i.e.\ that it is computationally hard to guess given a uniformly random $d$ and $y$. The need for a random oracle stems from the need to obtain a more rigid characterization of the structure of algorithms that produce valid encryptions (we discuss at the end why obtaining a scheme from just LWE may be difficult).

We show that our scheme is unexplainable in the strongest sense: it is simply impossible, except with negligible probability, for an efficient quantum algorithm to produce both a valid encryption \emph{and} a proof of its validity. We emphasize that the latter guarantee holds for \emph{any} efficient quantum algorithm that attempts to produce both a valid (and hence classical) encryption and a (possibly quantum) proof, not just for algorithms that run \emph{honest} encryption. Using the terminology introduced earlier, we show that for any verification procedure $\mathsf{Verify}$ such that $\mathsf{Verify}(\mathsf{pk},c,m,w) = 0$ if the triple $(\mathsf{pk},c,m)$ is inconsistent, the \emph{completeness} condition fails. We call this \emph{perfect unexplainability}. Notice that perfect unexplainability is impossible to achieve classically, since one can always take $\mathsf{Verify}$ to be the procedure that interprets $w$ as the randomness in the encryption, and runs encryption forward to check consistency. 

Our work provides a new perspective on the notion of proofs in a quantum world by formalizing the notion of ``inability to prove’’: this is captured by a computational problem that can be solved efficiently, but for which it is impossible to concurrently provide an efficiently checkable proof of correctness.
Classically, the transcript of a computation serves as an efficiently checkable proof that the output is correctly reported. At the heart of the new property is the fact that quantum computations, in general, lack the notion of a \emph{transcript} or a ``trajectory'' of the computation. In this work, we identify a computational problem for which it is \emph{provably intractable} to find a solution while concurrently extracting any meaningful proof that the solution is correct. Using the terminology introduced earlier, a proof is a witness to an appropriate NP (or QMA)-relation $\mathsf{Verify}$.

\vspace{2mm}
The analysis in our security proof uses Zhandry's compressed oracle technique \cite{Zhandry-how}, and relies on two novel technical contributions.
\begin{itemize}
    \item[(i)] The first key step in the security proof is establishing that a strategy which produces valid encryptions must be close to a strategy that queries the oracle at a uniform superposition of the two pre-images (in a sense made more precise in the main text). In a bit more detail, let $H: \{0,1\}^n \rightarrow \{0,1\}$ be a uniformly random function, and let $x_0, x_1 \in \{0,1\}^n$. We prove a technical lemma that characterizes the structure of strategies that are successful at guessing $H(x_0) \oplus H(x_1)$. We use this lemma to derive a corresponding ``rigidity'' theorem for strategies that produce valid encryptions (equivalently, valid ``equations'' in the terminology of \cite{brakerski2018cryptographic}). 
     This rigidity theorem may find applications elsewhere, and may be of independent interest.
    \item[(ii)] The second technical contribution is an \emph{online} extraction argument which allows to extract a claw from any prover that produces a valid equation \emph{and} a proof of its validity, with non-negligible probability. The extraction is \emph{online} in the sense that no rewinding is required.
\end{itemize}

We point out that the scheme that makes Theorem \ref{thm: 2 inf} true can be instantiated with any 2-to-1 (noisy) trapdoor claw-free function pair (we do not require an \emph{adaptive} hardcore bit property, as in~\cite{brakerski2018cryptographic,mahadev2018classical}).

\begin{remark}
Although a perfectly unexplainable encryption scheme based solely on, say hardness of LWE without the use of a random oracles is desirable, we point out that it is unlikely that such a result can be achieved without any additional assumption. The reason is that it would imply a \emph{single-round} message-response proof of quantumness protocol \cite{brakerski2018cryptographic}. The following is the protocol:
\begin{itemize}
    \item The verifier samples $(\pk, \mathsf{sk})$ as in the encryption scheme, together with a message $m$ (say uniformly at random). The verifier sends $\pk$ and $m$ to the prover.
    \item The prover returns an encryption $c$ of $m$ under public key $\pk$.
    \item The verifier checks that $c$ decrypts to $m$.
\end{itemize}
Since perfect unexplainability is impossible to achieve classically, it must be that the encryption algorithm is quantum, and in particular that it cannot be replaced by a classical algorithm (otherwise there would be a way to ``explain'' by providing the input randomness). Now, as originally pointed out in \cite{brakerski2020simpler}, a single-round proof of quantumness immediately implies a separation of the sampling classes $\mathsf{BPP}$ and $\mathsf{BQP}$. Such a separation does not seem to be implied by the hardness of LWE, as the current state-of-the-art suggests that LWE is equally intractable for classical and quantum computers.
\end{remark}

\paragraph{Protection against coercion ``before-the-fact''} We find the new phenomenon to be quite striking, beyond its implications for deniable encryption, and we believe that it has the potential to find applications in other settings. As a first example, our encryption scheme provides protection against coercion \emph{before-the-fact}, in the sense that the attacker approaches the sender \emph{before} she sends her ciphertext - even before the public key is revealed. This type of coercion is a major concern for electronic elections with online encrypted votes. An encryption scheme providing this type of protection is impossible in the classical world --- a coercer who approaches a sender prior to sending an encrypted message can dictate which randomness shall be used by the sender when computing the encryption (and which plaintext shall be encrypted). In this way, the coercer can, at a later stage, check that the ciphertext submitted by the sender corresponds to an encryption of the desired plaintext with the prescribed randomness. This holds true even if the public key (or some other public information to be used in the encryption) is revealed \emph{after} coercion. Our unexplainable encryption scheme prevents coercion before-the-fact since the randomness cannot be controlled, even by the sender themselves: it is instead the result of a carefully chosen measurement. Thus, in a scenario where the public key (or some other public information to be used in the encryption) is revealed \emph{after} the coercion stage, a before-the-fact coercer would not be able to succeed. In particular, there is simply no way for an attacker to prescribe to the sender \emph{how} to encrypt in a way that it can later verify, because this would immediately violate perfect unexplainability: it would imply that there exists \emph{some} quantum algorithm that outputs ciphertexts \emph{and} proofs of their validity. We discuss coercion before-the-fact in more detail in Section \ref{sec: before-the-fact}. As a further practical motivation, the fact that it is not possible for a sender, even if they wanted to, to prove that their ciphertext is a valid encryption of their plaintext provides a way to protect against ``vote-selling''. 

\paragraph{Stronger protection against coercion with quantum ciphertexts} While this work focuses on quantum encryption algorithms with \emph{classical} ciphertexts, we briefly discuss (in Section~\ref{sec:quantum-ciphertexts}) how one can obtain an even stronger notion of protection against coercion before-the-fact by considering encryption schemes where part of the ciphertext is a quantum state. We have discussed above how our encryption scheme from Theorem~\ref{thm: 2 inf} provides protection against coercion before-the-fact when the public key has not yet been revealed. Now, what if an attacker knows the public key? Is there any way to protect against coercion before-the-fact? If the ciphertext is classical, then there is no hope: the attacker can just honestly compute a ciphertext, and prescribe to the sender that she should submit this ciphertext. Since the ciphertext is classical (and known to the attacker), the attacker can always verify that the sender submitted what was prescribed. However, if part of the ciphertext is allowed to be a \emph{quantum} state, then it becomes in principle possible that the attacker might not be able to verify!

In fact, a variation of the scheme of Theorem~\ref{thm: 2 inf} achieves the stronger notion of protection against coercion before-the-fact. The main idea is the following: recall that, in the scheme of Theorem~\ref{thm: 2 inf}, the encryption of a single bit $m$ is a triple $(z,d,y)$, where $z = m \oplus d \cdot (x_0 \oplus x_1) \oplus H(x_0) \oplus H(x_1)$. In the variation of the scheme, the encryption is instead a pair $(\ket{\psi}, y)$, where $$\ket{\psi} =  \frac{1}{\sqrt{2}} \ket{0} \ket{x_0} + (-1)^m \frac{1}{\sqrt{2}}\ket{1} \ket{x_1}\,,$$ 
(and the random oracle $H$ is no longer necessary).
Note that this ciphertext can be generated efficiently by applying a $Z^m$ gate to the first qubit of the uniform superposition of $x_0$ and $x_1$. Moreover, a ciphertext of this form has the following important property:
\begin{itemize}
\item[(i)] A sender that receives such a ciphertext from an attacker can ``obliviously'' flip the encoded plaintext by applying a $Z$ gate to the first qubit.
\item[(ii)] The attacker cannot tell the difference between the original ciphertext and the ``flipped'' one. It is not difficult to see that an efficient algorithm to distinguish between the two ciphertexts would imply an efficient algorithm to extract a claw (for this reduction to work, it is crucial that part of the ciphertext received from the attacker, namely the string $y$, is \emph{classical} - in the practical scenario of an election, this can be enforced, for example, by having the government require that $y$ be sent over a classical communication channel).
\end{itemize}



\subsection{Related Work and Concepts}
In the classical setting, \emph{receiver-deniable} encryption has also been studied. In the latter, an attacker coerces the \emph{receiver} after-the-fact into revealing their secret key. While non-interactive receiver-deniable encryption is impossible classically, there exists a generic transformation that compiles any sender-deniable encryption scheme into a receiver-deniable one, at the cost of one additional message \cite{canetti1997deniable}. While we do not formalize this explicitly, such a transformation also applies to the quantum setting.

In an interactive setting, one can also consider the notion of \emph{bideniable} encryption, i.e.\ an encryption scheme that is simultaneously sender and receiver-deniable. Classically, this setting has been considered by Canetti, Park and Poburinnaya \cite{canetti2020fully}, who show that bideniable encryption can be realized from iO. We leave it as an open question to improve this result (i.e. realize bideniable encryption from weaker assumptions) in the quantum setting using \emph{classical} communication. 

Although we do not formalize this, we remark that, if one allows for \emph{quantum} communication, deniable encryption (including bideniable) becomes immediate in the interactive setting (with information-theoretic security!). The reason is that sender and receiver can share an information-theoretically secure key by running a quantum key distribution protocol, and then use this key as a one-time pad. If the attacker approaches the parties \emph{before} the key distribution protocol is completed, then the parties can trivially deny since no information about the message has been revealed yet. If the attacker approaches the parties \emph{after} the key distribution protocol is completed, then the key is information-theoretically hidden from the attacker, and the parties can trivially find a key that is consistent with any message of their choice.

\subsection*{Acknowledgements}
This work was carried out while A.C. was a Quantum Postdoctoral Fellow at the Simons Institute for the Theory of Computing supported by NSF QLCI Grant No. 2016245. S.G. is supported by DARPA under agreement No. HR00112020023. U.V is supported by Vannevar Bush faculty fellowship N00014-17-1-3025, MURI Grant FA9550-18-1-0161, and NSF QLCI Grant No. 2016245. The authors thank Ashwin Nayak and Rahul Jain for pointing out a mistake in a previous version of the paper related to the alternative notion of ``$\ell$-deniability'' (distinct from the notion of unexplainability, which is the main notion that we propose and study in our paper). The mistake invalidates a construction of $1$-deniability presented in our earlier version.



\section{Technical Overview}
\label{sec: tech overview}
In this section, we provide an informal overview of the notion of \emph{unexplainability}, our construction, and its proof of security. 
\subsection{Notation}
In this overview, we let $\{(f_{k,0},f_{k,1})\}_k$ be a family of trapdoor claw-free function pairs. We assume that $f_{k,0},f_{k,1}$ are injective with identical range, and that they map $n$-bit strings to $m$-bit strings. In our actual scheme, we will instead use \emph{noisy} trapdoor claw-free functions, since these can be constructed from LWE. However, this distinction is immaterial for the purpose of this overview. 
\subsection{An unexplainable encryption scheme}
The construction is simple, and is inspired by the ``proof of quantumness'' construction in \cite{brakerski2020simpler}. To obtain an encryption scheme, the idea is to use the ``hardcore'' bit in their construction as a one-time pad. 

In more detail, the public key in the encryption scheme is a choice of trapdoor claw-free function $k$, and the secret key is a trapdoor $t_k$.
\begin{itemize}
    \item \emph{Encryption}: To encrypt a bit $m$, under public key $k$, Compute a triple $(z,d,y)$, where $d \in \{0,1\}^n$ and $y \in \mathcal{Y}$, and $z = d \cdot (x_0^y \oplus x_1^y) \oplus H(x_0^y) \oplus H(x_1^y)$. This can be done as follows:
    \begin{itemize}
        \item Create the uniform superposition over $n+1$ qubits 
        $\sum_{b\in\{0,1\} , x \in \{0,1\}^n} \ket{b}\ket{x}$. Then, compute $f_{k,0}$ and $f_{k,1}$ in superposition, controlled on the first qubit. The resulting state is
        $$ \sum_{b\in\{0,1\} , x \in X} \ket{b}\ket{x}\ket{f_{k,b}(x)} \,.$$
        \item Measure the image register, and let $y$ be the outcome. As a result, the state has collapsed to:
        $$ \frac{1}{\sqrt{2}} (\ket{0} \ket{x_0^y} + \ket{1} \ket{x_1^y}) \,.$$
        \item Query the phase oracle for $H$, to obtain:
        $$\frac{1}{\sqrt{2}} ((-1)^{H(x_0)}\ket{0} \ket{x_0} + (-1)^{H(x_1)}\ket{1} \ket{x_1}) \,. $$
        \item Apply the Hadamard gate to all $n+1$ registers, and measure. Parse the measurement outcome as $z || d$ where $z \in \{0,1\}$ and $d \in \{0,1\}^n$. 
    \end{itemize}
    The ciphertext is $c = (m \oplus z, d, y)$.
    \item \emph{Decryption}: On input $c = (\tilde{z},d,y)$, use the trapdoor $t_k$ to compute the pre-images $x_0^y, x_1^y$. Output $\tilde{z} \oplus d \cdot (x_0^y \oplus x_1^y) \oplus H(x_0^y) \oplus H(x_1^y)$.
\end{itemize}

The actual scheme will be a parallel repetition of this, i.e. the plaintext $m$ is encrypted many times using the single-shot scheme described here. However, for the purpose of this overview, we will just consider the single-shot scheme.


\subsection{Security}
It is straightforward to see that the scheme satisfies CPA security. This essentially follows from a regular ``hardcore bit'' property, satisfied by the trapdoor claw-free function family (more details in Section \ref{sec: security scheme 1}) In this overview, we focus on outlining how the scheme satisfies (a strong version of) unexplainability.

Our main result is that this scheme has the property that, although it is possible to encrypt, it is not possible to simultaneously produce both a valid encryption of a desired plaintext $m$ \emph{and} a ``proof'' or a ``certificate'' $w$ that the ciphertext indeed is an encryption of $m$. We refer to this as \emph{perfect unexplainability}.

More precisely, we show that, for any efficient algorithm $\mathsf{Verify}$ taking as input a tuple $(\mathsf{pk},c,m, w)$, such that $\mathsf{Verify}(\mathsf{pk},c,m,w) = 0$ if the triple $(\mathsf{pk},c,m)$ is inconsistent, the following holds: for any efficient algorithm $P$, for any $m$,
\begin{equation}\Pr[\mathsf{Verify}(\mathsf{pk}, c, m, w) = 1: (c,w) \gets P(\mathsf{pk}, m)] = \negl(n) \,. \label{eq: 1}
\end{equation}

Notice that perfect unexplainability is impossible to achieve with a classical encryption scheme, because one can always have $w$ play the role of the randomness in the encryption, and have $\mathsf{Verify}$ be the algorithm that simply outputs $1$ if $\mathsf{Enc}(\mathsf{pk}, m ; w) = c$, and $0$ otherwise.

A consequence of this observation is that perfect unexplainability is an even stronger property than a ``proof of quantumness'': if an encryption scheme is unexplainable, then it must be that producing valid encryptions (with high enough probability) is something that only a quantum computer can do (otherwise, there would exist a classical algorithm $\mathsf{Enc}$ that encrypts successfully, and, just like before, we can set $\mathsf{Verify}$ to be the procedure that interprets $w$ as randomness, runs $\mathsf{Enc}$ with this randomness, and checks consistency with the ciphertext).

For simplicity, in the rest of this discussion, we will take the plaintext to be $m=0$, so that producing a valid ciphertext is equivalent to producing a valid equation, i.e. a triple $(z,d,y)$, such that $z = d \cdot (x_0^y \oplus x_1^y) \oplus H(x_0^y) \oplus H(x_1^y)$.

So, what is the intuition for why the property above holds for our scheme?
At a high level, the intuition is that in order to produce a valid equation, one \emph{has to} query the oracle on a superposition of the two pre-images, and have the two branches interfere in just the right way, so as to produce $z,d$ which satisfy the equation. This delicate process of interference between the two branches essentially requires that nothing is ``left behind'' in the workspace. 

The difficulty with formalizing this intuition is that it assumes a global view of the entire computation. Whereas unlike in the classical setting, we cannot in general define the trajectory which in this case corresponds to the sequence of oracle queries that led to a particular configuration. To do so, we make use of Zhandry's compressed oracle technique, which leverages properties of uniformly random oracles. It turns out that the cryptographic assumptions interact very cleanly with the compressed oracle formalism. 
This provides a way of carrying key quantities from our classical intuition over into natural formalizations in the quantum context. 

This reduction involves two broad steps:
\begin{itemize}
    \item[(i)] We establish a \emph{rigidity} theorem which formalizes the intuition that an algorithm $P$ which is successful at producing valid equations must query the oracle on a superposition of both pre-images, in a sense that we will make more precise below. Concisely, we appeal to Zhandry's compressed oracle technique for ``recording queries'' \cite{Zhandry-how}, which formalizes the idea that, when the oracle is uniformly random, there is a meaningful way to record the queries made by the algorithm efficiently, in a way that is well-defined and does not disrupt the run of the algorithm. In a compressed oracle simulation, the quantum state of the algorithm at any point is in a superposition over \emph{databases} of queried inputs.
    What we establish is that, if $P$ produces a valid equation with high probability, then the quantum state of $P$ right before measurement of the equation must be close to a uniform superposition (with the appropriate phase) of two branches: one on which the first pre-image was queried, and one in which the second was queried. 
    \item[(ii)] We leverage (i) to construct an algorithm that extracts a claw. One crucial observation here is that, since Verify never accepts an inconsistent tuple $(\mathsf{pk},c,m)$, then it must be the case that whenever Verify accepts, it \emph{must} itself have queried at a superposition of both pre-images, in the same sense as in the rigidity theorem in point (i). At a high level, the extraction algorithm will eventually be the following:
    \begin{itemize}
        \item Run $P$, followed by a measurement to obtain $z,d,y$, and an appropriate measurement of the database register, hoping to find a pre-image of $y$;
        \item Then, run $\mathsf{Verify}$ on the leftover state, and conditioned on ``accept'', measure the database register, hoping to find the \emph{other} pre-image of $y$.
    \end{itemize}
    It is not a priori clear that this leads to successfully recovering a claw (in particular the measurement of the database right after running $P$, may disrupt things in a way that Verify no longer accepts). What we show is that if, to begin with, $\mathsf{Verify}$ accepts with high enough probability, then this extraction strategy works.
\end{itemize}

In the rest of the section, we discuss the elements of the security proof in a bit more detail. 

\subsubsection{Zhandry's compressed oracle technique}
\label{sec: comp oracles tech overview}

In this subsection, we give an exposition of Zhandry's compressed oracle technique, both because it is a building block in our proof of security, and also 
to encourage its broader use. A reader who is familiar with the technique should feel free to skip this subsection. 

Let $H: \{0,1\}^n \rightarrow \{0,1\}$ be a fixed function. For simplicity, in this overview we restrict ourselves to considering boolean functions (since this is also the relevant case for our scheme).

While classically it is always possible to record the queries of the algorithm, in a way that is undetectable to the algorithm itself, this is not possible in general in the quantum case. The issue arises because the quantum algorithm can \emph{query in superposition}. We illustrate this with an example.

Consider an algorithm that prepares the state $\frac{1}{\sqrt{2}}(\ket{x_0} + \ket{x_1})\ket{y}$, and then makes an oracle query to $H$. The state after the query is:
\begin{equation}
    \frac{1}{\sqrt{2}}\ket{x_0}\ket{y\oplus H(x_0)} + \frac{1}{\sqrt{2}}\ket{x_1}\ket{y\oplus H(x_1)}
\end{equation}

Suppose we additionally ``record'' the query made, i.e. we copy the queried input into a third register. Then the state becomes:
\begin{equation}
    \frac{1}{\sqrt{2}}\ket{x_0}\ket{y\oplus H(x_0)}\ket{x_0} + \frac{1}{\sqrt{2}}\ket{x_1}\ket{y\oplus H(x_1)}\ket{x_1}
\end{equation}

Now, suppose that $H(x_0) = H(x_1)$, then it is easy to see that, in the case where we didn't record queries, the state of the first register after the query is exactly $\frac{1}{\sqrt{2}}(\ket{x_0} + \ket{x_1})$. On the other hand, if we recorded the query, then the third register is now entangled with the first, and as a result the state of the first register is no longer $\frac{1}{\sqrt{2}}(\ket{x_0} + \ket{x_1})$ (it is instead a mixed state). Thus, recording queries is not possible in general without disturbing the state of the oracle algorithm.

Does this mean that all hope of recording queries is lost in the quantum setting? It turns out, perhaps surprisingly, that there is a way to record queries when $H$ is a \emph{uniformly random} oracle.

When thinking of an algorithm that queries a uniformly random oracle, it is useful to purify the quantum state of the algorithm via an oracle register (which keeps track of the function that is being queried). An oracle query is then a unitary that acts in the following way on a standard basis element of the query register (where we omit writing normalizing constants):
$$ \ket{x}\ket{y} \sum_{H} \ket{H} \mapsto \sum_{H} \ket{x}\ket{y\oplus H(x)} \ket{H}   \,.$$
It is well-known that, up to applying a Hadamard gate on the $y$ register before and after a query, this oracle is equivalent to a ``phase oracle'', which acts in the following way:
\begin{equation}
    \ket{x}\ket{y} \sum_{H} \ket{H} \mapsto \sum_{H} (-1)^{y\cdot H(x)} \ket{x}\ket{y} \ket{H} \label{eq: 45}
\end{equation} 

Now, to get a better sense of what is happening with each query, let's be more concrete about how we represent $H$ using the qubits in the oracle register.

A natural way to represent $H$ is to use $2^n$ qubits, with each qubit representing the output of the oracle at one input, where we take the inputs to be ordered lexicographically. In other words, if $\ket{H} = \ket{t}$, where $t \in \{0,1\}^{2^n}$, then this means that $H(x_i) = t_i$, where $x_i$ is the $i$-th $n$-bit string in lexicographic order. Using this representation, notice that 
$$ \frac{1}{\sqrt{2^n}} \sum_{H} \ket{H} = \ket{+}^{\otimes 2^n} \,.$$
Now, notice that we can write the RHS of \eqref{eq: 45} as 
$$ \ket{x}\ket{y} \sum_H (-1)^{y \cdot H(x)} \ket{H} \,,$$
i.e. we can equivalently think of the phase in a phase oracle query as being applied to the oracle register.

Thus, when a phase oracle query is made on a standard basis vector of the query register $\ket{x}\ket{y}$, all that happens is:
$$ \sum_{H} \ket{H} \mapsto \sum_{H} (-1)^{y\cdot H(x)} \ket{H} \,.$$

Notice that, using the representation for $H$ that we chose above, the latter transformation is:
\begin{itemize}
    \item When $y = 0$,  $$\ket{+}^{\otimes 2^n} \mapsto \ket{+}^{\otimes 2^n} \,. $$
    \item When $y = 1$,  $$\ket{+}^{\otimes 2^n} \mapsto \ket{+}\cdots \ket{+}_{i-1}\ket{-}_i\ket{+}_{i+1}\cdots \ket{+} \,,$$
    where $i$ is such that $x$ is the $i$-th string in lexicographic order.
\end{itemize}
In words, the query does not have any effect when $y=0$, and the query flips the appropriate $\ket{+}$ to a $\ket{-}$ when $y=1$. Then, when we query on a general state $\sum_{x,y} \alpha_{xy} \ket{x}\ket{y}$, the state after the query can be written as:
$$\sum_{x,y} \alpha_{xy} \ket{x}\ket{y} \ket{D_{xy}} \,,$$
where $D_{xy}$ is the all $\ket{+}$ state, except for a $\ket{-}$ corresponding to $x$ if $y=1$.

The crucial observation now is that all of these branches are \emph{orthogonal}, and thus it makes sense to talk about "the branch on which a particular query was made": the state of the oracle register reveals exactly the query that has been made on that branch. More generally, after $q$ queries, the state will be in a superposition of branches on which at most $q$ of the $\ket{+}$'s have been flipped to $\ket{-}$'s. These locations correspond exactly to the queries that have been made.

Moreover, the good news is that there is a way to keep track of the recorded queries \emph{efficiently}: one does not need to store all of the (exponentially many) $\ket{+}$'s, but it suffices to keep track only of the locations that have flipped to $\ket{-}$ (which is at most $q$). If we know that the oracle algorithm makes at most $q$ queries, then we need merely $n \cdot q$ qubits to store the points that have been queried. We will refer to the set of queried points as the \emph{database}. Formally, there is a well-defined isometry that maps a state on $2^n$ qubits where $q$ of them are in the $\ket{-}$ state, and the rest are $\ket{+}$, to a state on $n \cdot q$ qubits, which stores the $q$ points corresponding to the $\ket{-}$'s in lexicographic order.

Let $D$ denote an empty database of queried points. Then a query to a uniformly random oracle can be thought of as acting in the following way:
$$ \begin{cases} 
\ket{x}\ket{y} \ket{D} \mapsto \ket{x}\ket{y} \ket{D} \,, \textnormal{ if } y =0\\ 
\ket{x}\ket{y} \ket{D} \mapsto \ket{x}\ket{y} \ket{D \cup \{x\}} \,, \textnormal{ if } y=1 \,.
\end{cases}$$

Such a way of implementing a uniformly random oracle is referred to as a \emph{compressed phase oracle} simulation \cite{Zhandry-how}.
Formally, the fact that the original and the compressed oracle simulations are \emph{identical} from the point of view of the oracle algorithm (which does not have access to the oracle register) is because at any point in the execution of the algorithm, the states in the two simulations are both purifications of the same mixed state on the algorithm's registers.

We point out that there are two properties of a uniformly random oracle that make a compressed oracle simulation possible:
\begin{itemize}
    \item The query outputs at each point are independently distributed, which means that the state of the oracle register is always a product state across all of the $2^n$ qubits.
    \item Each query output is uniformly distributed. This is important because in general $\alpha \ket{0} + \beta\ket{1} \not \perp \alpha \ket{0} - \beta\ket{1}$ unless $|\alpha| = |\beta|$.
\end{itemize}

Notice that the above compressed oracle simulation does not explicitly keep track of the value of the function at the queried points (i.e. a database is just a set of queried points). In the following slight variation on the compressed oracle simulation, also from \cite{Zhandry-how}, a database is instead a set of pairs $(x,w)$ representing a queried point and the value of the function at that point. This variation will be more useful for our analysis.

Here $D$ is a database of pairs $(x,v)$, which is initially empty. A query acts as follows on a standard basis element $\ket{x}\ket{y} \ket{D}$:
\begin{itemize}
    \item If $y = 0$, do nothing.
    \item If $y = 1$, check if $D$ contains a pair of the form $(x,v)$ for some $v$. 
\begin{itemize}
    \item If it does not, add $(x, \ket{-})$ to the database, where by this we formally mean:
    $D \mapsto \sum_{v} (-1)^v \ket{D \cup (x,v)}$
    \item If it does, apply the unitary that removes $(x,\ket{-})$ from the database.
\end{itemize}
\end{itemize}

One way to understand this compressed simulation is that our database representation only keeps track of pairs $(x, \ket{-})$ (corresponding to the queried points), and it does not keep track of the other unqueried points, which in a fully explicit simulation would correspond to $\ket{+}$'s. One can think of the outputs at the unqueried points as being ``compressed'' in this succinct representation.

It is easy to see that the map above can be extended to a well-defined unitary. In the rest of this overview, we will take this to be our compressed phase oracle. For an oracle algorithm $A$, we will denote by $A^{\comp}$ the algorithm $A$ run with a compressed phase oracle.


\subsubsection{The structure of strategies that produce valid equations}
Let $P$ be an efficient prover, which takes as input a choice $k$ of claw-free function pair, and outputs an equation $(z,d,y)$. Suppose $P$ outputs a valid equation, i.e. $ z= d \cdot (x_0^y \oplus x_1^y) \oplus H(x_0^y) \oplus H(x_1^y)$, with high probability. Suppose we run a compressed phase oracle simulation $P^{\comp}$. We argue that the state of $P^{\comp}$ right before measurement, conditioned on output $(z,d,y)$, must be such that:
\begin{itemize}
    \item Almost all the weight is on databases containing \emph{exactly one} of the two pre-images of $y$.
    \item Up to a phase, the weights on databases containing $x_0^y$ and $x_1^y$ are approximately symmetrical.
\end{itemize}
In this subsection, we provide some intuition as to why this is true. 

In general, we can write the state of $P^{\comp}$ right before measurement of the output as:
$$ \sum_{z,d,y,w,D} \alpha_{z,d,y,w,D} \ket{z}\ket{d}\ket{y}\ket{w}\ket{D} \,,$$
for some $\alpha_{z,d,y,w,D}$, where the $z,d,y$ registers correspond to the output equation, the $w$ register is a work register (which includes the query registers), and the $D$ register is the database register. Here $D$ is a database of pairs $(x,v)$.

This expression can be simplified in a couple of ways. First, up to negligible weight, no database can contain \emph{both} pre-images, since otherwise this would yield an efficient algorithm to extract a claw. Hence, we can write the state as a superposition over $z,d,y$ and over databases that either: do not contain any pre-image $y$, they contain only $x_0^y$, or they contain only $x_1^y$:
\begin{align}
    &\sum_{\substack{z,d,y,w \\ D \niton x_0^y, x_1^y}} \alpha_{z,d,y,w,D,0} \ket{z}\ket{d}\ket{y}\ket{w} \sum_{v_0} (-1)^{v_0} \ket{D \cup (x_0^y, v_0)} \nonumber \\
    + &\sum_{\substack{z,d,y,w \\ D \niton x_0^y, x_1^y}} \alpha_{z,d,y,w,D,1} \ket{z}\ket{d}\ket{y}\ket{w} \sum_{v_1} (-1)^{v_1} \ket{D \cup (x_1^y, v_1)}  \nonumber \\
    + &\sum_{\substack{z,d,y,w \\ D \niton x_0^y, x_1^y}} \beta_{z,d,y,w,D} \ket{z}\ket{d}\ket{y}\ket{w}  \ket{D} \label{eq: 9}
\end{align} 
for some $\alpha_{z,d,y,w,D,b}, \beta_{z,d,y,w,D}$. (Recall that the reason for the presence of the phases $(-1)^{v_0}$ and $(-1)^{v_1}$ is that, by definition of the compressed oracle simulation, the output at each queried point in the database is in a $\ket{-}$ state).

Second, we expect intuitively that $P$ should not be able to produce valid equations if it does not query any pre-image at all. Thus, we expect that if $P$ produces a valid equation with high probability, then there should be only a small weight on the third branch in expression \eqref{eq: 9}, i.e. the $\beta_{z,d,y,w,D}$ coefficients should be small. In our security proof, we formalize this intuition, and we show that any weight on the third branch contributes precisely $1/2$ to the probability of producing a valid equation, i.e. any weight on the third branch amounts to guessing an equation uniformly at random. Note that it is not a priori clear that this is true, and a more delicate analysis is required to establish that, when calculating the probability of outputting a valid equation, there is no interference between the branches that do not contain any pre-image, and the ones that contain one. We refer the reader to the full proof in Section \ref{sec: unexp scheme} for more details.

So, suppose now for simplicity that $P$ produces a valid equation with probability $1$, then, from what we have discussed, up to negligible weight, the state just before measurement of the output is in a superposition of branches that contain exactly one of the two pre-images:
\begin{align}
    &\sum_{\substack{z,d,y,w \\ D \niton x_0^y, x_1^y}} \alpha_{z,d,y,w,D,0} \ket{z}\ket{d}\ket{y}\ket{w} \sum_{v_0 \in \{0,1\}} (-1)^{v_0} \ket{D \cup (x_0^y, v_0)} \nonumber \\
    + &\sum_{\substack{z,d,y,w \\ D \niton x_0^y, x_1^y}} \alpha_{z,d,y,w,D,1} \ket{z}\ket{d}\ket{y}\ket{w} \sum_{v_1 \in \{0,1\}} (-1)^{v_1} \ket{D \cup (x_1^y, v_1)}  \label{eq: 88}
\end{align} 

Finally, we wish to argue that the coefficients on the two branches are uniform, i.e. $|\alpha_{z,d,y,w,D,0}| = |\alpha_{z,d,y,w,D,1}|$ for all $z,d,y,w,D$. For this, we again appeal to the fact that $P$ produces a valid equation with probability $1$.

What does this probability correspond to in terms of expression \eqref{eq: 88}? In order to calculate this probability, we need to first \emph{decompress} the database at both pre-images. What we mean by decompressing at $x$ is the following:
\begin{itemize}
    \item If the database already contains $x$, then do nothing.
    \item If the database does not contain $x$, then add $(x,\ket{+})$ to it.
\end{itemize}

The reason why this makes sense is the following. Recall that points that are not present in the compressed database correspond to $\ket{+}$'s in the fully explicit ``uncompressed'' database. Since the condition for a valid equation depends on the value at both pre-images, we need to keep track of these values, and uncompress at the two pre-images in order to talk about the probability of a valid equation.

The state after decompressing at $x_0^y$ and $x_1^y$ is:
\begin{align}
    &\sum_{\substack{z,d,y,w \\ D \niton x_0^y, x_1^y}} \alpha_{z,d,y,w,D,0} \ket{z}\ket{d}\ket{y}\ket{w} \sum_{v_0, v_1} (-1)^{v_0} \ket{D \cup (x_0^y, v_0) \cup (x_1^y, v_1)} \nonumber \\
    + &\sum_{\substack{z,d,y,w \\ D \niton x_0^y, x_1^y}} \alpha_{z,d,y,w,D,1} \ket{z}\ket{d}\ket{y}\ket{w} \sum_{v_0,v_1} (-1)^{v_1} \ket{D \cup (x_0^y, v_0) \cup (x_1^y, v_1)}  \label{eq: 8}
\end{align} 

Now, by the equivalence of the regular and compressed oracle simulations, we have that the probability that $P$ outputs a valid equation is equal to:
$$ \Pr[z = d \cdot (x_0^y \oplus x_1^y) \oplus v_0 \oplus v_1]$$

In order for this probability to be close to $1$, it must be that, for any $z,d,y,w, D$, the amplitudes of the two branches $\alpha_{z,d,y,w, D, 0}$ and $\alpha_{z,d,y,w,D,1}$ interfere precisely constructively when $z = d \cdot (x_0^y \oplus x_1^y) \oplus v_0 \oplus v_1$ and interfere destructively otherwise. Hence, they have to be equal up to an appropriate phase.

\subsubsection{Extracting a claw}

We now provide some intuition about how the structure that we derived in the previous subsection can be leveraged to extract a claw.

Throughout this subsection, let $P$ be such that, on input a choice of trapdoor claw-free function pair $k$, it simultaneously produces valid equations and proofs which are accepted with high probability, i.e. it outputs $(z,d,y,w)$ such that $\mathsf{Verify}(k,z,d,y,w)= 1$ with high probability. For concreteness we take this probability to be $9/10$ in this overview.

The first key observation is that, because by definition $\mathsf{Verify}$ never accepts an invalid equation (or except with negligible probability), then it must be the case that whenever $\mathsf{Verify}$ accepts, it \emph{must} itself have queried at a superposition of both pre-images in the following more precise sense. 

For a choice of claw-free functions $k$, denote by $\mathsf{Verify}_k$ the algorithm $\mathsf{Verify}$ where we fix the choice of claw-free functions to be $k$. Let $A$ be \emph{any} efficient oracle algorithm that, on input a choice of claw-free functions, outputs tuples $(z,d,y,w)$. Suppose we run a compressed oracle simulation of $A$, followed by $\mathsf{Verify}$, i.e. we run $(\mathsf{Verify}_k \circ A(k))^{\comp}$. Let the state right before measurement of $\mathsf{Verify}$'s output be:
$$ \alpha \ket{0} \ket{\phi_0} + \beta \ket{1} \ket{\phi_1} \,, $$
where the first qubit is the output qubit of $\mathsf{Verify}$, and $\ket{\phi_0}$, $\ket{\phi_1}$ are some states on the remaining registers (including the database register). Then, except with negligible probability over the choice of $k$, the following holds: if $\beta$ is non-negligible, then $\ket{\phi_1}$ has weight only on databases containing either $x_0^y$ or $x_1^y$, and moreover, the weights for each pre-image are approximately equal. Such structure on the action of $\mathsf{Verify}$ follows from a similar argument as in the previous subsection, combined with the fact that $\mathsf{Verify}$ never accepts an invalid equation.

Even with this observation in hand, it is not a priori clear how we can extract a claw: ideally, we would like to say that, because of the observation above, if we were to run $(\mathsf{Verify}_k \circ P(k))^{\comp}$ \emph{twice} and measure the database register, there would be a noticeable chance of obtaining distinct pre-images of some $y$. However, the issue is that if we run the computation twice, nothing guarantees that we will obtain the same $y$ both times. In fact, if one thinks about the honest strategy for producing valid equations, each $y$ has only an exponentially small probability of being the outcome. 

To overcome this issue, the key observation is that, if $P$ is successful with high enough probability, then there is a way to extract a claw with noticeable probability in a single run! The extraction algorithm is the following:
\begin{itemize}
    \item[(i)] Run $P^{\comp}(k)$, and measure the output registers to obtain $z,d,y$. Moreover, check if the database register at this point contains a pre-image of $y$. If so, measure it. Denote this by $x_b^y$.
    \item[(ii)] Run $\mathsf{Verify}_k^{\comp}$ on the leftover state from the previous step, and measure the output register. Conditioned on ``accept'', measure the database register. If the database contains $x_{\bar{b}}^y$, output the claw $(x_0^y, x_1^y)$.
\end{itemize}

The idea behind this algorithm is that, thanks to the first observation, the state conditioned on obtaining ``accept'' in step (ii) has weight only on databases containing either $x_0^y$ or $x_1^y$, and moreover, the weights for each pre-image are approximately equal. This implies that, conditioned on observing ``accept'' in step (ii), the final measurement of the algorithm is guaranteed to produce one of the two pre-images approximately \emph{uniformly at random}. 

Now, notice that the algorithm already has a constant probability of obtaining one of the two pre-images in step (i) (assuming $P$ succeeds with probability $\frac{9}{10}$). To see this, notice that if this weren't the case, then $P$ would be producing valid equations at most with probability close to $\frac12$ (since this is the probability of producing a valid equation when the database register does not contain any of the two pre-images), and therefore the probability that $\mathsf{Verify}$ accepts $P$'s output would also be at most close to $\frac12$ (instead of being $\frac{9}{10}$). Thus, what is left to show is that the probability of obtaining ``accept'' in step (ii) \emph{conditioned on finding a pre-image in step} (i) is still noticeable. 

It is not a priori clear that this is the case. However, what we show is that if, to begin with, $\mathsf{Verify}$ accepts $P$'s output with high enough probability (in fact any probability non-negligibly higher than $1/2$), then this extraction strategy works.

We remark that for our actual construction, we will want that, for any efficient $P$, the probability of simultaneously producing a valid equation \emph{and} a proof is \emph{negligible} (not just smaller than $\frac12 +$ negligible). Thus, our encryption scheme will be a parallel repetition of the single-shot encryption scheme described in this overview, i.e. the encryption of a single bit will consist of many single-shot encryptions of that bit.

\section{Preliminaries}
\subsection{Notation}
We use the acronyms PPT and QPT for \emph{probabilistic polynomial time} and \emph{quantum polynomial time} respectively.

For a classical probabilistic algorithm $\mathcal{A}$, we write $\mathcal{A}(x ; r)$ to denote running $\mathcal{A}$ on input $x$, with input randomness $r$.

For a finite set $S$, we use $x \gets S$ to denote uniform sampling of $x$ from the set $S$. 
We denote $[n] = \{1, 2, \cdots, n\}$. We denote by $\mathsf{Bool}(n)$ the set of functions from $n$ bits to $1$.

\subsection{Deniable Encryption}
We recall the classical notion of sender-deniable encryption. 
\begin{definition}
\label{def: classical deniability}
A public-key encryption scheme $(\mathsf{Gen}, \mathsf{Enc}, \mathsf{Dec})$ is said to be \emph{deniable} if there exists a PPT algorithm $\mathsf{Fake}$ such that, for any messages $m_0, m_1$:
$$(\pk, \mathsf{Enc}(\pk, m_1;r), m_1, r)  \approx_c (\pk, \mathsf{Enc}(\pk, m_0;r), m_1, r') $$
where $r$ is uniformly random, $\pk$ is sampled according to $\mathsf{Gen}$, and $r' \gets \mathsf{Fake}(\pk, c, m_0, m_1, r)$.
\end{definition}

\subsection{Noisy Trapdoor Claw-Free Functions}

In this section we introduce the notion of noisy trapdoor claw-free functions (NTCFs). This section is taken almost verbatim from \cite{brakerski2020simpler}. Let $\X, \Y$ be finite sets and $\K$ a set of keys. For each $k \in K$ there should exist two (efficiently computable) injective functions $f_{k,0}, f_{k,1}$ that map $\X$ to $\Y$, together with a trapdoor $t_k$ that allows efficient inversion from $(b,y)\in \{0,1\} \times \Y$ to $f_{k,b}^{-1}(y)\in\X \cup \set{\perp}$. For security, we require that for a randomly chosen key $k$, no polynomial time adversary can efficiently compute $x_0, x_1 \in \X$ such that $f_{k,0}(x_0) = f_{k,1}(x_1)$ (such a pair $(x_0, x_1)$ is called a \emph{claw}). 

Unfortunately, we do not know how to construct such `clean' trapdoor claw-free functions. Hence, as in the previous works \cite{brakerski2018cryptographic, mahadev2018classical, brakerski2020simpler}, we will use `noisy' version of the above notion. For each $k \in \K$, there exist two functions $f_{k,0}, f_{k,1}$ that map $\X$ to a distribution over $\Y$.


\begin{definition}[NTCF family]\label{def: ntcfs}
	Let $\lambda$ be a security parameter. Let $\X$ and $\Y$ be finite sets. Let $\mathcal{D}_\mathcal{Y}$ be a distribution over $\mathcal{Y}$.
	 Let $\mathcal{K}_{\mathcal{F}}$ be a finite set of keys. A family of functions 
	$$\mathcal{F} \,=\, \big\{f_{k,b} : \X\rightarrow \mathcal{D}_{\Y} \big\}_{k\in \mathcal{K}_{\mathcal{F}},b\in\{0,1\}}$$
	is called a \emph{noisy trapdoor claw free (NTCF) family} if the following conditions hold:

	\begin{enumerate}
		\item{\textbf{Efficient Function Generation.}} There exists an efficient probabilistic algorithm $\genf$ which generates a key $k\in \mathcal{K}_{\mathcal{F}}$ together with a trapdoor $t_k$: 
		$$(k,t_k) \leftarrow \genf(1^\lambda)\;.$$

		\item{\textbf{Trapdoor Injective Pair.}}  
		\begin{enumerate}
			\item \textit{Trapdoor}: There exists an efficient deterministic algorithm $\invf$ such that with overwhelming probability over the choice of $(k,t_k) \gets \genf(1^{\lambda})$, the following holds: \\
			$$ \text{for all } b\in \{0,1\}, x\in \X \text{ and } y\in \supp(f_{k,b}(x)), \invf(t_k,b,y) = x. $$
			\item \textit{Injective pair}: For all keys $k\in \mathcal{K}_{\mathcal{F}}$, there exists a perfect matching $\R_k \subseteq \X \times \X$ such that $f_{k,0}(x_0) = f_{k,1}(x_1)$ if and only if $(x_0,x_1)\in \R_k$. 
		\end{enumerate}

		\item{\textbf{Efficient Range Superposition.}}
		For all keys $k\in \mathcal{K}_{\mathcal{F}}$ and $b\in \{0,1\}$ there exists a function $f'_{k,b}:\X\to \mathcal{D}_{\Y}$ such that the following hold.
		\begin{enumerate} 
			\item For all $(x_0,x_1)\in \mathcal{R}_k$ and $y\in \supp(f'_{k,b}(x_b))$, $\invf(t_k,b,y) = x_b$ and $\invf(t_k,b\oplus 1,y) = x_{b\oplus 1}$. 
			\item There exists an efficient deterministic procedure $\chkf$ that, on input $k$, $b\in \{0,1\}$, $x\in \X$ and $y\in \Y$, returns $1$ if $y\in \supp(f'_{k,b}(x))$ and $0$ otherwise. Note that $\chkf$ is not provided the trapdoor $t_k$. 
			\item For every $k$ and $b\in\{0,1\}$, 
			$$ \mathbb{E}_{x\leftarrow \X} \big[\,H^2(f_{k,b}(x),\,f'_{k,b}(x))\,\big] \,\leq\, \mu(\lambda) \;.$$
			 for some negligible function $\mu$. Here $H^2$ is the Hellinger distance. Moreover, there exists an efficient procedure  $\sampf$ that on input $k$ and $b\in\{0,1\}$ prepares the state
			\begin{equation}
			    \frac{1}{\sqrt{|\X|}}\sum_{x\in \X,y\in \Y}\sqrt{(f'_{k,b}(x))(y)}\ket{x}\ket{y}\;.
			\end{equation}
		\end{enumerate}

		\item{\textbf{Claw-Free Property.}}
		For any PPT adversary $\A$, there exists a negligible function $\negl(\cdot)$ such that the following holds: 
		\begin{align*}
			\Pr[(x_0, x_1) \in \R_k  : (k, t_k) \gets \genf(1^{\lambda}), (x_0, x_1) \gets \A(k)] \leq \negl(\lambda)
		\end{align*}

	\end{enumerate}

\end{definition} 

The next two definitions are taken from \cite{mahadev2018classical}.
\begin{definition}[Trapdoor Injective Function Family]
\label{def: trapdoor injective}
Let $\lambda$ be a security parameter. Let $\X$ and $\Y$ be finite sets. Let $\mathcal{D}_\mathcal{Y}$ be a distribution over $\mathcal{Y}$. Let $\mathcal{K}_{\mathcal{G}}$ be a finite set of keys. A family of functions 
$$\mathcal{G} \,=\, \big\{g_{k,b} : \X\rightarrow \mathcal{D}_{\Y} \big\}_{b\in\{0,1\},k\in \mathcal{K}_{\mathcal{G}}}$$
is called a \textbf{trapdoor injective family} if the following conditions hold:

\begin{enumerate}
\item{\textbf{Efficient Function Generation.}} There exists an efficient probabilistic algorithm $\geng$ which generates a key $k\in \mathcal{K}_{\mathcal{G}}$ together with a trapdoor $t_k$: 
$$(k,t_k) \leftarrow \textrm{GEN}_{\mathcal{G}}(1^\lambda)\;.$$

\item{\textbf{Disjoint Trapdoor Injective Pair.}} For all keys $k\in \mathcal{K}_{\mathcal{G}}$, for all $b, b'\in\{0,1\}$ and $x,x' \in \X$, if $(b,x)\neq (b',x')$, $\supp(g_{k,b}(x))\cap \supp(g_{k,b'}(x')) = \emptyset$. Moreover, there exists an efficient deterministic algorithm $\invf$ such that for all $b\in \{0,1\}$,  $x\in \X$ and $y\in \supp(g_{k,b}(x))$, $\invg (t_k,y) = (b,x)$.  

\item{\textbf{Efficient Range Superposition.}}
For all keys $k\in \mathcal{K}_{\mathcal{G}}$ and $b\in \{0,1\}$ 
\begin{enumerate} 
 
\item There exists an efficient deterministic procedure $\chkg$ that, on input $k$, $b\in \{0,1\}$, $x\in \X$ and $y\in \Y$, outputs $1$ if  $y\in \supp(g_{k,b}(x))$ and $0$ otherwise. Note that $\chkg$ is not provided the trapdoor $t_k$. 
\item There exists an efficient procedure  $\sampg$ that on input $k$ and $b\in\{0,1\}$ returns the state
\begin{equation}
    \frac{1}{\sqrt{|\X|}}\sum_{x\in \X,y\in \Y}\sqrt{(g_{k,b}(x))(y)}\ket{x}\ket{y}\;.
\end{equation}

\end{enumerate}
\end{enumerate}

\end{definition}

\begin{definition}[Injective Invariance]
A noisy trapdoor claw-free family $\mathcal{F}$ is \textbf{injective invariant} if there exists a trapdoor injective family $\mathcal{G}$ such that:
\begin{enumerate}
\item The algorithms $\chkf$ and $\sampf$ are the same as the algorithms $\chkg$ and $\sampg$.
\item For all quantum polynomial-time procedures $\mathcal{A}$, there exists a negligible function $\mu(\cdot)$ such that
\begin{equation}
\Big|\Pr_{(k,t_k)\leftarrow \textrm{GEN}_{\mathcal{F}}(1^{\lambda})}[\mathcal{A}(k) = 0] - \Pr_{(k,t_k)\leftarrow \textrm{GEN}_{\mathcal{G}}(1^{\lambda})}[\mathcal{A}(k) = 0]\Big|\leq \mu(\lambda)
\end{equation}
\end{enumerate}
\end{definition}

\begin{lem}[\cite{brakerski2018cryptographic}, \cite{mahadev2018classical}]
Assuming the quantum hardness of LWE, there exists an injective invariant NTCF family.
\end{lem}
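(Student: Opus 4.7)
The plan is to invoke the existing LWE-based NTCF construction of \cite{brakerski2018cryptographic} and verify that its modification in \cite{mahadev2018classical} satisfies the injective invariance property. I would not reconstruct the functions from scratch; rather I would exhibit the relevant families and point at the hardness reductions.

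First, I would recall the concrete construction. Let $n, m, q$ be LWE parameters with $q$ prime, $m = \Theta(n \log q)$, and let $\chi$ be a discrete Gaussian with small standard deviation. A key $k$ consists of a matrix $A \in \mathbb{Z}_q^{m \times n}$ together with a vector $v = A s + e \pmod q$, where $s \in \mathbb{Z}_q^n$ is the secret and $e \leftarrow \chi^m$. The trapdoor $t_k$ is a short basis for the $q$-ary lattice associated to $A$, generated by the $\mathsf{TrapGen}$ algorithm of Micciancio–Peikert, which permits efficient Gaussian preimage sampling. The functions are defined (as distributions) by $f_{k,0}(x) = A x + e_0$ and $f_{k,1}(x) = A x + v + e_0$ with $e_0 \leftarrow \chi^m$, so that $f_{k,0}(x)$ and $f_{k,1}(x-s)$ have statistically close supports, providing the claw structure. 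I would verify properties 1--3 of Definition \ref{def: ntcfs} by citing the standard analyses: $\mathsf{Gen}_{\mathcal{F}}$ is $\mathsf{TrapGen}$ together with an LWE sample; inversion $\mathsf{Inv}_{\mathcal{F}}$ uses the short basis to decode noisy lattice points and recover the unique preimage; $\mathsf{Samp}_{\mathcal{F}}$ prepares the required Gaussian-weighted superposition using the quantum Gaussian state preparation of Grover--Rudolph type techniques; and $\mathsf{Chk}_{\mathcal{F}}$ simply verifies that $y - f_{k,b}(x)$ has small norm. Property 4 (claw-freeness) reduces to the short integer solution (SIS) problem, which is implied by LWE for these parameters.

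Next, to establish injective invariance, I would define the sibling family $\mathcal{G}$ identically to $\mathcal{F}$ except that the vector $v$ in the key is replaced by a uniformly random $u \in \mathbb{Z}_q^m$. With $u$ uniform, property 2 of Definition \ref{def: trapdoor injective} becomes a statement about the supports of $Ax + e_0$ and $Ax' + u + e_0$ being disjoint for all $x, x'$, which holds with overwhelming probability over the choice of $u$ (by a union bound over the polynomially many pairs $(x,x')$, since each near-collision in $\mathbb{Z}_q^m$ occurs with only exponentially small probability for appropriate parameter settings). The functions $\mathsf{Chk}$ and $\mathsf{Samp}$ are literally identical across $\mathcal{F}$ and $\mathcal{G}$ because they depend only on $A, v$ (or $A, u$) in the same syntactic way. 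Finally, the indistinguishability condition reduces immediately to decisional LWE: distinguishing a key from $\mathsf{Gen}_{\mathcal{F}}$ from a key from $\mathsf{Gen}_{\mathcal{G}}$ is exactly distinguishing $(A, As+e)$ from $(A, u)$.

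The main (and essentially the only) subtle point is parameter selection: one needs $q$ and $\chi$ calibrated so that the Hellinger distance bound in property 3(c) of Definition \ref{def: ntcfs} holds with the same noise distribution used for disjointness of supports in Definition \ref{def: trapdoor injective}, while simultaneously keeping LWE quantumly hard and supporting trapdoor inversion. This tension is resolved in \cite{brakerski2018cryptographic, mahadev2018classical} by choosing super-polynomial modulus-to-noise ratio, and I would simply import those parameter settings. With the families and parameters in place, the lemma follows by combining the three ingredients above.
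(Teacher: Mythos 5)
The paper does not prove this lemma; it is imported by citation to \cite{brakerski2018cryptographic} and \cite{mahadev2018classical}, so there is no in-paper argument to compare against. Your sketch is a faithful reproduction of the cited construction, and the high-level plan is exactly right: instantiate $\mathcal{F}$ from LWE with key $(A, v = As+e)$, define the sibling injective family $\mathcal{G}$ by replacing $v$ with a uniform $u \in \mathbb{Z}_q^m$, observe that $\mathsf{Chk}$ and $\mathsf{Samp}$ are syntactically identical across the two families because they depend on the key only through the pair $(A, v)$ resp.\ $(A, u)$, and reduce indistinguishability of the two key distributions directly to decisional LWE.

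Two small slips are worth correcting, neither of which is fatal since you are importing the result anyway. First, claw-freeness of $\mathcal{F}$ does not go through SIS; it reduces \emph{directly} to search-LWE. The unique claw over $y$ is the pair $(x_0, x_1)$ with $x_0 - x_1 = s$, so an algorithm outputting a claw hands you the LWE secret $s$ outright. Second, the domain $\X$ is a subset of $\mathbb{Z}_q^n$ and is exponentially large, so there is no ``union bound over the polynomially many pairs $(x,x')$.'' The clean argument, and the one in \cite{mahadev2018classical}, is a single event: with overwhelming probability over the choice of $u$, the point $u$ lies far (relative to the noise width) from \emph{every} element of the $q$-ary lattice $\{Az : z \in \mathbb{Z}_q^n\}$. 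Since the supports of $g_{k,0}(x)$ and $g_{k,1}(x')$ can intersect only if $u$ is close to $A(x-x')$, this one event simultaneously rules out overlap for all pairs at once. One can also make a union bound work over the exponentially many pairs with the super-polynomial modulus-to-noise ratio you already invoke, but ``polynomially many'' is incorrect as written.
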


\section{Unexplainable Encryption}
In this section, we formally introduce the notions of \emph{unexplainable} encryption, and \emph{perfectly unexplainable} encryption. In \ref{sec: before-the-fact}, we discuss the notion of coercion before-the-fact in more detail, and its relationship to perfect unexplainability.  In \ref{sec: den vs unexp}, we discuss in detail the relationship between unexplainable encryption and the standard definition of deniable encryption.
\subsection{Definition}
\label{sec: unexp def}
The following definition involves algorithms with the following syntax:
\begin{itemize}
\item $\mathsf{Explain}$ takes as input a public key $\mathsf{pk}$, a message $m$, and outputs a \emph{classical} ciphertext $c$ and a \emph{quantum} witness $w$; 
\item $\mathsf{Verify}$ takes as input a public key $\mathsf{pk}$, a classical ciphertext $c$, a message $m$, and a quantum witness $w$, and outputs a bit; \item $\mathsf{FakeExplain}$ takes as input a public key $\pk$, a pair of messages $m, m'$, and outputs a ciphertext $c$, and two (possibly entangled) witnesses $w,w'$.
\end{itemize}
In the rest of the section, we use the notation $\Pr\limits_{\pk}$ to mean that the probability is over sampling a public key $\mathsf{pk}$ from the generation algorithm $\mathsf{Gen}(1^{\lambda})$ (where the security parameter $\lambda$ is omitted from the notation).

\begin{definition}[Unexplainable Encryption]
\label{def: unexp}
We say that a public-key encryption scheme $(\mathsf{Gen}, \mathsf{Enc}, \mathsf{Dec})$ is  \emph{unexplainable} if the following holds. Let $\mathsf{Verify}$ and $\mathsf{Explain}$ be any (non-uniform) QPT algorithms (with the syntax described above) such that, for all $\pk, c,m,w$, $\mathsf{Verify}(\mathsf{pk}, c, m, w) = 0$, except with exponentially small probability, if $c$ is not in the support of the distribution of $\mathsf{Enc}(\mathsf{pk},m)$.

Suppose $\mathsf{Verify}, \mathsf{Explain}$ satisfy the following \emph{completeness} condition: there exists a non-negligible function $\gamma$, such that, for any $m$, for all $\lambda$, 
$$\Pr_{\pk}[\mathsf{Verify}(\pk, c, m, w) = 1: (c,w) \gets \mathsf{Explain}(\mathsf{pk},m)]  \geq \gamma(\lambda) \,.$$
Then, there exist a polynomial-time algorithm $\mathsf{FakeExplain}$ and a non-negligible function $\negl$, such that, for any distinct messages $m,m'$, for all $\lambda$:
\begin{align}
\label{eq: 76}
   \Big| &\Pr_{\pk}[\mathsf{Verify}\left(\pk, c, m, w\right) =1: c,w \gets \mathsf{Explain}(\pk,m)] \nonumber \\
   -  &\Pr_{\pk}[\mathsf{Verify}\left(\pk, c, m', w' \right) =1:  c,w \gets \mathsf{Explain}(\pk,m), \,\, w' \gets \mathsf{FakeExplain}(\pk,m,m',c,w)] \Big| = \negl(\lambda) 
\end{align}
\end{definition}

We believe that this definition naturally captures the ideal of privacy desired in a deniable setting. Moreover, since this definition does not explicitly refer to input randomness, it applies to the classical and quantum settings alike (up to making the corresponding algorithms classical or quantum).

We remark that in Definition \ref{def: unexp} we restricted to considering $\mathsf{Verify}$ such that $\mathsf{Verify}(\mathsf{pk}, c, m, w) = 0$, with probability $1$, whenever $c$ is not in the support of the distribution of $\mathsf{Enc}(\mathsf{pk},m)$. One could relax this requirement slightly (and thus obtain a stronger definition) by asking that $ \Pr[\mathsf{Verify}(\mathsf{pk}, c, m, w) = 0]$ be exponentially small (in the security parameter) whenever $c$ is not in the support of the distribution of $\mathsf{Enc}(\mathsf{pk},m)$. We will show that our scheme still satisfies this definition.

The following is a special case of unexplainable encryption. In words, an encryption scheme is \emph{perfectly} unexplainable if there does not exist a pair of efficient algorithms $\mathsf{Verify}$ and $\mathsf{Explain}$ (where $\mathsf{Verify}(\mathsf{pk}, c, m, w) = 0$ if $c$ is not in the range of $\mathsf{Enc}(\mathsf{pk},m; \cdot)$), for which the \emph{completeness} condition holds.
\begin{definition}[Perfectly Unexplainable Encryption]
\label{def: perfect unexp}
A public key encryption scheme $(\mathsf{Gen}, \mathsf{Enc}, \mathsf{Dec})$ is said to be \emph{perfectly unexplainable} if the following holds. Let $\mathsf{Explain}$ and $\mathsf{Verify}$ be any (non-uniform) QPT algorithms (with the syntax described above)such that, for all $\pk, c,m,w$, $\mathsf{Verify}(\mathsf{pk}, c, m, w) = 0$, except with exponentially small probability, if $c$ is not in the support of the distribution of $\mathsf{Enc}(\mathsf{pk},m)$. Then, for any $m$, there exists a negligible function $\negl$, such that for any $\lambda$, 
\begin{equation}
    \Pr_{\pk}[\mathsf{Verify}(\pk, c, m, w) = 1: (c,w) \gets \mathsf{Explain}(\mathsf{pk},m)]  = \negl(\lambda) \,. \label{eq: perfect unexp}
\end{equation}
\end{definition}

At first glance, perfect unexplainability seems unattainable. In fact, it is clear that a classical encryption scheme cannot be perfectly unexplainable: one can always take $\mathsf{Explain}$ to be the algorithm that encrypts honestly and outputs the ciphertext together with the randomness used (i.e. the witness is the randomness), and take $\mathsf{Verify}$ to be the algorithm that runs encryption forward and checks consistency. In contrast, the quantum encryption scheme that we will describe in Section \ref{sec: unexp scheme} will be perfectly unexplainable.

We emphasize that perfect unexplainability is the strongest sense in which a scheme can be unexplainable: condition \eqref{eq: 76} in Definition \ref{def: unexp} can be replaced by something as demanding as we desire, but this doesn't matter because, for a perfectly unexplainable encryption scheme, completeness is never satisfied, and thus any replacement of condition \eqref{eq: 76} would be vacuously satisfied. We point out that one could even relax the quantifier over $m$ in the completeness condition to ``$\exists m$''. Our scheme from Section \ref{sec: unexp scheme} would still satisfy this stronger definition.

\begin{remark}
It is crucial that the algorithm $\mathsf{Explain}$ is required to output a \emph{classical} ciphertext, along with a possibly quantum witness (i.e.\ equivalently, it outputs a classical-quantum state over the ciphertext-witness registers). The definition would be unachievable if $\mathsf{Explain}$ were allowed to output a \emph{quantum} ciphertext that is possibly entangled with the witness: in this case, one could simply consider an $\mathsf{Explain}$ algorithm that runs the honest encryption algorithm, but without measuring the register $\mathsf{C}$ containing the ciphertext. It then provides its work register $\mathsf{W}$ (which is possibly entangled with $\mathsf{C}$) as the witness. One can then define $\mathsf{Verify}$ to simply run the honest encryption unitary in reverse, and check that this recovers the initial (say, all zeros) state. This restriction on $\mathsf{Explain}$ is not only necessary, but arguably also natural. It captures the scenario where $\mathsf{Explain}$ tries to simultaneously submit a valid ciphertext (e.g.\ to a government in an online election) through a classical communication channel, as well as a witness (which could be a quantum state) that this ciphertext is correct (e.g.\ to an attacker who bought the vote).
\end{remark}

\subsection{Coercion before-the-fact}
\label{sec: before-the-fact}

While protecting against coercion before-the-fact is very desirable in practice, to the best of our knowledge, a corresponding notion has not been previously formalized
(most likely due to the fact that protecting against coercion before-the-fact is impossible to achieve classically in a cryptographic sense). Here, we propose a formal definition, and we observe that this is essentially equivalent to the notion of \emph{perfect unexplainability} from Definition \ref{def: perfect unexp}. 

In a coercion before-the-fact scenario, an attacker, who has in mind a message $m$, wishes to prescribe to the sender \emph{how} she should encrypt later in a way that:
\begin{itemize}
    \item The resulting ciphertext decrypts to $m$ with overwhelming probability.
    \item There is an efficient procedure for the attacker to verify that the sender's ciphertext (which the attacker obtains by intercepting) will decrypt to $m$ with overwhelming probability.
\end{itemize}
First, notice that there is no hope of protecting against coercion before-the-fact in a model where the attacker can approach the sender before she sends her ciphertext, and knows \emph{all} of the information that will be available to the sender at the time of encryption (e.g. the public key). In fact, in such a model, the attacker can simply generate a genuine encryption $c$ of the desired message $m$, and prescribe that the sender's ciphertext later be exactly $c$. So, instead we consider the scenario where the public key is \emph{not known} to the attacker at the time when he approaches the sender. Such a model captures an online election where citizens encrypt their votes before sending them to the government using a public key encryption scheme, and the public key is announced publicly only on election day. The attacker is allowed to approach the sender any time \emph{before} election day (i.e. any time before the public key is announced). More generally, one can consider a model where some additional information (not necessarily the public key) is revealed to the sender just before she encrypts. For simplicity, we restrict ourselves to formalizing the setting where the public key is the only information that is not known to the attacker at the time of coercion.

We first formally define the notion of a coercion before-the-fact \emph{attack}. An encryption scheme then protects against coercion before-the-fact if no such attack exists. 

\begin{definition}[Coercion before-the-fact attack]
\label{def: before-attack}
Let $(\mathsf{Gen}, \mathsf{Enc}, \mathsf{Dec})$ be a public-key encryption scheme with classical ciphertexts (where $\mathsf{Dec}$ is a classical deterministic algorithm). A coercion before-the-fact attack is a pair $(\mathsf{Enc}', \mathsf{Verify})$ where:
\begin{itemize}
    \item $\mathsf{Enc}'( \pk, m) \rightarrow c\,\,$ is a non-uniform QPT algorithm
    \item $\mathsf{Verify}(c, m) \rightarrow  \mathsf{accept}/\mathsf{reject}\,\,$ is a non-uniform QPT algorithm
\end{itemize}
They satisfy:
\begin{itemize}
    \item (Completeness of verification) For any $m, \lambda$, $$\Pr[\mathsf{Verify}(\mathsf{Enc}'(\pk, m), m) = \mathsf{accept}: (\sk,\pk) \gets \mathsf{Gen}(1^{\lambda})] = 1 \,.$$
    \item (Soundness of verification) There exists $C>0$, such that the following holds for all $c,m, \lambda$: $$ \Pr[\mathsf{Dec}(\sk, c) \neq m \,\land \, \mathsf{Verify}(\pk,c, m) = \mathsf{accept} :(\sk,\pk) \gets \mathsf{Gen}(1^{\lambda})] \leq 2^{-\lambda^C} \,. \footnote{The only reason why the upper bound in the soundness condition is exponentially small rather than being zero is that, for a scheme with imperfect decryption, the classical attack that prescribes the input randomness and checks consistency of the intercepted ciphertext with the prescribed randomness has completeness $1$ but exponentially small soundness, rather than $0$.}$$ 
\end{itemize}
\end{definition}

\begin{definition}
We say that an encryption scheme \emph{protects against coercion before-the-fact} if no coercion before-the-fact attacks exists.
\end{definition}
It is straightforward to see that a perfectly unexplainable encryption scheme protects against coercion before-the-fact. This is because a coercion before-the-fact attack provides a way to encrypt in a way that can be later verified.
\begin{theorem}
A perfectly unexplainable encryption scheme protects against coercion before-the-fact.
\end{theorem}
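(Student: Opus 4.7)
The plan is to argue by contrapositive: assume a coercion before-the-fact attack $(\mathsf{Enc}', \mathsf{Verify})$ exists in the sense of Definition~\ref{def: before-attack}, and from it build a pair $(\mathsf{Explain}, \mathsf{Verify}')$ that satisfies the hypothesis on $\mathsf{Verify}$ in Definition~\ref{def: perfect unexp} yet violates its conclusion~\eqref{eq: perfect unexp}, thereby contradicting perfect unexplainability.

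First I would define $\mathsf{Explain}(\pk, m)$ to simply run $c \gets \mathsf{Enc}'(\pk, m)$ and output $(c, \bot)$, where $\bot$ is a trivial (empty) witness. Then I would define $\mathsf{Verify}'(\pk, c, m, w)$ to ignore $w$ and return $\mathsf{Verify}(\pk, c, m)$. Completeness of verification in Definition~\ref{def: before-attack} gives immediately
$$\Pr_{\pk}\!\left[\mathsf{Verify}'(\pk, c, m, w) = 1 : (c, w) \gets \mathsf{Explain}(\pk, m)\right] = 1\,,$$
which certainly exceeds any non-negligible $\gamma(\lambda)$, in fact witnessing the strongest possible completeness.

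The next step is to verify the admissibility hypothesis on $\mathsf{Verify}'$ in Definition~\ref{def: perfect unexp}: for every $c$ outside the support of $\mathsf{Enc}(\pk, m)$, $\mathsf{Verify}'$ rejects except with exponentially small probability (over the sampling of $\pk$). Since the scheme has perfect decryption and, for our construction, the support of $\mathsf{Enc}(\pk, m)$ is exactly the set of ciphertexts that decrypt to $m$ under $\sk$, the event $c \notin \mathrm{supp}(\mathsf{Enc}(\pk, m))$ coincides with $\mathsf{Dec}(\sk, c) \neq m$. Plugging this equivalence into the soundness of verification of the coercion attack bounds
$$\Pr_{\pk}\!\left[\mathsf{Verify}'(\pk, c, m, w) = 1\right] \,\leq\, \Pr_{\pk}\!\left[\mathsf{Dec}(\sk, c) \neq m \,\land\, \mathsf{Verify}(\pk, c, m) = \mathsf{accept}\right] \,\leq\, 2^{-\lambda^C}\,,$$
matching exactly the exponentially small bound required by Definition~\ref{def: perfect unexp}.

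Combining the two bullets, the pair $(\mathsf{Explain}, \mathsf{Verify}')$ is admissible but has completeness $1$, contradicting the conclusion~\eqref{eq: perfect unexp} that every admissible $\mathsf{Verify}$ accepts $\mathsf{Explain}$'s output only with negligible probability. I expect the main subtle point to be the bridge between decryption-based soundness (as stated in Definition~\ref{def: before-attack}) and support-based soundness (as required by Definition~\ref{def: perfect unexp}); if the encryption scheme does not automatically satisfy $\{c : \mathsf{Dec}(\sk, c) = m\} \subseteq \mathrm{supp}(\mathsf{Enc}(\pk, m))$, one either needs a separate argument that this inclusion holds for the scheme at hand, or a mild strengthening of the soundness clause in Definition~\ref{def: before-attack} to be stated directly in terms of encryption support. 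Everything else in the argument is bookkeeping.
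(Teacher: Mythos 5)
Your reduction matches the paper's proof almost verbatim: on both accounts one assumes an attack $(\mathsf{Enc}', \mathsf{Verify})$, lifts it to $(\mathsf{Explain}, \mathsf{Verify}')$ by having $\mathsf{Explain}$ run $\mathsf{Enc}'$ with a dummy witness and $\mathsf{Verify}'$ ignore $w$, and observes that completeness of the attack yields success probability one for $\mathsf{Explain}$, contradicting perfect unexplainability. So the structure is the same.

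Two remarks on the details. First, the display you write for admissibility of $\mathsf{Verify}'$ has the inequality facing the wrong way: $\Pr_{\pk}[\mathsf{Verify}'(\pk,c,m,w)=1]$ dominates, not is dominated by, the joint event $\Pr_{\pk}[\mathsf{Dec}(\sk,c)\neq m \,\land\, \mathsf{Verify}(\pk,c,m)=\mathsf{accept}]$. What you actually want is that for every fixed $(\pk,c,m,w)$ with $c\notin\mathrm{supp}(\mathsf{Enc}(\pk,m))$, $\mathsf{Verify}'(\pk,c,m,w)$ accepts only with exponentially small probability; the coercion-soundness clause is an average over $\pk$, so deriving the per-$\pk$ statement from it requires either a Markov argument that concedes a negligible bad fraction of keys, or a slightly stronger (per-$\pk$) soundness clause in Definition~\ref{def: before-attack}. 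Second, you are right to flag the bridge between $\{c:\mathsf{Dec}(\sk,c)\neq m\}$ and $\{c:c\notin\mathrm{supp}(\mathsf{Enc}(\pk,m))\}$ as the real subtlety. Perfect decryption gives the containment $\mathrm{supp}(\mathsf{Enc}(\pk,m))\subseteq\{c:\mathsf{Dec}(\sk,c)=m\}$ but not the reverse, so in general $c\notin\mathrm{supp}(\mathsf{Enc}(\pk,m))$ does not imply $\mathsf{Dec}(\sk,c)\neq m$, and coercion-soundness gives you nothing for such $c$. The paper's proof passes over this silently; your note correctly identifies that one needs either the extra structural property of the scheme at hand or a reformulation of the soundness clause in terms of the encryption support. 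Since the theorem is stated for an arbitrary perfectly unexplainable scheme, your fallback of appealing to ``our construction'' does not suffice for the general statement, which is exactly why your concluding caveat is well placed.
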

\begin{proof}
Let $(\mathsf{Gen}, \mathsf{Enc}, \mathsf{Dec})$ be a perfectly unexplainable encryption scheme (satisfying the non-uniform version of Definition \ref{def: perfect unexp}). Suppose for a contradiction that a coercion before-the-fact attack $(\mathsf{Enc}', \mathsf{Verify})$ existed.

Define $\mathsf{Verify}'$ to be the non-uniform algorithm that on input $\pk, c, m, w$, runs $\mathsf{Verify}(\pk,c,m)$ and ignores $w$. Let $\mathsf{Explain}$ be the non-uniform algorithm that, on input $\pk, m$, runs $c \gets \mathsf{Enc}'(\pk,m)$, and outputs $c, w^*$, for some fixed $w^*$. By the completeness and soundness of the coercion before-the-fact attack, it follows that the pair $(\mathsf{Explain}, \mathsf{Verify}')$ contradicts perfect unexplainability.
\end{proof}

Definition \ref{def: before-attack} only considers attacks where $\mathsf{Enc}’$ and $\mathsf{Verify}$ are non-uniform with \emph{classical} advice. More generally, one could also consider attacks where $\mathsf{Enc}’$ and $\mathsf{Verify}$ have \emph{quantum} advice. In particular, this advice could be in the form of an entangled state over two registers corresponding to the advice of $\mathsf{Enc}’$ and $\mathsf{Verify}$ respectively. This captures the scenario where an attacker coerces the sender before-the-fact by giving to the sender half of some entangled state, and prescribing what the encryption operation should be, i.e. $\mathsf{Enc}’$. The verification of the sender's intercepted ciphertext then makes use of the other half of the entangled state. 

It is not difficult to see that a slightly more general version of the definition of perfect unexplainability, where $\mathsf{Explain}$ and $\mathsf{Verify}$ are allowed to be non-uniform with entangled quantum advice, implies this more general notion of protection against coercion before-the-fact. Such a notion of perfect unexplainability holds assuming the quantum hardness of LWE against QPT algorithm with polynomial quantum advice. Our security proof in Section \ref{sec: unexp scheme} goes through unchanged (since our extraction algorithm does not involve any rewinding).

\subsection{The relationship between deniable and unexplainable encryption}
\label{sec: den vs unexp}
While the definition of unexplainable encryption that we introduced in Section \ref{sec: unexp def} seems related to deniability, its exact relationship to it is not entirely clear. The reason why the two definitions are difficult to formally compare is that the classical definition of deniability is, in some sense, centered around the input randomness, while the latter does not appear at all in unexplainability. In this section, we shed light on the relationship between unexplainability and deniability, by introducing a variation of unexplainability in the \emph{classical} setting, and showing that it is equivalent to deniability. Essentially, we show that deniability can be equivalently phrased from the point of view of unexplainability.  

The syntax of the algorithms is slightly different from that of Definition \ref{def: unexp}: $\mathsf{Explain}$ takes as input a public key $\mathsf{pk}$, a message $m$, randomness $r$, and outputs some witness $w$; 
$\mathsf{Verify}$ takes as input a public key $\mathsf{pk}$, a ciphertext $c$, a message $m$, a witness $w$, and outputs a bit; $\mathsf{FakeExplain}$ takes as input a public key $\pk$, a pair of messages $m, m'$, randomness $r$, and outputs a witness $w$. We emphasize that the following definition is entirely \emph{classical}.
\begin{definition}[Strongly Unexplainable Encryption]
\label{def: classical unexp}
A public key encryption scheme $(\mathsf{Gen}, \mathsf{Enc}, \mathsf{Dec})$ is said to be \emph{strongly unexplainable} if the following holds. For any PPT algorithm $\mathsf{Explain}$, there exists a PPT algorithm $\mathsf{FakeExplain}$ such that the following holds for any PPT algorithm $\mathsf{Verify}$ such that, for all $\mathsf{pk}, c, m, w$, $\mathsf{Verify}(\mathsf{pk}, c, m, w) = 0$ (with probability $1$) if $c$ is not in the range of $\mathsf{Enc}(\mathsf{pk},m; \cdot)$.

Suppose $\mathsf{Verify}, \mathsf{Explain}$ satisfy the following \emph{completeness} condition: there exists a non-negligible function $\gamma$, such that, for any $m$, for any $\lambda$, 
$$\Pr_{\pk,r}[\mathsf{Verify}(\pk, \mathsf{Enc}(\pk, m;r), m, w) = 1: w \gets \mathsf{Explain}(\mathsf{pk},m,r)]  = \gamma(\lambda) \,.$$
Then, for any distinct messages $m,m'$, for any $\lambda$, we have:
\begin{align}
\label{eq: 7}
   \Big| &\Pr_{\pk, r}[\mathsf{Verify}\left(\pk, \mathsf{Enc}(\pk, m;r), m, w\right) =1: w \gets \mathsf{Explain}(\pk,m,r)] \nonumber \\
   -  &\Pr_{\pk,r}[\mathsf{Verify}\left(\pk, \mathsf{Enc}(\pk, m';r), m, w\right) =1:  w \gets \mathsf{FakeExplain}(\pk,m,m',r)] \Big| = \negl(\lambda) 
\end{align}
\end{definition}

\begin{lem}
\label{lem: equivalence}
A classical encryption scheme is sender-deniable (as in Definition \ref{def: classical deniability}) if and only if it is strongly unexplainable (according to Definition \ref{def: classical unexp}).
\end{lem}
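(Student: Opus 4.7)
The plan is to prove both directions via efficient reductions between the faking algorithms in the two definitions. The conceptual bridge is that, classically, the canonical witness that a ciphertext $c$ is an encryption of $m$ under $\pk$ is simply the randomness $r$ used in encryption, with $\mathsf{Verify}$ being the predicate ``$\mathsf{Enc}(\pk, m; r) = c$''. This correspondence lets us translate fake-randomness from Definition~\ref{def: classical deniability} into fake-witnesses from Definition~\ref{def: classical unexp}, and vice versa.

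For the forward direction, fix any PPT $\mathsf{Explain}$ and let $\mathsf{Fake}$ be the algorithm guaranteed by deniability. Set
\[
\mathsf{FakeExplain}(\pk, m, m', r) := \mathsf{Explain}\bigl(\pk,\, m,\, \mathsf{Fake}(\pk, \mathsf{Enc}(\pk, m'; r), m', m, r)\bigr).
\]
For any PPT $\mathsf{Verify}$ satisfying the consistency constraint, let $f(\pk, c, m, x) := \mathsf{Verify}(\pk, c, m, \mathsf{Explain}(\pk, m, x))$; since $f$ is itself PPT, applying it to each side of the deniability indistinguishability (instantiated with $m_1 := m$, $m_0 := m'$) yields that the real and fake sides of Eq.~\eqref{eq: 7} differ by at most a negligible function. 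Notably, this direction does not even require the completeness hypothesis.

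For the reverse direction, instantiate classical unexplainability with $\mathsf{Explain}(\pk, m, r) := r$ to obtain $\mathsf{FakeExplain}$, and set $\mathsf{Fake}(\pk, c, m_0, m_1, r) := \mathsf{FakeExplain}(\pk, m_1, m_0, r)$. Suppose for contradiction that this $\mathsf{Fake}$ does not witness deniability: there exist messages $m_0, m_1$ and a PPT distinguisher $D$ with
\[
\Pr[D(\pk, \mathsf{Enc}(\pk, m_1; r), m_1, r) = 1] \,-\, \Pr[D(\pk, \mathsf{Enc}(\pk, m_0; r), m_1, r') = 1] \,\geq\, \gamma
\]
for some non-negligible $\gamma$ (where $r' \gets \mathsf{Fake}(\ldots)$, after possibly flipping $D$). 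Define
\[
\mathsf{Verify}(\pk, c, m, w) := 1 \;\Longleftrightarrow\; \bigl(\mathsf{Enc}(\pk, m; w) = c\bigr) \,\wedge\, \bigl(m \neq m_1 \,\vee\, D(\pk, c, m, w) = 1\bigr).
\]
The first conjunct enforces the required rejection of inconsistent triples, while the ``$m \neq m_1$'' branch makes completeness equal to $1$ for every $m \neq m_1$ (with the identity $\mathsf{Explain}$) and at least $\gamma$ for $m = m_1$; so completeness holds with parameter $\gamma$ uniformly in $m$. Instantiating Eq.~\eqref{eq: 7} with this $\mathsf{Verify}$, $m = m_1$, $m' = m_0$: its real side is exactly $\Pr[D(\text{real}) = 1]$, and its fake side is at most $\Pr[D(\text{fake}) = 1]$ (by dropping the consistency conjunct). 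Unexplainability gives that these agree up to $\negl(\lambda)$, whence $\Pr[D(\text{real}) = 1] - \Pr[D(\text{fake}) = 1] \leq \negl(\lambda)$, contradicting the choice of $\gamma$.

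The main subtlety lies in the reverse direction: classical unexplainability only constrains the probability that a specific $\mathsf{Verify}$ accepts, not the full joint distribution of $(\pk, c, m, w)$. The key trick is to fold the putative distinguisher $D$ into $\mathsf{Verify}$ alongside a syntactic consistency check, turning a distinguishing advantage into a gap in acceptance probabilities. The consistency check plays a double role: it ensures $\mathsf{Verify}$ rejects inconsistent triples as required by the definition, and it forces the fake side of Eq.~\eqref{eq: 7} to lie \emph{below} $\Pr[D(\text{fake}) = 1]$, so that the resulting chain of inequalities yields the desired contradiction.
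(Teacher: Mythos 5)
Your proof is correct and follows the same approach as the paper's: for the forward direction you compose $\mathsf{Fake}$ with $\mathsf{Explain}$ to build $\mathsf{FakeExplain}$ and then apply the deniability indistinguishability to the PPT predicate $\mathsf{Verify}\circ\mathsf{Explain}$; for the converse you instantiate $\mathsf{Explain}(\pk,m,r)=r$ and fold a would-be deniability distinguisher $D$ into a $\mathsf{Verify}$ that also enforces the consistency check $\mathsf{Enc}(\pk,m;w)=c$. You are in fact slightly more careful than the paper in the converse: the fake side of Eq.~\eqref{eq: 7} is only \emph{bounded above} by $\Pr[D=1]$ on the faked tuple (the extra consistency conjunct can only decrease acceptance), and, as you correctly observe, this one-sided bound is exactly what the contradiction needs — the paper's assertion that the two quantities are ``precisely'' equal is an inessential overstatement.
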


In light of this equivalence, one can view Definition \ref{def: unexp} from the previous section as a natural generalization of Definition \ref{def: classical unexp}, and hence of deniability, to the quantum setting, where no input randomness is a priori present. 

We remark that the order of quantification over algorithms is different in Definitions \ref{def: unexp} and \ref{def: classical unexp}:  in the former, we required that ``$\forall \,  \mathsf{Explain}, \mathsf{Verify}$ satisfying completeness, $\exists \, \mathsf{FakeExplain}$ such that \eqref{eq: 7} is satisfied.''; in the latter, we require that $\forall \, \mathsf{Explain}, \exists \, \mathsf{FakeExplain}$ such that $\forall \, \mathsf{Verify}$ (for which  $\mathsf{Verify}, \mathsf{Explain}$ satisfy completeness), \eqref{eq: 7} is satisfied. Although we find the former order of quantification to be more natural, the latter is what is necessary for the equivalence with the classical definition of deniability. We point out that the choice of order of quantification does not affect whether our scheme from Section \ref{sec: unexp scheme} satisfies the definition. This is because the order of quantification does not affect \emph{perfect} unexplainability.

\begin{proof}[Proof of Lemma \ref{lem: equivalence}]
``$\Rightarrow$'': Suppose $(\mathsf{Gen}, \mathsf{Enc}, \mathsf{Dec})$ is sender-deniable. Then there exists an algorithm $\mathsf{Fake}$ as in the definition of sender-deniability.

Let $\mathsf{Explain}$ be any QPT algorithm (with syntax as in the definition of unexplainability). We define the following algorithm $\mathsf{FakeExplain}$:
\begin{itemize}
    \item On input $(\mathsf{pk}, m, m',r)$, run $ r' \gets \mathsf{Fake}(\pk, m, m', r)$. 
    \item Run $w \gets \mathsf{Explain} (\pk, m',r')$. Output $w$.
\end{itemize}
Since the encryption scheme is sender-deniable, by definition, the following two distributions are computationally indistinguishable for any $m,m'$, where $\pk$ is sampled from $\mathsf{Gen}$, $r$ is uniformly random, and $r' \gets \mathsf{Fake}(\pk,m,m',r)$:
\begin{equation}
(\pk,\mathsf{Enc}(\pk,m;r), m, r) \approx_c (\pk,\mathsf{Enc}(\pk,m';r), m, r')  \label{eq: 3000}
\end{equation}
Consider then the following distinguishing algorithm $D$ for the two distributions:
\begin{itemize}
    \item On input $(\pk, c, m, r)$, run $w \gets \mathsf{Explain}(\pk,m,r)$.
    \item Then, run $\mathsf{Verify}(\pk,c,m, w)$, and output the outcome.
\end{itemize}
By \eqref{eq: 3000}, $$\big|\Pr[D(\pk,\mathsf{Enc}(\pk,m;r), m, r) = 1] - \Pr[D(\pk,\mathsf{Enc}(\pk,m';r), m, r') = 1] \big|= \negl(\lambda) \,.$$
Then, notice that the first term on the LHS is precisely equal to 
$$  \Pr_{\pk, r}[\mathsf{Verify}\left(\pk, \mathsf{Enc}(\pk, m;r), m, w\right) =1: w \gets \mathsf{Explain}(\pk,m,r)] \,,$$
while the second term, using the definition of $\mathsf{FakeExplain}$, is equal to
$$ \Pr_{\pk,r}[\mathsf{Verify}\left(\pk, \mathsf{Enc}(\pk, m';r), m, w\right) =1:  w \gets \mathsf{FakeExplain}(\pk,m,m',r)]\,.$$
Since this holds for any $m, m'$, this establishes precisely that the scheme is unexplainable.
\vspace{2mm}

``$\Leftarrow$'': Let $(\mathsf{Gen}, \mathsf{Enc}, \mathsf{Dec})$ be a strongly unexplainable encryption scheme. 
Now, consider the following algorithm $\mathsf{Explain}$:
\begin{itemize}
\item On input $(\pk, m, r)$ output $(c,w)$ where $c = \mathsf{Enc}(\pk, m; r)$ and $w = r$.
\end{itemize}
By definition of strong unexplainability, for any QPT $\mathsf{Explain}$, so in particular for $\mathsf{Explain}$ as defined above, there exists a QPT algorithm $\mathsf{FakeExplain}$ such that for any QPT $\mathsf{Verify}$ (such that the pair $\mathsf{Verify}, \mathsf{Explain}$ satisfies completeness), for any distinct messages $m,m'$, we have
\begin{align}
\label{eq: 38}
   \Big| &\Pr_{\pk, r}[\mathsf{Verify}\left(\pk, \mathsf{Enc}(\pk, m;r), m, w\right) =1: w \gets \mathsf{Explain}(\pk,m,r)] \nonumber \\
   -  &\Pr_{\pk,r}[\mathsf{Verify}\left(\pk, \mathsf{Enc}(\pk, m';r), m, w\right) =1:  w \gets \mathsf{FakeExplain}(\pk,m,m',r)] \Big| = \negl(\lambda)
\end{align}

Now, suppose for a contradiction that the encryption scheme is not deniable. Recall, that our particular choice of algorithm $\mathsf{Explain}$ outputs randomness. Then, since $\mathsf{FakeExplain}$ takes inputs of the form $(\pk,m,m',r)$ and also outputs some randomness $r'$, the latter has the correct syntax for a faking algorithm in a deniable encryption scheme. Since our scheme is not deniable by hypothesis, then there must exist a pair of distinct messages $m_*, m_*'$, and an efficient distinguisher $D$ for the two distributions $(\pk,\mathsf{Enc}(\pk,m_*;r), m_*, r)=1$ and $(\pk,\mathsf{Enc}(\pk,m_*';r), m_*, r')$ where $r' \gets \mathsf{Fake}(\pk, m_*, m_*', r)$. Without loss of generality, assume that, for all $\lambda$,
\begin{equation} 
\label{eq: 16}
\Pr_{\pk,r}[D(\pk,\mathsf{Enc}(\pk,m_*;r), m_*, r)=1] \geq  \Pr_{\pk, r}[D(\pk,\mathsf{Enc}(\pk,m_*';r), m_*, r')=1] + \delta(\lambda) \,.
\end{equation} 
where $\delta$ is a non-negligible function.


Now, consider the algorithm $\mathsf{Verify}$, defined as follows:
\begin{itemize}
    \item On input $\pk, c, m, w$, check that $\mathsf{Enc}(\pk, m; w) = c$. If this is not the case, output $0$. Otherwise, proceed to the next step.
    \item If $m \neq m_*$, output $1$, otherwise, run $D(\pk, c, m, w)$, and output the outcome.
\end{itemize}
First, we argue that $\mathsf{Verify}$ and $\mathsf{Explain}$ as defined above satisfy the completeness condition.

Notice that the probability on the LHS of \eqref{eq: 16} is precisely 
$$ \Pr_{\pk, r}[\mathsf{Verify}\left(\pk, \mathsf{Enc}(\pk, m;r), m, w\right) =1: w \gets \mathsf{Explain}(\pk,m,r)] \,,$$
and the probability on the RHS of \eqref{eq: 16} is precisely $$\Pr_{\pk,r}[\mathsf{Verify}\left(\pk, \mathsf{Enc}(\pk, m';r), m, w\right) =1:  w \gets \mathsf{FakeExplain}(\pk,m,m',r)] ]\,.$$
Hence, \eqref{eq: 16} contradicts \eqref{eq: 38}.
\end{proof}

\

\section{A Perfectly Unexplainable Encryption Scheme}
\label{sec: unexp scheme}
In this section, we describe a public-key encryption scheme, and prove that it is both CPA-secure and perfectly unexplainable (as in Definition \ref{def: perfect unexp}).

\subsection{Construction}

Let $\X,\Y,\K$ be finite sets. Let $\mathcal{F} \,=\, \big\{f_{k,b} : \X\rightarrow \mathcal{D}_{\Y} \big\}_{k\in \mathcal{K},b\in\{0,1\}}$ be a family of noisy trapdoor claw-free functions (which exists assuming the quantum hardness of LWE \cite{brakerski2018cryptographic}). Let $f'_{k,b}: \X \rightarrow \Y$ be functions satisfying the \emph{efficient range superposition property} of Definition \ref{def: ntcfs}. 

The scheme that we describe in this section is a parallel repeated version of the scheme described in the technical overview (Section \ref{sec: comp oracles tech overview}). Here, by parallel repetition we mean that the same plaintext $m$ is encrypted $L$ times (with the same public key), where $L$ is polynomial in the security parameter. Without parallel repetition, we are only able to show that the LHS of Equation \eqref{eq: perfect unexp} is upper bounded by $\frac12 + \negl$. With parallel repetition, we will be able to improve this to $\negl$.

\begin{construction} 
\label{cons: unexplainable scheme}
$\,$
Parameters: $L = \poly(\lambda)$.
\begin{itemize}
    \item $\mathsf{Gen}(1^{\lambda}) \rightarrow (\mathsf{pk}, \mathsf{sk})$:
    \begin{itemize}
        \item Run $(k,t_k) \gets \genf(1^\lambda)$. Output $(\mathsf{pk} , \mathsf{sk}) = (k, t_k)$.
    \end{itemize}
    \item $\mathsf{Enc}(m, \pk) \rightarrow c$:
    \begin{itemize}
        \item For $i \in [L]$, do the following:
     \begin{itemize}
        \item On input $m \in \{0,1\}$, and $\pk= k$, run $\sampf(k, \cdot)$ on a uniform superposition of $b$'s, to obtain the state
        $$ \frac{1}{\sqrt{|\mathcal{X}|}}\sum_{b\in\{0,1\} , x \in X} \sqrt{f_{k,b}'(x)(y)}\ket{b}\ket{x}\ket{y}\,,$$
        where we assume that $x$ and $y$ are represented by their bit decomposition. We assume without loss of generality that $\sampf$ that any auxiliary register is returned to the $\ket{0}$ state. \footnote{Since the output of $\sampf$ on the output registers is a pure state, one can always have $\sampf$ coherently ``uncompute'' on all registers except does containing the output.}
        \item Measure the image register, and let $y_i \in \Y$ be the outcome. As a result, the state has collapsed to:
        $$ \frac{1}{\sqrt{2}} (\ket{0} \ket{x_0} + \ket{1} \ket{x_1}) \,,$$
        where $x_0,x_1 \in \X$ are the unique elements such that $y_i$ is in the support of $f_{k,b}'(x_b)$.
        \item Query the phase oracle for $H$, to obtain:
        $$\frac{1}{\sqrt{2}} ((-1)^{H(\mathsf{BitDecomp}(x_0))}\ket{0} \ket{x_0} + (-1)^{H(\mathsf{BitDecomp}(x_1))}\ket{1} \ket{x_1}) \,. $$
        \item Let $n$ be the length of $\mathsf{BitDecomp}(x_0)$. Apply a Hadamard gate to all of the remaining registers, and measure. Parse the measurement outcome as $z_i || d_i$ where $z_i \in \{0,1\}$ and $d_i \in \{0,1\}^n$. 
        Let $z_i':= z_i \oplus m$.
    \end{itemize}
    \item Let $z' := z_1' \ldots z_L'$, $d := d_1 \ldots d_L$, $y = y_1 \ldots y_L$. Output $c = \left( z', d, y \right)$.
    \end{itemize}
 \item $\mathsf{Dec}(c,t_k) \rightarrow m$:
 \begin{itemize}
     \item Let $c = (z',\, d,\, y)$. Parse $z'$ as $z' = z'_1 \ldots z'_L$. Similarly for $d$ and $y$.
     \item For $i \in [L]$:
 \begin{itemize}
     \item For $b \in \{0,1\}$, run $\invf(t_k, b, y_i)$ to obtain pre-images $x_0^{y_i}$ and $x_1^{y_i}$.
     \item Let $m_i = z'_i \oplus \, d_i \cdot (\mathsf{BitDecomp}(x_0^{y_i}) \oplus \mathsf{BitDecomp}(x_1^{y_i})) \oplus H(\mathsf{BitDecomp}(x_0^{y_i})) \oplus H(\mathsf{BitDecomp}(x_1^{y_i}))$.
 \end{itemize}
 \item If $m_1 = \ldots =  m_L$, output $m_1$, otherwise output $\perp$.
 \end{itemize}
\end{itemize}

\end{construction}

\begin{theorem}
\label{thm: 2}
The scheme of Construction \ref{cons: unexplainable scheme} is a CPA-secure perfectly unexplainable encryption scheme (as in Definition \ref{def: perfect unexp}) in the quantum random oracle model (QROM), assuming the quantum hardness of LWE.
\end{theorem}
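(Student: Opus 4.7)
The plan is to prove CPA-security and perfect unexplainability separately, with the bulk of the effort devoted to the latter. For CPA-security, I would reduce to a QROM ``hardcore-bit'' lemma asserting that, for $(k, t_k) \gets \genf(1^\lambda)$, uniformly random $y$ in the common image of $f_{k,0}'$ and $f_{k,1}'$, and uniformly random $d$, the bit $\sigma := d \cdot (x_0^y \oplus x_1^y) \oplus H(x_0^y) \oplus H(x_1^y)$ is computationally unpredictable to any QPT adversary with oracle access to $H$ given $(k, d, y)$. Informally, in a compressed-oracle analysis the claw-free property prevents any efficient adversary from querying $H$ at both preimages of $y$ simultaneously, and conditioned on at most one preimage being queried, $H(x_0^y) \oplus H(x_1^y)$ is uniform from the adversary's perspective. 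A quantum Goldreich--Levin-style extractor (in the spirit of Section~\ref{sec: security scheme 1}) converts any predictor for $\sigma$ into a claw-finder. Since each coordinate of the honest ciphertext is $z_i' = m \oplus \sigma_i$, this hardcore-bit pseudorandomness yields CPA-security for the single-shot scheme, and a standard hybrid argument lifts it to the $L$-fold parallel scheme.

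The heart of the proof is perfect unexplainability, which I plan to establish by implementing the strategy sketched in Section~\ref{sec: comp oracles tech overview}. Fix non-uniform QPT algorithms $\mathsf{Explain}$ and $\mathsf{Verify}$ satisfying the soundness property in Definition~\ref{def: perfect unexp}. The first ingredient is a rigidity lemma for the single-shot scheme: running a compressed-phase-oracle simulation $\mathsf{Explain}^{\comp}$, the joint state over the output registers $(z_i', d_i, y_i)$, the work register, and the database $D$, conditioned on a fixed output, decomposes up to negligible error into three types of branches --- databases containing neither preimage of $y_i$ (contributing exactly $\tfrac12$ to the single-shot success probability of producing a valid equation, via uniform guessing), and databases containing exactly one of $x_0^{y_i}$ or $x_1^{y_i}$, whose amplitudes are approximately symmetric with the phase dictated by the equation. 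The analogous rigidity holds for $\mathsf{Verify}_k^{\comp} \circ \mathsf{Explain}^{\comp}$: because $\mathsf{Verify}$ rejects inconsistent triples except with exponentially small probability, whenever its output qubit is $1$ the post-$\mathsf{Verify}$ database must be symmetrically supported on the two preimages of $y_i$.

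Next I would construct an online claw-extractor: run $(\mathsf{Verify}_k \circ \mathsf{Explain})^{\comp}$, measure the output register $(z_i', d_i, y_i)$ after $\mathsf{Explain}$ and attempt to read one preimage of $y_i$ from the database, then run $\mathsf{Verify}$ and, conditioned on accept, measure the database again hoping to obtain the \emph{other} preimage. The rigidity lemma forces the post-accept database to be an approximately uniform superposition over the two preimages, so the second measurement returns a distinct preimage with at least constant probability conditioned on having found one in the first stage and on acceptance. The main obstacle will be showing that the intermediate measurement of the database after $\mathsf{Explain}$ does not catastrophically destroy the subsequent acceptance probability of $\mathsf{Verify}$; I would address this by arguing that any gap above $\tfrac12$ in the single-shot acceptance probability must be accounted for by the ``preimage-containing'' branches of the rigidity decomposition, so the first-stage preimage measurement succeeds with probability commensurate to this gap. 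Combined with the claw-free property of $\mathcal{F}$, this shows that the marginal single-coordinate acceptance probability is at most $\tfrac12 + \negl(\lambda)$.

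The last step is to lift the single-shot bound to the $L$-fold scheme via parallel repetition. A careful hybrid over coordinates, treating the state of the other $L-1$ coordinates (including their portions of the compressed oracle) as non-uniform quantum advice, converts the single-shot bound into an overall acceptance bound of $(\tfrac12 + \negl(\lambda))^L + \negl(\lambda)$; taking $L = \omega(\log \lambda)$ renders this negligible. Consequently, no non-uniform QPT pair $(\mathsf{Explain}, \mathsf{Verify})$ satisfying the soundness constraint can achieve non-negligible completeness, which is precisely perfect unexplainability in the sense of Definition~\ref{def: perfect unexp}.
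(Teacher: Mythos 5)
Your CPA-security outline is consistent with the paper (the pad is hardcore by a compressed-oracle argument plus quantum Goldreich--Levin, as in Section~\ref{sec: security scheme 1}), and your single-shot rigidity and online extraction ideas are on the right track. The gap is in the final step, where you pass from the single-shot bound of roughly $\tfrac12 + \negl$ to a negligible bound for the $L$-fold scheme.

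You propose a ``hybrid over coordinates, treating the state of the other $L-1$ coordinates (including their portions of the compressed oracle) as non-uniform quantum advice'' yielding a bound of $(\tfrac12 + \negl)^L$. This does not work, for two reasons. First, there is no coordinate-wise decomposition of the compressed oracle: all $L$ coordinates query a \emph{single} random function $H$, so the database register is shared and the branches corresponding to different choices of which preimages have been queried for the various $y_i$'s interfere with one another. Fixing the ``other $L-1$ coordinates'' as advice is not a well-defined operation on the compressed oracle state. Second, and more fundamentally, a multiplicative parallel-repetition bound is not available here: the paper explicitly identifies this as the central difficulty when $L>1$. The state after running $\mathsf{Explain}^{\comp}$ is a superposition over exponentially many database patterns (for each $i$, the database may contain $x_0^{y_i}$, $x_1^{y_i}$, or neither), and it is entirely possible for all these branches to interfere so favorably that $\mathsf{Verify}$ accepts with non-negligible probability even though, conditioned on measuring any \emph{particular} pattern, acceptance would be exponentially small. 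This is exactly why the naive ``measure the database and continue'' strategy does not obviously preserve acceptance probability.

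The paper's resolution is different in kind: it does not prove a parallel-repetition theorem at all. Instead (Lemma~\ref{lem: last}), it exhibits a polynomial-size family of projective measurements $M_{y,I_i}$ (nested: ``does the database contain a preimage of $y_1$?'', then ``of $y_2$ but not $y_1$?'', and so on up to $i=L/2$), and argues via a telescoping/union argument that at least \emph{one} of these measurements must project onto a state that both (a) contains a preimage, and (b) still leads $\mathsf{Verify}$ to accept with non-negligible probability. The engine driving this is Corollary~\ref{cor: 1}: any compressed-oracle state supported on databases that are missing preimages for at least $L/2$ of the indices can only pass $\Pi_{\mathsf{valid}}$ with probability at most $2^{-L/2}$. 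So if none of the $L/2$ measurements found a preimage, the residual ``all no'' branch would have to account for the acceptance probability, which is impossible. The extractor then picks one of these $O(L)$ measurements uniformly at random, giving an inverse-polynomial (not $(\tfrac12)^L$) extraction probability, which suffices for the contradiction with claw-freeness. You would need to supply some argument of this sort — your sketch does not contain it, and the product bound you assert in its place is not a valid substitute.
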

The rest of Section \ref{sec: unexp scheme} is dedicated to proving Theorem \ref{thm: 2}. For simplicity, we focus on the uniform version of Definition \ref{def: perfect unexp}. However, our construction also similarly satisfies the \emph{non-uniform} version, assuming the hardness of LWE against \emph{non-uniform} quantum adversaries.

CPA security is straightforward, and we show this in Section~\ref{sec:cpa}. From there on, we focus on showing perfect unexplainability. In Section \ref{sec: comp oracles}, we introduce Zhandry's compressed oracle technique. In Section \ref{sec: xor technical}, we prove technical results about the structure of strategies guessing the xor of two query outputs. In Section \ref{sec: the structure of strategies}, we specialize these results to the concrete setting of guessing ``equations''. In \ref{sec: extracting a claw}, we show how this structure can be used to extract a claw.

\subsection{CPA security}
\label{sec:cpa}
From now on, for ease of notation, when referring to the bit-decomposition of $x$, we simply write $x$ instead $\mathsf{BitDecomp}(x)$ when the context is clear. 

CPA security follows straightforwardly from the following ``hardcore bit'' property satisfied by $\mathcal{F}$, which says that any quantum polynomial-time adversary $\mathcal{A}$ has negligible advantage in the following game between a challenger and $\mathcal{A}$:

\paragraph{Hardcore bit property, Game 1:}
\begin{itemize}
    \item[(i)] The challenger samples $k$, and runs $\sampf(k, \cdot )$ on a uniform superposition of $b$'s to obtain the state
  $$ \frac{1}{\sqrt{|\mathcal{X}|}}\sum_{b\in\{0,1\} , x \in X} \sqrt{f_{k,b}'(x)(y)}\ket{b}\ket{x}\ket{y}\,,$$
    \item[(ii)] Measures the last register to get $y$. 
    \item[(iii)] Then, applies a Hadamard gate on the first two registers to get $z,d$ such that $z = d\cdot (x_0 \oplus x_1)$, where $x_0$ and $x_1$ are the pre-images of $y$. Sends $y,d$ to $\mathcal{A}$.
    \item[(iv)] $\mathcal{A}$ returns $z'$.
\end{itemize}
$\mathcal{A}$ wins if $z = z'$.

\vspace{2mm}
Suppose for a contradiction there was an adversary $\mathcal{A}$ that achieves non-negligible advantage in the above game. Then, the following algorithm $\mathcal{A}'$ recovers a claw. We use the notation $\mathcal{A}(y,d)$ to denote the output of $\mathcal{A}$ on input $y,d$.
\begin{itemize}
\item On input $k$, run $\sampf(k, \cdot )$ on a uniform superposition of $b$'s to obtain the state 
  $$ \frac{1}{\sqrt{|\mathcal{X}|}}\sum_{b\in\{0,1\} , x \in X} \sqrt{f_{k,b}'(x)(y)}\ket{b}\ket{x}\ket{y}\,,$$
  \item Measure the last two registers to obtain $x,y$ where $y \in \supp(f'_{k,b}(x))$ for some $b \in \{0,1\}$.
  \item Run the ``quantum Goldreich-Levin'' extraction algorithm using $\mathcal{A}(y, \cdot)$ as an oracle \cite{adcock2002quantum}. Recall that the ``quantum Goldreich-Levin'' extraction algorithm makes a single query to $\mathcal{A}(y, \cdot)$. Let $s$ be the output of this extraction algorithm.
  \item Output $(x, x \oplus s)$ as the claw.
\end{itemize}
Since $\mathcal{A}(y,d)$ guesses $d \cdot (x_0 \oplus x_1)$ with non-negligible advantage, the ``quantum Goldreich-Levin'' extraction algorithm outputs $s = x_0 \oplus x_1$ with non-negligible probability, which results in $\mathcal{A'}$ outputting a claw with non-negligible probability. Note that the reduction works even if $\mathcal{A}$ is non-uniform with quantum advice, since $\mathcal{A}'$ makes a single query to $\mathcal{A}$, so there is no issue with rewinding.

We now argue that a slight variation of the above hardcore bit property, which involves the random oracle $H$, also holds. We argue that no quantum polynomial-time adversary $\mathcal{A}^H$ (with access to the oracle $H$) has non-negligible advantage in the following game between a challenger and $\mathcal{A}^H$:
\paragraph{Hardcore bit property, Game 2:}
\begin{itemize}
    \item[(i)] The challenger samples $k$, and runs $\sampf(k, \cdot )$ on a uniform superposition of $b$'s to obtain the state
  $$ \frac{1}{\sqrt{|\mathcal{X}|}}\sum_{b\in\{0,1\} , x \in X} \sqrt{f_{k,b}'(x)(y)}\ket{b}\ket{x}\ket{y}\,,$$
    \item[(ii)] Measures the last register to get some outcome $y$. Then, queries the phase oracle for $H$ to create the state
    $$\frac{1}{\sqrt{2}} (-1)^{H(x_0)}\ket{0} \ket{x_0} + \frac{1}{\sqrt{2}}(-1)^{H(x_1)}\ket{1} \ket{x_1} \,,$$
    where $x_0$ and $x_1$ are the pre-images of $y$.
    \item[(iii)] Then, applies a Hadamard gate on the first two registers to get $z,d$ such that $z = d\cdot (x_0 \oplus x_1) \oplus H(x_0) \oplus H(x_1)$. Sends $y,d$ to $\mathcal{A}^H$.
    \item[(iv)] $\mathcal{A}^H$ returns $z'$.
\end{itemize}
$\mathcal{A}^H$ wins if $z = z'$.

\vspace{2mm}
To see that this ``hardcore bit'' property holds, suppose for a contradiction that there was a quantum polynomial-time $\mathcal{A}^H$ achieving non-negligible advantage. Notice that, trivially, there must also be an adversary $\mathcal{A}'^H$ that achieves non-negligible advantage at a slight variation of the game above, where the goal is to guess $d\cdot (x_0 \oplus x_1)$, instead of $d\cdot (x_0 \oplus x_1) \oplus H(x_0) \oplus H(x_1)$: either $\mathcal{A}^H$ already has this property, or the algorithm that outputs the opposite bit of $\mathcal{A}^H$ must have this property. However, notice that the new game no longer depends on $H$ since the distribution of $d,y$ is unaffected by the challenger's query to $H$ (which happens on the first qubit), and the winning condition does not depend on $H$. In fact, the new game is equivalent to Game 1. Thus, the adversary $\mathcal{A''}$ that simulates $\mathcal{A}'^H$ without making queries to $H$ (by using, for example, a $2q$-wise independent family of hash functions, where $q$ is the number of queries made by $\mathcal{A}'$) has identical (and thus non-negligible) advantage at Game 1, which is a contradiction. Hence, no quantum polynomial-time algorithm $\mathcal{A}^H$ can achieve non-negligible advantage at Game 2.

With the hardcore bit property of Game 2 in hand, CPA security of Construction~\ref{cons: unexplainable scheme} is straightforward. We argue CPA security when $L= 1$ in Construction~\ref{cons: unexplainable scheme}. CPA security for general polynomial $L$ follows by a standard hybrid argument. Notice that the distribution $(\pk, \mathsf{Enc}(\pk, 0) )$ is simply $(k, z,d,y )$, such that $z = d\cdot (x_0^y \oplus x_1^y) \oplus H(x_0) \oplus H(x_1)$, where $k,d,z,y$ are sampled exactly from the distribution of the hardcore bit property of Game 2. On the other hand, the distribution $(\pk, \mathsf{Enc}(\pk, 1))$ is $(k, z,d,y )$, such that $z \neq d\cdot (x_0^y \oplus x_1^y) \oplus H(x_0) \oplus H(x_1)$, where $k,z,d,y$ have the same distribution as before, except that $z$ is flipped. Clearly, distinguishing the two distributions is precisely equivalent to guessing the value of $d\cdot (x_0^y \oplus x_1^y) \oplus H(x_0) \oplus H(x_1)$, which is equivalent to Game 2.

\subsection{Compressed oracles}
\label{sec: comp oracles}
In this subsection, we formally introduce Zhandry's technique for recording queries \cite{Zhandry-how}. This section is loosely based on the explanation in \cite{Zhandry-how}. For a slightly more informal treatment, which carries most of the essence, we suggest starting from Section \ref{sec: comp oracles tech overview} in the technical overview.

\paragraph{Standard and Phase Oracles}
The quantum random oracle, which is the quantum analogue of the classical random oracle, is typically presented in one of two variations: as a \emph{standard} or as a \emph{phase} oracle. 

The standard oracle is a unitary acting on three registers: an $n$-qubit register representing the input to the function, an $m$-qubit register for writing the response, and a $m 2^n$ qubit register representing the truth table of the queried function $H: \{0,1\}^n \rightarrow \{0,1\}^m$. The algorithm that queries the standard oracle has access to the first two registers, while the third register, the oracle's state, is hidden from the algorithm except by making queries. The standard oracle unitary acts in the following way on standard basis states:
$$\ket{x}\ket{y}\ket{H} \mapsto \ket{x}\ket{y \oplus H(x)}\ket{H}\,.$$
For a uniformly random oracle, the oracle register is initialized in the uniform superposition $\frac{1}{\sqrt{m2^n}} \sum_{H} \ket{H}$.
This initialization is of course equivalent to having the oracle register be in a completely mixed state (i.e. a uniformly chosen $H$). This equivalence can be seen by just tracing out the oracle register. We denote the standard (uniformly random) oracle unitary by $\mathsf{StO}$. Moreover, for an oracle algorithm $A$, we will denote by $A^{\mathsf{StO}}$ the algorithm $A$ interacting with the standard oracle, implemented as above.

The phase oracle formally gives a different interface to the algorithm making the queries, but is equivalent to the standard oracle up to Hadamard gates. It again acts on three registers: an $n$-qubit register for the input, an $m$-qubit ``phase'' register, and a $m2^n$-qubit oracle register. It acts in the following way on standard basis states:
$$\ket{x}\ket{s}\ket{H} \mapsto (-1)^{s \cdot H(x)}\ket{x}\ket{s}\ket{H} \,.$$
For a uniformly random oracle, the oracle register is again initialized in the uniform superposition. One can easily see that the standard and phase oracles are equivalent up to applying a Hadamard gate on the phase register before and after a query. We denote the phase oracle unitary by $\mathsf{PhO}$. Moreover, for an oracle algorithm $A$, we will denote by $A^{\mathsf{PhO}}$ the algorithm $A$ interacting with the phase oracle.

\paragraph{Compressed oracle}
The \emph{compressed oracle} technique, introduced by Zhandry \cite{Zhandry-how}, is an equivalent way of implementing a quantum random oracle which (i) is efficiently implementable, and (ii) keeps track of the queried inputs in a meaningful way. This paragraph is loosely based on the explanation in \cite{Zhandry-how}.

In a compressed oracle, the oracle register does not represent the full truth table of the queried function. Instead, it represents a \emph{database} of queried inputs, and the values at those inputs. More precisely, if we have an upper bound $t$ on the number of queries, a database $D$ is represented as an element of the set $S^t$ where $S = (\{0,1\}^n \cup \{\perp\} ) \times \{0,1\}^m$. Each value in $S$ is a pair $(x,y)$: if $x \neq \perp$, then the pair means that the value of the function at $x$ is $y$, which we denote by $D(x) = y$; and if $x = \perp$, then the pair is not currently used, which we denote by $D(x) = \perp$. Concretely, let $l \leq t$. Then, for $x_1 < x_2 < \ldots < x_l$ and $y_1,\ldots, y_l$, the database representing $D(x_i) = y_i$ for $i \in [l]$, with the other $t-l$ points unspecified, is represented as
$$\Big( (x_1, y_1), (x_2,y_2), \ldots, (x_l, y_l), (\perp, 0^m), \ldots, (\perp, 0^m) \Big)$$
where the number of $(\perp, 0^m)$ pairs is $t-l$. We emphasize that in this database representation, the pairs are always ordered lexicographically according to the input value, and the $(\perp, 0^m)$ pairs are always at the end.

In order to define precisely the action of a compressed oracle query, we need to introduce some additional notation.

Let $|D|$ denote the number of pairs $(x,y)$ in database $D$ with $x \neq \perp$. Let $t$ be an upper bound on the number of queries. Then, for a database $D$ with $|D|<t$ and $D(x) = \perp$, we write $D \cup (x,y)$ to denote the new database obtained by deleting one of the $(\perp,0^m)$ pairs, and by adding the pair $(x,y)$ to $D$, inserted at appropriate location (to respect the lexicographic ordering of the input values). 

We also define a ``decompression'' procedure. For $x \in \{0,1\}^n$, $\mathsf{StdDecomp}_x$ is a unitary operation on the database register. If $D(x) = \perp$, it adds a uniform superposition over all pairs $(x,y)$ (i.e. it ``uncompressed'' at $x$). Otherwise, if $D$ is specified at $x$, and the corresponding $y$ register is in a uniform superposition, $\mathsf{StdDecomp}$ removes $x$ and the uniform superposition from $D$. If $D$ is specified at $x$, and the corresponding $y$ register is in a state orthogonal to the uniform superposition, then $\mathsf{StdDecomp}$ acts as the identity. More precisely,
\begin{itemize}
    \item For $D$ such that $D(x) = \perp$ and $|D| < t$, 
    $$ \mathsf{StdDecomp}_x \ket{D} = \frac{1}{\sqrt{2^m}} \sum_y \ket{D \cup (x,y)} $$
    \item For $D$ such that $D(x) = \perp$ and $D = t$, 
    $$\mathsf{StdDecomp}_x \ket{D} =\ket{D} $$.
    \item For $D$ such that $D(x) = \perp$ and $|D|<t$,
   \begin{equation}
    \mathsf{StdDecomp}_x \left(\frac{1}{\sqrt{2^n}}\sum_y (-1)^{z \cdot y} \ket{D \cup (x,y)} \right) = \begin{cases} \frac{1}{\sqrt{2^n}}\sum_y (-1)^{z \cdot y} \ket{D \cup (x,y)} \textnormal{ if } z \neq 0 \\
    \ket{D}   \textnormal{ if } z = 0
     \end{cases}
   \end{equation}
\end{itemize}
Note that we have specified the action of $\mathsf{StdDecomp}_x$ on an orthonormal basis of the database register (with a bound of $t$ on the size of the database). Moreover, it is straightforward to verify that $\mathsf{StdDecomp}_x$ maps this orthonormal basis to another orthonormal basis, and is thus a well-defined unitary. Note moreover that applying $\mathsf{StdDecomp}_x$ gives the identity. Let $\mathsf{StdDecomp}$ be the related unitary acting on all the registers $x,y, D$ which acts in the following way:
$$\mathsf{StdDecomp} \ket{x,y}\otimes \ket{D} =  \ket{x,y}\otimes \mathsf{StdDecomp}_x \ket{D} \,.$$

So far, we have considered a fixed upper bound on the number of queries. However, one of the advantages of the compressed oracle technique is that an upper bound on the number of queries does not need to be known in advance. To handle a number of queries that is not fixed, we defined the procedure $\mathsf{Increase}$ which simply increases the upper bound on the size of the database by initializing a new register in the state $\ket{(\perp, 0^n)}$, and appending it to the end. Formally, $\mathsf{Increase} \ket{x,y}\otimes \ket{D} \mapsto \ket{x,y}\otimes \ket{D}\ket{(\perp, 0^n)}$.

Now, define the unitaries $\mathsf{CStO}'$ and $\mathsf{CPhO}'$ acting in the following way:
\begin{align}
    \mathsf{CStO}' \ket{x,y} \otimes \ket{D} &= \ket{x,y\oplus D(x)} \otimes  \ket{D} \nonumber\\
    \mathsf{CPhO}' \ket{x,y} \otimes \ket{D} &= (-1)^{y \cdot D(x)}\ket{x,y} \otimes  \ket{D}
\end{align}
Finally, we define the compressed standard and phase oracles $\mathsf{CStO}$ and $\mathsf{CPhO}$ as:
\begin{align}
    \mathsf{CStO} &= \mathsf{StdDecomp} \circ \mathsf{CStO}' \circ \mathsf{StdDecomp} \circ \mathsf{Increase} \nonumber\\
    \mathsf{CPhO} &= \mathsf{StdDecomp} \circ \mathsf{CPhO}' \circ \mathsf{StdDecomp} \circ \mathsf{Increase}
\end{align}

For an oracle algorithm $A$, we denote by $A^{\mathsf{CStO}}$ (resp. $A^{\mathsf{CPhO}}$), the algorithm $A$ run with the compressed standard (resp. phase) oracle, implemented as described above. The following lemma establishes that regular and compressed oracles are equivalent.
\begin{lem}[\cite{Zhandry-how}]
\label{lem: compressed oracle equiv}
For any oracle algorithm $A$, and any input state $\ket{\psi}$, 
$\Pr[A^{\mathsf{StO}}(\ket{\psi}) = 1] = \Pr[A^{\mathsf{CStO}}(\ket{\psi}) = 1]$. Similarly, for any oracle algorithm $B$, $\Pr[B^{\mathsf{PhO}}(\ket{\psi}) = 1] = \Pr[B^{\mathsf{CPhO}}(\ket{\psi}) = 1]$.
\end{lem}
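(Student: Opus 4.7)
The plan is to prove the equivalence by exhibiting an isometry $V$ acting only on the oracle register, between the compressed-database space used by $A^{\mathsf{CStO}}$ and the full truth-table space used by $A^{\mathsf{StO}}$. First I would verify that $V$ maps the initial empty database $\ket{\emptyset}$ to the uniform superposition $\frac{1}{\sqrt{2^{m \cdot 2^n}}}\sum_H \ket{H}$ used to purify a uniformly random oracle. Then I would establish the intertwining relation
\begin{equation}
\mathsf{StO} \circ (I_{XY} \otimes V) \;=\; (I_{XY} \otimes V) \circ \mathsf{CStO},
\end{equation}
and its phase-oracle analog. Since the algorithm's own internal unitaries act as the identity on the oracle register, they commute with $I_{XY} \otimes V$ trivially. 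An induction on the alternating sequence of internal unitaries and oracle queries then shows that the joint algorithm-plus-oracle states in the two simulations remain related by $I_{A} \otimes V$ throughout. Tracing out the oracle register (which the algorithm never sees) yields identical reduced density matrices on the algorithm's registers at every step, and in particular identical distributions over the final output measurement.

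To define $V$, I would use the product factorization $\frac{1}{\sqrt{2^{m \cdot 2^n}}}\sum_H \ket{H} = \bigotimes_{x \in \{0,1\}^n} \ket{+^m}_x$, where $\ket{+^m} = \frac{1}{\sqrt{2^m}}\sum_y \ket{y}$. On basis compressed databases $\ket{D}$, set $V\ket{D}$ to be the truth-table state in which the $x$-th register holds $\ket{D(x)}$ whenever $D(x) \neq \perp$ and $\ket{+^m}$ whenever $D(x) = \perp$. Equivalently, $V = \bigl(\prod_x \mathsf{StdDecomp}_x\bigr) \circ \iota$, where $\iota$ embeds compressed databases into the truth-table space by filling $\perp$ slots with a fixed default, and the product over $x$ of $\mathsf{StdDecomp}_x$ is well-defined because the $\mathsf{StdDecomp}_x$ for distinct $x$ act on disjoint uncompressed registers and hence commute. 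Unitarity of each $\mathsf{StdDecomp}_x$ on its subspace gives that $V$ is an isometry.

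The intertwining relation I would verify on basis states $\ket{x,y}\ket{D}$. By construction, $\mathsf{CStO} = \mathsf{StdDecomp} \circ \mathsf{CStO}' \circ \mathsf{StdDecomp} \circ \mathsf{Increase}$, and the outer $\mathsf{StdDecomp}$ factors are precisely the building blocks of $V$; they can be pushed through the isometry, leaving only $\mathsf{CStO}'$ acting on the $x$-th register in the uncompressed representation. On that register, $\mathsf{CStO}'$ xors $D(x)$ into the response register, which, after the action of $V$, is exactly the action of $\mathsf{StO}$. The $\mathsf{Increase}$ step, which appends a fresh $\ket{\perp}$ slot to the database, is absorbed by $V$ since a $\perp$ slot maps to $\ket{+^m}$ on a fresh truth-table register, and appending a fresh $\ket{+^m}$ does not alter the image in the uncompressed space relevant to future queries. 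The phase-oracle statement then follows by conjugating with Hadamards on the response register, which commute with $V$.

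The main technical obstacle is bookkeeping around $\mathsf{Increase}$: formally the compressed database's Hilbert space grows with each query, while the truth-table space is fixed. I would sidestep this by fixing an upper bound $t$ on queries (polynomial in $A$'s runtime, which suffices since $A$ is polynomial-time), padding the initial compressed database to size $t$ at the outset, and observing that $\mathsf{Increase}$ acts as the identity on such a padded state. A secondary subtlety is that the image of $V$ is a proper subspace of the full truth-table space, but this does not matter for the equivalence argument since we only need $V$ to be an isometry (not a unitary) and since all states encountered in the compressed simulation stay within the compressed-database space by construction.
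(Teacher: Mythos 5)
The paper states this lemma as a citation to Zhandry without reproving it, so there is no in-paper argument to compare against; I evaluate your proposal on its own merits. Your high-level strategy --- an intertwining isometry acting only on the oracle register, induction over alternating internal unitaries and queries, then tracing out the oracle register --- is the correct skeleton, and is indeed how this equivalence is established.

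The gap is in the claimed isometry. The map you describe, sending a compressed-database basis state $\ket{D}$ to the truth-table state with $\ket{D(x)}$ at specified points and $\ket{+^m}$ at unspecified points, is \emph{not} an isometry on the full compressed-database register. Take $D_1 = \emptyset$ and $D_2 = ((x, 0^m), (\perp,0^m),\ldots)$: these are orthogonal basis states, yet $\braket{V D_1 | V D_2} = \braket{+^m | 0^m} = 2^{-m/2} \neq 0$, and the same failure occurs whenever two databases agree on shared specified points but differ in \emph{which} points are specified. Your ``equivalent'' expression $V = \bigl(\prod_x \mathsf{StdDecomp}_x\bigr)\circ\iota$ is not in fact equal to this naive $V$ (applied to $D_2$, $\mathsf{StdDecomp}_x$ sends the zero-Fourier-mode component of $\ket{0^m}$ to the \emph{unspecified} slot $\ket{\perp}$, not to $\ket{+^m}$ at a specified slot), and you never verify the asserted equivalence, so appealing to the unitarity of $\mathsf{StdDecomp}_x$ does not repair the argument. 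You flag a subtlety about the \emph{image} of $V$ being a proper subspace, but that is harmless; the real issue is on the \emph{domain} side --- this $V$ fails to preserve inner products.

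The missing idea is the \emph{invariant subspace}: by the sandwiching with $\mathsf{StdDecomp}$ in the definition of $\mathsf{CStO}$, the compressed simulation maintains the invariant that the value register at any specified point has zero amplitude on the zero Fourier mode, i.e.\ is orthogonal to $\ket{+^m}$. (This is precisely why the paper singles out the set $\mathcal{S}_{\mathsf{comp}}$ of states whose stored value registers carry the $(-1)^{w}$ phase.) Restricted to this invariant subspace, the naive $V$ \emph{is} an isometry, because the offending cross terms $\braket{+^m|\phi^{(x)}}$ vanish. So the proof needs an additional lemma --- that $\mathsf{CStO}$ preserves the invariant subspace --- before the intertwining induction can be run. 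That invariant is the conceptual crux of the compressed-oracle technique and cannot be glossed over.
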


In the rest of the paper, we choose to work with \emph{phase} oracles and compressed \emph{phase} oracles. Moreover, to use a more suggestive name, we will denote the compressed phase oracle $\mathsf{CPhO}$ by $\comp$.

\subsection{Guessing the XOR of two query outputs}
\label{sec: xor technical}
In this section, we formalize the main technical ingredient that will allow us to characterize the structure of strategies that produce valid equations. We do this via a few related technical lemmas. Lemma \ref{lem: xor} is a general theorem which relates the probability of guessing the xor of the query outputs at two points, to the structure of a strategy that achieves such probability of guessing. We will not use Lemma \ref{lem: xor} directly in our proof of security, but we choose to include it for its generality, as it may be of independent interest. Lemmas \ref{lem: technical} and \ref{lem: structure xor high success} are more specialized, and they establish properties of strategies in certain regimes of guessing probability. We will employ the latter lemmas in our security proof.

Let $n, L \in \mathbb{N}$. In this section, we consider oracle algorithms querying functions in $\mathsf{Bool}(n)$ (i.e. $n$ bit input, and single bit output).

We introduce some notation. When running a compressed oracle simulation, we denote by $\mathsf{O}$ the database register. For convenience, we use the following notation for the database register: for a database $D$ (of pairs in $\{0,1\}^n \times \{0,1\}$), and a subset $X \subseteq \{0,1\}^n$, we write:
$$\ket{D \cup X} := \sum_{w_x \in \{0,1\} : x \in X} (-1)^{w_x} \ket{D \cup \{(x,w_x): x \in X\}} \,.$$
To simplify further, we write $\ket{D \cup x} = \ket{D \cup \{x\}}$. For $x\in \{0,1\}^n$, we concisely write $D \ni x$ (resp. $D \niton x$) to mean that $D$ contains (resp. does not contain) a pair $(x,v)$ for some $v$.

In the following lemma, $P$ is an oracle algorithm acting on an $L$-qubit register $\mathsf{T}$, the output register; $\mathsf{W}$, a work register; $\mathsf{X}$ and $\mathsf{E}$, the query registers (where $\mathsf{X}$ is an $n$-qubit register, and $\mathsf{E}$ is a $1$-qubit register). For ease of notation, we will write $\ket{t,x,e,w} := \ket{t}\ket{x}\ket{e}\ket{w}$.

\begin{lem}
\label{lem: xor}
Let $n, L \in \mathbb{N}$. For $i \in [L]$, $b \in \{0,1\}$, let $x_{b}^i \in \{0,1\}^n $. Let $P$ be an oracle algorithm, and $\ket{\psi}$ a state. Let $\epsilon >0$. Suppose
$$\Pr_{H \gets \mathsf{Bool}(n)}[t_i = H(x_0^i) \oplus H(x_1^i) \textnormal{ for all } i \in [L]: t \gets P^H \ket{\psi}] \geq  1 -\epsilon\,.$$
Let $\ket{\psi_{\textnormal{final}}}$ be the state of $P^{\comp} \ket{\psi}$ just before measurement of the output. Then, for any fixed $i$, there exists a (not necessarily normalized) $\ket{\psi'_{\textnormal{final}}}$ such that:
\begin{align}
\ket{\psi'_{\textnormal{final}}} = &\sum_{\substack{t \in \{0,1\}^L,x,e,w \\ D \niton x_0^i, x_1^i}}\alpha_{t,x,e,w,D}  \ket{t,x,e,w} \sum_{b \in \{0,1\}} (-1)^{b \cdot t_i}\ket{D \cup x_{b}^i} \\
+&\sum_{\substack{t \in \{0,1\}^L,x,e,w \\ D \niton x_0^i, x_1^i}} \beta_{t,x,e,w,D} \ket{t,x,e,w}\Big(\ket{D} + (-1)^{t_i} \ket{D \cup \{x_0^i,x_1^i\}} \Big)
\end{align}
for some $\alpha_{t,x,e,w,D}$ and $\beta_{t,x,e,w,D}$, and
$$ \| \ket{\psi_{\textnormal{final}}} - \ket{\psi'_{\textnormal{final}}} \|^2 \leq 2\epsilon \,.$$
\end{lem}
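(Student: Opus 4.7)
The plan is to expand $\ket{\psi_{\textnormal{final}}}$ locally around the pair $\{x_0^i, x_1^i\}$ in the compressed-oracle database, and then turn the hypothesis on the XOR guess into a quantitative closeness statement in this local Hilbert space. First I would partition the possible database configurations per block $(t,x,e,w,D)$ with $D \niton x_0^i, x_1^i$ into the four orthogonal sub-cases in which the database contains neither, only $x_0^i$, only $x_1^i$, or both pre-images, writing the state as a sum over blocks of $a_0\ket{D} + a_1\ket{D \cup x_0^i} + a_2\ket{D \cup x_1^i} + a_3\ket{D \cup \{x_0^i, x_1^i\}}$ in the notation where a queried point is stored as a $\ket{-}$-valued pair. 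Because the compressed phase oracle only ever inserts queried points as $\ket{-}$-valued pairs, this exhausts the support of $\ket{\psi_{\textnormal{final}}}$ on the database register.

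Next I would translate the hypothesis into a quantitative statement about the amplitudes. By the equivalence of standard and compressed oracles (Lemma \ref{lem: compressed oracle equiv}), $\Pr_H[t_i \neq H(x_0^i) \oplus H(x_1^i)]$ equals the probability, in the compressed-oracle simulation, of obtaining $(v_0,v_1)$ with $v_0 \oplus v_1 \neq t_i$ after applying $\mathsf{StdDecomp}_{x_0^i} \circ \mathsf{StdDecomp}_{x_1^i}$ to the pre-image slots and measuring the resulting value registers in the computational basis. Expanding the decompressed amplitudes in terms of $a_0,a_1,a_2,a_3$ and repeatedly using the identity $|u+v|^2 + |u-v|^2 = 2|u|^2 + 2|v|^2$ produces a per-block failure contribution of the form $\tfrac{1}{2}|a_0 - 2(-1)^{t_i} a_3|^2 + |a_1 - (-1)^{t_i} a_2|^2$, and summing over blocks gives $\sum_{\text{blocks}}\bigl[\tfrac{1}{2}|a_0 - 2(-1)^{t_i} a_3|^2 + |a_1 - (-1)^{t_i} a_2|^2\bigr] \leq \epsilon$.

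I would then construct $\ket{\psi'_{\textnormal{final}}}$ by choosing, per block, $\alpha_{t,x,e,w,D} := \tfrac{a_1 + (-1)^{t_i} a_2}{2}$, i.e.\ the symmetric part of the two single-pre-image amplitudes with the appropriate relative phase, together with a suitable $\beta_{t,x,e,w,D}$ approximating the $(a_0, a_3)$ sector by the direction $\ket{D} + (-1)^{t_i}\ket{D \cup \{x_0^i, x_1^i\}}$. With this symmetrized $\alpha$, the single-pre-image contribution to $\|\ket{\psi_{\textnormal{final}}} - \ket{\psi'_{\textnormal{final}}}\|^2$ is exactly $|a_1 - (-1)^{t_i} a_2|^2$ per block, which is already pointwise bounded by the corresponding term of $p_{\textnormal{fail}}$ (with a factor-of-two slack absorbed by the $2\epsilon$ budget).

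The main obstacle, and where the bulk of the work lies, is controlling the $(a_0, a_3)$ sector. The bound $p_{\textnormal{fail}} \leq \epsilon$ constrains the direction $a_0 - 2(-1)^{t_i} a_3$, whereas the lemma's target structure pins down $a_0 - (-1)^{t_i} a_3$; these are distinct linear functionals on the $(a_0, a_3)$-plane, so the naive orthogonal projection onto the prescribed subspace does not give a pointwise bound. The plan is to choose $\beta$ non-orthogonally, so that the error vector $(\delta_{a_0}, \delta_{a_3})$ is engineered to align with the quadratic form governing $p_{\textnormal{fail}}$ rather than with the orthogonal complement of the target subspace, and then to close the argument by a parallelogram-law estimate that bounds the resulting $(a_0, a_3)$ contribution by at most twice the corresponding term of $p_{\textnormal{fail}}$. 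Combining the two sectors, summing across blocks, and discarding the negligible weight on databases containing both preimages yields the claimed bound $\|\ket{\psi_{\textnormal{final}}} - \ket{\psi'_{\textnormal{final}}}\|^2 \leq 2\epsilon$.
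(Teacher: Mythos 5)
Your block decomposition and per-block failure computation are both internally consistent, but the ``distinct linear functionals'' obstacle you flag in the $(a_0,a_3)$ sector is a normalization artifact rather than a genuine difficulty, and the non-orthogonal-projection fix you sketch cannot close it. You read $\ket{D\cup X}:=\sum_{w_x}(-1)^{w_x}\ket{D\cup\{(x,w_x):x\in X\}}$ literally, so that $\ket{D\cup x_b^i}$ has norm $\sqrt{2}$ and $\ket{D\cup\{x_0^i,x_1^i\}}$ has norm $2$; in these coordinates your per-block failure contribution $\tfrac12|a_0-2(-1)^{t_i}a_3|^2+|a_1-(-1)^{t_i}a_2|^2$ is correct, and the lemma's zero/both-branch structure $\beta(\ket{D}+(-1)^{t_i}\ket{D\cup\{x_0^i,x_1^i\}})$ translates to $a_0=(-1)^{t_i}a_3$, which indeed pins down a different direction. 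However, the paper's own proof uses the normalized convention (an implicit $2^{-|X|/2}$ prefactor in $\ket{D\cup X}$): this is forced by the uniform $\tfrac12$ that appears on \emph{all three} compressed branches after $\mathsf{StdDecomp}_{x_0^i}\circ\mathsf{StdDecomp}_{x_1^i}$, and by the resulting uniform $\tfrac14$ in the failure expression. With normalized coordinates $(\gamma,\alpha_0,\alpha_1,\beta)$ the failure term is $\tfrac12|\gamma-(-1)^{t_i}\beta|^2+\tfrac12|\alpha_0-(-1)^{t_i}\alpha_1|^2$ and the target constraints are exactly $\gamma=(-1)^{t_i}\beta$ and $\alpha_0=(-1)^{t_i}\alpha_1$, so the two functionals in each sector coincide. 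Under your literal reading the lemma would in fact be false: the normalized state $\tfrac1{\sqrt2}\ket{D}+\tfrac{(-1)^{t_i}}{\sqrt2}\ket{D\cup\{x_0^i,x_1^i\}}$ has failure probability $0$ but strictly positive distance to the unnormalized one-dimensional target family.

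The proposed fix is mathematically impossible because the freedom you hope to exploit does not exist. For any choice of the lemma's coefficient $\beta'$, the error vector satisfies $\delta_{a_0}-(-1)^{t_i}\delta_{a_3}=(a_0-\beta')-(-1)^{t_i}(a_3-(-1)^{t_i}\beta')=a_0-(-1)^{t_i}a_3$, a quantity independent of $\beta'$. The error is thus confined to a fixed affine line, and no ``non-orthogonal'' choice of $\beta'$ can align $(\delta_{a_0},\delta_{a_3})$ with the quadratic form governing $p_{\mathrm{fail}}$. The minimum of $|\delta_{a_0}|^2+4|\delta_{a_3}|^2$ along that line is $\tfrac45|a_0-(-1)^{t_i}a_3|^2$, which is not bounded by $\tfrac12|a_0-2(-1)^{t_i}a_3|^2$ pointwise (the latter vanishes when $a_0=2(-1)^{t_i}a_3\neq 0$ while the former does not), nor in aggregate, since the difference of the two functionals is $(-1)^{t_i}a_3$ and the weight $\sum 4|a_3|^2$ on the double-pre-image branch may be a constant. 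Once the normalization is corrected, the constraints agree and the bound is immediate; in fact the paper's proof does not even symmetrize, but simply overwrites one amplitude in each sector (setting $\alpha_0'=(-1)^{t_i}\alpha_1$ and $\gamma'=(-1)^{t_i}\beta$), giving $\|\ket{\psi_{\mathrm{final}}}-\ket{\psi'_{\mathrm{final}}}\|^2=\sum|\alpha_0-(-1)^{t_i}\alpha_1|^2+\sum|\gamma-(-1)^{t_i}\beta|^2\leq 2\epsilon$ directly from the failure computation.
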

It is straightforward to see that the above also implies that, upon normalization, $\ket{\psi'_{\textnormal{final}}}$ is $O(\sqrt{\epsilon})$-close to $\ket{\psi_{\textnormal{final}}}$ in Euclidean distance.

\begin{proof}[Proof of Lemma \ref{lem: xor}]
Fix $i$. Then, we can write the state of $P^{\comp} \ket{\psi}$, just before measurement of the output as
\begin{align}
  \ket{\psi_{\textnormal{final}}} = & \sum_{\substack{t \in \{0,1\}^L,x,e,w \\ D \niton x_0^i, x_1^i}}\sum_{b \in \{0,1\}}\alpha_{t,x,e,w,D,b} \ket{t,x,e,w} \ket{D \cup x_{b}^i} \nonumber\\
+&\sum_{\substack{t \in \{0,1\}^L,x,e,w \\ D \niton x_0^i, x_1^i}} \beta_{t,x,e,w,D} \ket{t,x,e,w} \ket{D \cup \{x_0^i,x_1^i\}} \nonumber\\
+&\sum_{\substack{t \in \{0,1\}^L,x,e,w \\ D \niton x_0^i, x_1^i}} \gamma_{t,x,e,w,D} \ket{t,x,e,w} \ket{D} \label{eq: 17}
\end{align}
for some $\alpha_{t,x,e,w,D,b}$, $\beta_{t,x,e,w,D}$ and $\gamma_{t,x,e,w,D}$.

By the equivalence of the standard and compressed oracle simulations (Lemma \ref{lem: compressed oracle equiv}), we have that, for any $i$, $\Pr_{H \gets \mathsf{Bool}(n)}[t_i = H(x_0^i) \oplus H(x_1^i): t_i \gets P^H \ket{\psi}]$ is equal to $\Pr[t_i = D(x_0^i) \oplus D(x_1^i)]$ where $t_i, D(x_0^i),D(x_1^i)$ are distributed as in a compressed oracle simulation. In particular, this is precisely the same as $\Pr[t_i = v_0 \oplus v_1]$, where $t_i = v_0 \oplus v_1$ are distributed as in the following process:
\begin{itemize}
    \item Run $P^\comp \ket{\psi}$, and obtain $t$.
    \item Apply $\mathsf{StdDecomp}_{x_0^i} \circ \mathsf{StdDecomp}_{x_1^i}$ to the leftover state.
    \item Measure the database register to obtain the values $v_0$ and $v_1$ at $x_0^i$ and $x_1^i$ respectively. Output $t, v_0, v_1$.
\end{itemize}

Now, 
\begin{align}
    \mathsf{StdDecomp}_{x_0^i} \circ \mathsf{StdDecomp}_{x_1^i} \ket{\psi_{\textnormal{final}}} = & \frac12 \sum_{\substack{t \in \{0,1\}^L,x,e,w \\ D \niton x_0^i, x_1^i}}\sum_{b \in \{0,1\}}\alpha_{t,x,e,w,D,b}\sum_{v_0, v_1 \in \{0,1\}} (-1)^{v_b} \ket{t,x,e,w} \ket{D \cup \{(x_{0}^i, v_0), (x_1^i, v_1)\}} \nonumber\\
+& \frac12 \sum_{\substack{t \in \{0,1\}^L,x,e,w \\ D \niton x_0^i, x_1^i}} \beta_{t,x,e,w,D} \sum_{v_0, v_1 \in \{0,1\}} (-1)^{v_0+v_1} \ket{t,x,e,w} \ket{D \cup \{(x_{0}^i, v_0), (x_1^i, v_1)\}} \nonumber\\
+& \frac12 \sum_{\substack{t \in \{0,1\}^L,x,e,w \\ D \niton x_0^i, x_1^i}} \gamma_{t,x,e,w,D} \sum_{v_0, v_1 \in \{0,1\}} \ket{t,x,e,w} \ket{D \cup \{(x_{0}^i, v_0), (x_1^i, v_1)\}} \label{eq: 22}
\end{align}

Then, 
\begin{align}
  \epsilon &\geq \Pr_{H \gets \mathsf{Bool}(n)}[t_j \neq H(x_0^j) \oplus H(x_1^j) \textnormal{ for some } j \in [L]: t \gets P^H \ket{\psi}] \nonumber\\
  &\geq  \Pr_{H \gets \mathsf{Bool}(n)}[t_i \neq H(x_0^i) \oplus H(x_1^i): t \gets P^H \ket{\psi}]\nonumber \\
  &= \frac14 \sum_{\substack{t \in \{0,1\}^L,x,e,w \\ D \niton x_0^i, x_1^i}} \sum_{\substack{v_0, v_1: \\ v_0 \oplus v_1 \neq t_i}} \big| \sum_b (-1)^{v_b}  \alpha_{t,x,e,w,D,b} + (-1)^{v_0 + v_1}\beta_{t,x,e,w,D}  + \gamma_{t,x,e,w,D} \big|^2 \nonumber\\
  &=\frac14 \sum_{\substack{t \in \{0,1\}^L,x,e,w \\ D \niton x_0^i, x_1^i}} \sum_{\substack{v_0, v_1: \\ v_0 \oplus v_1 \neq t_i}} \big| \sum_b (-1)^{v_0 + b \cdot \bar{t_i}}  \alpha_{t,x,e,w,D,b} + (-1)^{\bar{t_i}} \beta_{t,x,e,w,D} + \gamma_{t,x,e,w,D} \big|^2 \,, \label{eq: 23}
\end{align}
where the first line is by the hypothesis of the lemma, and the third line follows from calculating $\Pr[t_i \neq v_0 \oplus v_1]$ in expression \eqref{eq: 22}.

We can further simplify expression \eqref{eq: 23} as:
\begin{align}
    \eqref{eq: 23} &= \frac12 \sum_{\substack{t \in \{0,1\}^L,x,e,w \\ D \niton x_0^i, x_1^i}} \big| \sum_b (-1)^{b \cdot \bar{t_i}}  \alpha_{t,x,e,w,D,b} \big |^2 \nonumber \\
    &+ \frac12 \sum_{\substack{t \in \{0,1\}^L,x,e,w \\ D \niton x_0^i, x_1^i}} \big|  (-1)^{ \bar{t_i}}  \beta_{t,x,e,w,D} + \gamma_{t,x,e,w,D} \big |^2 \nonumber \\
    &= \frac12 \sum_{\substack{t \in \{0,1\}^L,x,e,w \\ D \niton x_0^i, x_1^i}} \big| \alpha_{t,x,e,w,D,0} - (-1)^{t_i} \alpha_{t,x,e,w,D,1}\big |^2 \nonumber \\
    &+ \frac12 \sum_{\substack{t \in \{0,1\}^L,x,e,w \\ D \niton x_0^i, x_1^i}} \big| \gamma_{t,x,e,w,D} - (-1)^{t_i}  \beta_{t,x,e,w,D}  \big |^2 \,, \label{eq: 24}
\end{align}
where the first equality is a crucial part of the calculation, which follows from the following simple fact: for any complex numbers $\beta, \gamma$, 
$$|\beta + \gamma|^2 + |\beta - \gamma|^2 = 2|\beta|^2 + 2| \gamma|^2 \,.$$
Now, consider the (not necessarily normalized) state $\ket{\psi'_{\textnormal{final}}}$ which is defined as in \eqref{eq: 17}, except we additionally set $\alpha_{t,x,e,w,D,0} = (-1)^{t_i} \alpha_{t,x,e,w,D,1}$ for all $t,x,e,w,D$, and $\gamma_{t,x,e,w,D} = (-1)^{t_i} \beta_{t,x,e,w,D}$. Then, notice that
\begin{align}
    \|\ket{\psi_{\textnormal{final}}} - \ket{\psi'_{\textnormal{final}}}\|^2 &=  \sum_{\substack{t \in \{0,1\}^L,x,e,w \\ D \niton x_0^i, x_1^i}} |\alpha_{t,x,e,w,D,0} - (-1)^{t_i} \alpha_{t,x,e,w,D,1}  |^2  +
   \sum_{\substack{t \in \{0,1\}^L,x,e,w \\ D \niton x_0^i, x_1^i}} |\gamma_{t,x,e,w,D} - (-1)^{t_i} \beta_{t,x,e,w,D} |^2  \nonumber\\
    &\leq 2 \epsilon \,. \label{eq: 25}
\end{align}
where the last line follows from Equations \eqref{eq: 23} and \eqref{eq: 24}.

\end{proof}

Lemmas \ref{lem: technical} and \ref{lem: structure xor high success} below are technical lemmas, which will be useful in the security proof. In words, Lemma \ref{lem: technical} says the following. As earlier, for $i \in [L]$, let $(x_0^i, x_1^i) \in \{0,1\}^n \times \{0,1\}^n$. Consider a state resulting from a compressed oracle simulation of an algorithm acting on the registers $\mathsf{T}, \mathsf{X}, \mathsf{E}, \mathsf{W}$, as introduced earlier. Suppose that the state only has weight on databases such that, for at least $k$ of the indices $i$, the database contains neither $x_0^i$ nor $x_1^i$. Then, $\Pr[t_i = D(x_0^i) \oplus D(x_1^i) \textnormal{ for all } i] $ is at most $2^{-k}$. In other words, for each index  $i$ such that neither $x_0^i$ nor $x_1^i$ are present in the database, the probability of correctly guessing all of the xor's correctly, drops by a factor of $2$.

In order to state this lemma formally, in the form that will be useful later, we introduce some additional notation. Denote by $\mathcal{S}_{\mathsf{comp}}$ the set of all (normalized) states of the form:
$$ \sum_{t,x,e,w,X \subseteq \{0,1\}^n } \alpha_{t,x,e,w, X} \ket{t,x,e,w}\ket{D_X} \,.$$
These are states that can be reached by running a compressed oracle simulation. Next, for a set $X = \{x_b^i \in \{0,1\}^n: i\in [L], b \in \{0,1\} \}$, define the projector $\Pi_{\mathsf{no-claw}(X)} := \sum_{D: D \nsupseteq \{x_0^i, x_1^i\}} \ket{D}\bra{D}_{\mathsf{O}}$. 
Denote by $\mathcal{S}_{\mathsf{no-claw}(X)}$ the set of (normalized) states $\ket{\psi}$ on registers $\mathsf{TXEWO}$ such that $\Pi_{\mathsf{no-claw}(X)} \ket{\psi} =\ket{\psi} $. This is the set of states that have zero weight on databases containing a ``claw''. Further, for $l \in [L]$, define the projector 
$$\Pi_{X, \leq l} := \sum_{\substack{D: \\ |\{ i: D \cap \{x_0^i, x_1^i\} \neq \emptyset\} |\leq l  }} \ket{D}\bra{D}_{\mathsf{O}}\,.$$
Denote by $\mathcal{S}_{X, \leq l}$ the set of (normalized) states $\ket{\psi}$ on registers $\mathsf{TXEWO}$ such that $\Pi_{X, \leq l} \ket{\psi} =\ket{\psi}$. In words, this is the set of states with weight only on databases which, for at most $l$ of the indices in $[L]$, contain any of $x_0^i$ or $x_1^i$. 
Let $\mathsf{DecompAll}$ be the map that ``decompresses'' at every input, i.e.
$$ \mathsf{DecompAll} := \bigotimes_{x \in \{0,1\}^n} \mathsf{StdDecomp}_x \,.$$
Finally, define 
$$\Pi_{\mathsf{valid}} := \sum_{\substack{t, D:\\ \forall i, \,t_i = D(x_0^i) \oplus D(x_1^i)}} \ket{t}\bra{t} \otimes \ket{D}\bra{D}\,.$$ 

\begin{lem}
\label{lem: technical}
Let $n, L \in \mathbb{N}$. Let $X = \{x_b^i \in \{0,1\}^n: i\in [L], b \in \{0,1\} \}$. Let $\ket{\psi} \in \mathcal{S}_{\mathsf{comp}} \cap \mathcal{S}_{\mathsf{no-claw}(X)} \cap \mathcal{S}_{X, \leq l}$. Then, 
$$\| \Pi_{\mathsf{valid}} \mathsf{DecompAll} \ket{\psi}\|^2 \leq  2^{-L+l} \,. $$
\end{lem}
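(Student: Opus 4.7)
The plan is to expand $\mathsf{DecompAll}\ket{\psi}$ explicitly and reduce the bound to a Fourier-type estimate. Write
$$\ket{\psi} = \sum_{t,x,e,w,Y\subseteq\{0,1\}^n} \alpha_{t,x,e,w,Y}\,\ket{t,x,e,w}\otimes\ket{D_Y}\,,$$
where I use $Y$ to denote the set of queried inputs (to avoid collision with the pre-image set $X$ of the statement). The hypotheses $\ket{\psi}\in\mathcal{S}_{\mathsf{no-claw}(X)}\cap\mathcal{S}_{X,\leq l}$ restrict the sum to those $Y$ with $\{x_0^i,x_1^i\}\not\subseteq Y$ for every $i$ and with $|\mathcal{I}(Y)|\leq l$, where $\mathcal{I}(Y):=\{i:Y\cap\{x_0^i,x_1^i\}\neq\emptyset\}$. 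I work under the natural assumption that the $L$ pairs $\{x_0^i,x_1^i\}_{i\in[L]}$ are pairwise disjoint, which is the case of interest in the application. Applying $\mathsf{DecompAll}$ turns $\ket{D_Y}$ into the product state $\ket{H_Y}:=\bigotimes_{x\in Y}\ket{-}_x\otimes\bigotimes_{x\notin Y}\ket{+}_x$, while $\Pi_{\mathsf{valid}}=\sum_t\ket{t}\bra{t}\otimes\Pi_t$ factors as $\Pi_t=\prod_{i\in[L]}P^{(i)}_{t_i}$, with $P^{(i)}_{t_i}$ the projector onto the parity-$t_i$ subspace of the two qubits at positions $(x_0^i,x_1^i)$.

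Expanding the squared norm and exploiting the product structure gives
$$\|\Pi_{\mathsf{valid}}\mathsf{DecompAll}\ket{\psi}\|^2 = \sum_{t,x,e,w}\sum_{Y,Y'}\alpha_{t,x,e,w,Y}\overline{\alpha_{t,x,e,w,Y'}}\prod_{x\notin X}\langle H^{(x)}_{Y'}|H^{(x)}_Y\rangle\prod_{i\in[L]}\langle H^{(i)}_{Y'}|P^{(i)}_{t_i}|H^{(i)}_Y\rangle\,.$$
The factors at $x\notin X$ force $Y\setminus X=Y'\setminus X=:Z$, and a direct case analysis over the no-claw-permitted configurations at each index $i$ establishes that $\langle H^{(i)}_{Y'}|P^{(i)}_{t_i}|H^{(i)}_Y\rangle$ equals $1/2$ when $Y$ and $Y'$ agree on $\{x_0^i,x_1^i\}$, equals $(-1)^{t_i}/2$ when both $Y$ and $Y'$ touch $\{x_0^i,x_1^i\}$ at opposite pre-images, and vanishes when exactly one of $Y,Y'$ touches at $i$. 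In particular, only pairs with $\mathcal{I}(Y)=\mathcal{I}(Y')=:\mathcal{I}$ contribute.

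Parameterizing the residual freedom by $b\in\{0,1\}^\mathcal{I}$ (selecting, for each $i\in\mathcal{I}$, which of $x_0^i,x_1^i$ lies in $Y$) and writing $\tilde{\alpha}_{t,x,e,w,Z,\mathcal{I},b}:=\alpha_{t,x,e,w,Y}$ under this reparameterization, the elementary identity $\sum_{b,b'}\beta_b\overline{\beta_{b'}}(-1)^{t_\mathcal{I}\cdot(b\oplus b')}=\bigl|\sum_b\beta_b(-1)^{t_\mathcal{I}\cdot b}\bigr|^2$ collapses the expression to
$$\|\Pi_{\mathsf{valid}}\mathsf{DecompAll}\ket{\psi}\|^2 = \frac{1}{2^L}\sum_{t,x,e,w,Z,\mathcal{I}}\Bigl|\sum_{b\in\{0,1\}^\mathcal{I}}\tilde{\alpha}_{t,x,e,w,Z,\mathcal{I},b}\,(-1)^{t_\mathcal{I}\cdot b}\Bigr|^2\,.$$
Cauchy--Schwarz now yields $\bigl|\sum_b c_b\bigr|^2\leq 2^{|\mathcal{I}|}\sum_b|c_b|^2\leq 2^l\sum_b|c_b|^2$ using $|\mathcal{I}|\leq l$, and summing over all remaining indices together with $\|\ket{\psi}\|^2=1$ delivers the desired bound $2^l/2^L=2^{-L+l}$.

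The main obstacle is the bookkeeping in the inner-product step: one must correctly enumerate the five no-claw-permitted configurations of $(Y\cap\{x_0^i,x_1^i\},\,Y'\cap\{x_0^i,x_1^i\})$, recognize the vanishing of the ``asymmetric'' cases, and reorganize the double sum over $(Y,Y')$ so that the squared-Fourier-coefficient structure becomes visible. Once this reduction is in place, the concluding Cauchy--Schwarz step is routine.
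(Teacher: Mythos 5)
Your proof is correct and follows essentially the same argument as the paper: both decompress at the pre-image set, exploit the per-index parity-projector structure to kill cross-terms between branches that query different subsets of indices, and finish with a Cauchy--Schwarz step using $|\mathcal{I}|\le l$ to pay only $2^l$. The only cosmetic difference is that you expand $\mathsf{DecompAll}\ket{\psi}$ directly as a $\ket{\pm}$-product and factor $\Pi_{\mathsf{valid}}$ into per-pair projectors (making the vanishing of asymmetric cross terms a two-qubit inner-product calculation), whereas the paper first commutes $\mathsf{StdDecomp}_{\{0,1\}^n\setminus X}$ past $\Pi_{\mathsf{valid}}$ and encodes the same orthogonality in a generalized Parseval identity over the decompressed $v_0^i$ variables; you also make explicit the pairwise-disjointness of the pairs $\{x_0^i,x_1^i\}$, which the paper uses implicitly in its $(D,I,b)$ parameterization.
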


\begin{proof}
First, for a set $A$, we write $\mathsf{StdDecomp}_{A} := \bigotimes_{x \in A} \mathsf{StdDecomp}_{x}$. Then, 
\begin{align*}
\| \Pi_{\mathsf{valid}} \,\mathsf{DecompAll} \ket{\psi}\|^2 &= \| \Pi_{\mathsf{valid}} \circ (\mathsf{StdDecomp}_{\{0,1\}^n \setminus X} \otimes \mathsf{StdDecomp}_{X} ) \ket{\psi}\|^2 
\end{align*}
Now, notice that $\Pi_{\mathsf{valid}}$ and $\mathsf{StdDecomp}_{\{0,1\}^n \setminus X}$ commute, because $\Pi_{\mathsf{valid}}$ acts trivially on all database elements other than those in $X$. Thus, 
\begin{align*}
\| \Pi_{\mathsf{valid}} \,\mathsf{DecompAll} \ket{\psi}\|^2\| & = \| \mathsf{StdDecomp}_{\{0,1\}^n \setminus X} \circ \Pi_{\mathsf{valid}} \circ \mathsf{StdDecomp}_{X}  \ket{\psi}\|^2 \\
& = \| \Pi_{\mathsf{valid}} \,   \mathsf{StdDecomp}_{X}  \ket{\psi}\|^2
\end{align*}
where the last line is simply because $\mathsf{StdDecomp}_{\{0,1\}^n \setminus X}$ is a unitary.

Now, since $\ket{\psi} \in \mathcal{S}_{\mathsf{comp}} \cap \mathcal{S}_{\mathsf{no-claw}(X)}$, we can write 
\begin{equation*}
  \ket{\psi} =  \sum_{\substack{t \in \{0,1\}^L,x,e,w \\ D \cap X = \emptyset}} \sum_{\substack{I \subseteq [L] \\ b = (b_i: \, i \in I)}} \alpha_{t,x,e,w,D,I,b} \ket{t,x,e,w} \ket{D \cup \{x_{b_i}^i : i \in I\}} \,.
\end{equation*}
Then, 
$$ \mathsf{StdDecomp}_{X}  \ket{\psi} = \sum_{\substack{t \in \{0,1\}^L,x,e,w \\ D \cap X = \emptyset}} \sum_{\substack{I \subseteq [L] \\ b = (b_i: \, i \in I)}} \alpha_{t,x,e,w,D,I,b} \sum_{v_0^i, v_1^i \in \{0,1\}, \,i \in [L]} (-1)^{\sum_{i \in I} v^i_{b_i}}\ket{t,x,e,w} \ket{D \cup \{(x_0^i, v_0^i), (x_1^i, v_1^i) : i \in I\}} \,. $$
We can then calculate
\begin{align}
    \| \Pi_{\mathsf{valid}} \,   \mathsf{StdDecomp}_{X}  \ket{\psi}\|^2 &= \sum_{\substack{t \in \{0,1\}^L,x,e,w \\ D \cap X = \emptyset}} \sum_{\substack{v_0^i, v_1^i \in \{0,1\}, \,i \in [L]:\\ \forall i, \, t_i = v_0^i \oplus v_1^i  }} \big| \sum_{\substack{I \subseteq [L] \\ b = (b_i: \, i \in I)}} (-1)^{\sum_{i \in I} v^i_{b_i}} \alpha_{t,x,e,w,D,I,b}  \big|^2  \nonumber\\
    &= \sum_{\substack{t \in \{0,1\}^L,x,e,w \\ D \cap X = \emptyset}} \sum_{\substack{v_0^i, v_1^i \in \{0,1\}, \,i \in [L]:\\ \forall i, \, t_i = v_0^i \oplus v_1^i  }} \big| \sum_{\substack{I \subseteq [L] \\ b = (b_i: \, i \in I)}} (-1)^{\sum_{i \in I} v^i_{0} + \sum_{i \in I} b_i \cdot t_i} \alpha_{t,x,e,w,D,I,b}  \big|^2  \label{eq: 280}
\end{align}
We can rewrite \eqref{eq: 280} as 
\begin{align}
    \sum_{\substack{t \in \{0,1\}^L,x,e,w \\ D \cap X = \emptyset}}\, \sum_{v_0^i \in \{0,1\}, \,i \in [L]} \big| \sum_{I \subseteq [L]} (-1)^{\sum_{i \in I} v^i_{0}} \sum_{b = (b_i: \, i \in I)} (-1)^{\sum_{i \in I} b_i \cdot t_i} \alpha_{t,x,e,w,D,I,b}  \big|^2 \nonumber\\
    =  \frac{1}{2^{2L}} \sum_{\substack{t \in \{0,1\}^L,x,e,w \\ D \cap X = \emptyset}} 2^L  \sum_{I \subseteq [L]} \big| \sum_{b = (b_i: \, i \in I)} (-1)^{\sum_{i \in I} b_i \cdot t_i} \alpha_{t,x,e,w,D,I,b}  \big|^2 \,, \label{eq: 290}
\end{align}
where, similarly to the proof of Lemma \ref{lem: xor}, the last equality is the crucial part of the calculation, and it follows from the following generalization of the fact about complex numbers used there. Let $N,M \in \mathbb{N}$. For $i \in [M]$, let $w_{1,i} \ldots, w_{N,i} \in \{0,1\}$, be such that: for any $s\neq t \in [N]$, $|\{i \in [M]: w_{s,i} = w_{t,i} \}| = M/2$. Then, for any complex numbers $\beta_1, \ldots, \beta_N$,
$$ \sum_{i \in [M]} |\sum_{j \in [N]} (-1)^{w_{j,i}} \beta_j|^2  = M \cdot (|\beta_1|^2 + \cdots + |\beta_N|^2) \,.$$
Verifying this fact is a straightforward calculation.
Then, to be explicit, to obtain \eqref{eq: 290}, we apply the latter fact for each fixed $t,x,e,w$, with $\sum_{i \in [M]}$ corresponding to $\sum_{v_0^i \in \{0,1\}, \,i \in [L]}$; $\sum_{j \in [N]}$ corresponding to $\sum_{I \subseteq [L]}$; for $I \subseteq [L]$, $\beta_I = \sum_{b = (b_i: \, i \in I)} (-1)^{\sum_{i \in I} b_i \cdot t_i} \alpha_{t,x,e,w,D,I,b}$; and the $w$'s are naturally determined by the $v_0^i$.

Now, it is not difficult to see that the RHS of \eqref{eq: 290} is upper bounded by
\begin{align}
&\frac{1}{2^{2L}} \sum_{\substack{t \in \{0,1\}^L,x,e,w \\ D \cap X = \emptyset}} 2^L \cdot 2^{l} \sum_{I \subseteq [L]} \sum_{b = (b_i: \, i \in I)}  \big| \alpha_{t,x,e,w,D,I,b}  \big|^2 \nonumber\\
= & \frac{1}{2^{L-l}} \sum_{I \subseteq [L]}  \sum_{\substack{t \in \{0,1\}^L,x,e,w \\ D \cap X = \emptyset}} \sum_{b = (b_i: \, i \in I)}  \big| \alpha_{t,x,e,w,D,I,b}  \big|^2 \,. \nonumber\\
= &\frac{1}{2^{L-l}} \cdot \| \ket{\psi}\|^2  = \frac{1}{2^{L-l}}\,.
\end{align}
Putting everything together, we have shown
$$ \|\Pi_{\mathsf{valid}} \mathsf{DecompAll} \ket{\psi} \|^2 \leq   2^{-L+l} \,,$$
as desired.
\end{proof}

The next lemma is essentially an abstraction of Lemma \ref{lem: xor}, stated in a form that will be useful for our security proof later.

\begin{lem}
\label{lem: structure xor high success}
Let $n, L \in \mathbb{N}$. For $i \in [L]$, $b \in \{0,1\}$, let $x_{b}^i \in \{0,1\}^n $. Let $\ket{\psi} \in \mathcal{S}_{\mathsf{comp}}$. Suppose 
$$\| \Pi_{\mathsf{valid}} \circ \mathsf{DecompAll} \ket{\psi} \|^2 \geq 1 - \epsilon \,,$$
Then, for each $i$, there exists a (not necessarily normalized) state \begin{align}
\ket{\psi'}= & \sum_{\substack{t,x,,y,x,e,w \\ D \niton x_0^{i}, x_1^{i}}}\alpha_{t,x,e,w,D} \ket{t,x,e,w} \Big(\ket{D \cup x_{0}^{i}} + (-1)^{t_i} \ket{D \cup x_1^{i}}\Big) \nonumber\\
+&\sum_{\substack{t,x,e,w  \\ D \niton x_0^{i}, x_1^{i}}} \beta_{t,x,e,w,D} \ket{t,x,e,w} \Big( \ket{D} + (-1)^{t_i} \ket{D \cup \{x_0^{i},x_1^{i}\}}\Big) \,,
\end{align}
for some coefficients $\alpha_{t,x,e,w,D}$, $\beta_{t,x,e,w,D}$, such that
$$\| \ket{\psi} - \ket{\psi}' \|^2 \leq 2 \epsilon \,.$$
\end{lem}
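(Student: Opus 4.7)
The plan is to follow essentially the same calculation as in Lemma \ref{lem: xor}, with the hypothesis translated into a statement about $\mathsf{StdDecomp}$ at the two specific inputs $x_0^i, x_1^i$.

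First, I would fix $i$ and decompose $\ket{\psi}$, which lies in $\mathcal{S}_{\mathsf{comp}}$, as the sum of three pieces depending on whether the database contains exactly one, both, or neither of $x_0^i, x_1^i$. Writing this out, $\ket{\psi}$ takes exactly the form of Equation \eqref{eq: 17} in the proof of Lemma \ref{lem: xor}, with coefficients $\alpha_{t,x,e,w,D,b}$ for the one-preimage branches and $\beta_{t,x,e,w,D}, \gamma_{t,x,e,w,D}$ for the two-preimages and neither-preimage branches respectively.

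Next, I would reduce the hypothesis to a statement about decompressing only at $x_0^i$ and $x_1^i$. Let $\Pi_{\mathsf{valid}_i}$ denote the projector onto databases for which $t_i = D(x_0^i) \oplus D(x_1^i)$; then $\Pi_{\mathsf{valid}} \leq \Pi_{\mathsf{valid}_i}$ as operators, so
\[
\|\Pi_{\mathsf{valid}_i} \,\mathsf{DecompAll}\ket{\psi}\|^2 \geq \|\Pi_{\mathsf{valid}}\, \mathsf{DecompAll}\ket{\psi}\|^2 \geq 1 - \epsilon.
\]
Because $\Pi_{\mathsf{valid}_i}$ acts as identity on all database locations other than $x_0^i, x_1^i$, it commutes with $\mathsf{StdDecomp}_x$ for every $x \notin \{x_0^i, x_1^i\}$. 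Since $\mathsf{DecompAll}$ is unitary and factors as $\mathsf{StdDecomp}_{\{0,1\}^n \setminus \{x_0^i, x_1^i\}} \otimes \mathsf{StdDecomp}_{x_0^i} \mathsf{StdDecomp}_{x_1^i}$, I can pull the outer decompositions through and conclude
\[
\|\Pi_{\mathsf{valid}_i} \,\mathsf{StdDecomp}_{x_0^i} \mathsf{StdDecomp}_{x_1^i} \ket{\psi}\|^2 \geq 1 - \epsilon, \quad \text{so} \quad \|(\mathbb{I} - \Pi_{\mathsf{valid}_i})\, \mathsf{StdDecomp}_{x_0^i} \mathsf{StdDecomp}_{x_1^i} \ket{\psi}\|^2 \leq \epsilon.
\]

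Now I would compute $\mathsf{StdDecomp}_{x_0^i} \mathsf{StdDecomp}_{x_1^i} \ket{\psi}$ term by term, exactly as in Equation \eqref{eq: 22}, and then evaluate the squared norm of the $(\mathbb{I} - \Pi_{\mathsf{valid}_i})$-component. This gives an expression identical to the right-hand side of Equation \eqref{eq: 23}, which via the parallelogram identity $|\beta+\gamma|^2 + |\beta-\gamma|^2 = 2|\beta|^2 + 2|\gamma|^2$ simplifies (as in \eqref{eq: 24}) to
\[
\tfrac{1}{2}\!\!\sum_{t,x,e,w,\,D\niton x_0^i,x_1^i}\!\!\Big(\, \big|\alpha_{t,x,e,w,D,0} - (-1)^{t_i}\alpha_{t,x,e,w,D,1}\big|^2 + \big|\gamma_{t,x,e,w,D} - (-1)^{t_i}\beta_{t,x,e,w,D}\big|^2\Big) \leq \epsilon.
\]

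Finally, I would construct $\ket{\psi'}$ by setting $\alpha_{t,x,e,w,D,0} = (-1)^{t_i}\alpha_{t,x,e,w,D,1}$ and $\gamma_{t,x,e,w,D} = (-1)^{t_i}\beta_{t,x,e,w,D}$ in the expansion of $\ket{\psi}$; identifying the resulting coefficients with the $\alpha_{t,x,e,w,D}$ and $\beta_{t,x,e,w,D}$ of the lemma statement (after absorbing factors of $2$) shows that $\ket{\psi'}$ has exactly the claimed form. Expanding $\|\ket{\psi}-\ket{\psi'}\|^2$ using orthogonality of the distinct database basis vectors and comparing with the bound above yields $\|\ket{\psi}-\ket{\psi'}\|^2 \leq 2\epsilon$. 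The only real subtlety in the argument is the commutation step that lets one forget about $\mathsf{DecompAll}$ and work with just the two relevant points; once that reduction is made, the calculation is a direct reuse of the parallelogram identity from the proof of Lemma \ref{lem: xor}, so I do not expect any substantial obstacle.
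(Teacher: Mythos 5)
Your proposal is correct and follows essentially the same route as the paper: the paper's own proof is terse, decomposing $\ket{\psi}$ into the one/both/neither-preimage branches, defining $\ket{\psi'}$ by forcing the phase relations $\alpha_{\ldots,0} = (-1)^{t_i}\alpha_{\ldots,1}$ and $\gamma = (-1)^{t_i}\beta$, and then invoking ``the exact same analysis as in the proof of Lemma~\ref{lem: xor}.'' What you have done is unpack that reference explicitly — in particular the monotonicity $\Pi_{\mathsf{valid}} \leq \Pi_{\mathsf{valid}_i}$ and the commutation of $\Pi_{\mathsf{valid}_i}$ with $\mathsf{StdDecomp}_x$ for $x\notin\{x_0^i,x_1^i\}$, which lets you replace $\mathsf{DecompAll}$ by the two relevant decompressions — steps that the paper leaves implicit but that are exactly what is needed to reduce to the parallelogram-identity computation in Lemma~\ref{lem: xor}.
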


\begin{proof}
In general, for any $i$, we can write the state $\ket{\psi}$ as  
\begin{align}
    \ket{\psi}= & \sum_{\substack{t,x,e,w \\ D \niton x_0^{i}, x_1^{i}}}\sum_{b \in \{0,1\}}\alpha_{t,x,e,w,D,b} \ket{t,x,e,w} \ket{D \cup x_{b}^i} \nonumber\\
+&\sum_{\substack{t,x,e,w  \\ D \niton x_0^{i}, x_1^{i}}} \beta_{t,x,e,w,D} \ket{t,x,e,w} \ket{D \cup \{x_0^{i},x_1^{i}\}} \nonumber\\
+&\sum_{\substack{t,x,e,w  \\ D \niton x_0^{i}, x_1^{i}}} \gamma_{t,x,e,w,D} \ket{t,x,e,w} \ket{D}
\end{align}
for some $\alpha_{t,x,e,w,D,b}, \beta_{t,x,e,w,D},  \gamma_{t,x,e,w,D}$.
Let
\begin{align}
    \ket{\psi'}= & \sum_{\substack{t,x,e,w \\ D \niton x_0^{i}, x_1^{i}}}\alpha_{t,x,e,w,D} \ket{t,x,e,w} \Big(\ket{D \cup x_{0}^{i}} + (-1)^{t_i} \ket{D \cup x_1^{i}}\Big) \nonumber\\
+&\sum_{\substack{t,x,e,w  \\ D \niton x_0^{i}, x_1^{i}}} \beta_{t,x,e,w,D} \ket{t,x,e,w} \Big( \ket{D} + (-1)^{t_i} \ket{D \cup \{x_0^{i},x_1^{i}\}}\Big) \,,
\end{align}
where $\alpha_{t,x,e,w,D} := \alpha_{t,x,e,w,D,0}$.

By applying the exact same analysis as in the proof of Lemma \ref{lem: xor}, we obtain that
$$\|\ket{\psi} - \ket{\psi'} \|^2 \leq 2\epsilon \,.$$

\end{proof}

\subsection{The structure of strategies that produce valid equations}
\label{sec: the structure of strategies}
In this section, we characterize the structure of strategies that produce valid equations. The results in this section essentially follow from the results of the previous section (about the structure of strategies that guess the xor of query outputs), specialized to the setting where databases do not contain a claw. We first state and prove a somewhat clean and general lemma (Lemma \ref{lem: clean structure}), for the case of guessing a single equation, relating the probability of success to the structure of the strategy. Although we will not be using the latter lemma directly, we include it as it modular enough that it may be of independent interest. Then, we state two corollaries of lemmas \ref{lem: technical} and \ref{lem: structure xor high success}, which we will use directly in the security proof later. 

Throughout this section, let $\X,\Y,\K$ be finite sets. Let $\mathcal{F} \,=\, \big\{f_{k,b} : \X\rightarrow \mathcal{D}_{\Y} \big\}_{k\in \mathcal{K},b\in\{0,1\}}$ be a family of noisy trapdoor claw-free functions. Let $f'_{k,b}: \X \rightarrow \Y$ be functions satisfying the \emph{efficient range superposition property} of Definition \ref{def: ntcfs}.

Let $P$ be a prover. Denote the registers that $P$ acts on as: $\mathsf{Z}, \mathsf{D}, \mathsf{Y}$, corresponding to the equation; $\mathsf{X}, \mathsf{E}$, corresponding to the query registers; $\mathsf{W}$, corresponding to the witness. For ease of notation, we omit keeping track of $P$'s input registers, and any auxiliary work register. Assume that $x \in \mathsf{X}$ is represented in register $\mathsf{X}$ via its bit decomposition. Let $n$ denote the size of the bit decomposition.
For ease of notation, for $x \in \X$, we omit writing $\mathsf{BitDecomp}(x)$ in this section, when it is clear from the context.

All parameters implicitly depend on the security parameter $\lambda$, which we omit writing. 
\begin{lem}
\label{lem: clean structure}
Let $P$ be any non-uniform quantum polynomial-time oracle algorithm taking as input a choice of claw-free function pair $k$, and outputting a state on registers $\mathsf{ZDYXEW}$. Suppose there exists a function $\delta$ (of the security parameter) such that :
$$\Pr[z = d \cdot (x_0^y \oplus x_1^y) \oplus H(x_0^y) \oplus H(x_1^y): k\gets \mathcal{F}(1^{\lambda}),  (z, d, y) \gets P^H(k), H \gets \mathsf{Bool}(n)] \geq  \frac12 + \delta(\lambda) \,.$$

Let the state of $P^{\comp}(k)$ just before measurement of the output be \begin{align}
  \ket{\psi^k_{\textnormal{final}}} = & \sum_{\substack{z,d,y,x,e,w \\ D \niton x_0^y, x_1^y}}\sum_{b \in \{0,1\}}\alpha^k_{z,d,y,x,e,w,D,b} \ket{z,d,y,x,e,w} \ket{D \cup x_{b}^y} \nonumber\\
+&\sum_{\substack{z,d,y,x,e,w \\ D \niton x_0^y, x_1^y}} \beta^k_{z,d,y,x,e,w,D} \ket{z,d,y,x,e,w} \ket{D \cup \{x_0^y,x_1^y\}} \nonumber\\
+&\sum_{\substack{z,d,y,x,e,w \\ D \niton x_0^y, x_1^y}} \gamma^k_{z,d,y,x,e,w,D} \ket{z,d,y,x,e,w} \ket{D} \label{eq: 17}
\end{align}
Then, there exists a negligible function $\negl$ such that, for all $\lambda$:
\begin{itemize}
    \item[(i)] \begin{equation*}\mathbb{E}_{k \gets \mathcal{F}} \sum_{\substack{z,d,y,x,e,w \\ D \niton x_0^y, x_1^y}}|\beta^k_{z,d,y,x,e,w,D}|^2 = \negl(\lambda) \,.\end{equation*}
    \item[(ii)] \begin{equation*}\mathbb{E}_{k \gets \mathcal{F}} \sum_{\substack{z,d,y,x,e,w \\ D \niton x_0^y, x_1^y}} \sum_{b \in \{0,1\}} |\alpha^k_{z,d,y,x,e,w,D,b}|^2 \geq 2 \delta(\lambda) - \negl(\lambda) \,.\end{equation*}
    \item[(iii)] \begin{align*}&\mathbb{E}_{k \gets \mathcal{F}} \sum_{\substack{z,d,y,x,e,w \\ D \niton x_0^y, x_1^y}}|\alpha^k_{z,d,y,x,e,w,D,0} - (-1)^{z + d \cdot (x_0^y \oplus x_1^y)} \alpha^k_{z,d,y,x,e,w,D,1}|^2 \\  \leq  \,&\mathbb{E}_{k \gets \mathcal{F}}\sum_{\substack{z,d,y,x,e,w \\ D \niton x_0^y, x_1^y}}  \sum_{b \in \{0,1\}}|\alpha^k_{z,d,y,x,e,w,D,b}|^2 - 2\delta(\lambda) + \negl(\lambda) \,. \nonumber \end{align*}
\end{itemize}
\end{lem}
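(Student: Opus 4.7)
The plan is to compute $\Pr[P^H\text{ produces a valid equation}]$ explicitly in terms of the coefficients $\alpha, \beta, \gamma$ by switching to the compressed phase oracle simulation (Lemma \ref{lem: compressed oracle equiv}), decompressing at both pre-images of $y$, and using the identity $|a+b|^2+|a-b|^2=2(|a|^2+|b|^2)$ to get a clean closed form. The three conclusions then fall out by algebra, with the claw-free property absorbing the $\beta$-terms.

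First I would establish (i) from the claw-free property. Consider the non-uniform QPT adversary $A$ which, on input $k$, runs $P^{\comp}(k)$, measures the output registers $\mathsf{ZDY}$ to obtain $(z,d,y)$, and then measures the database register $\mathsf{O}$. By the form of $\ket{\psi^k_{\text{final}}}$ and the fact that $\ket{D\cup\{x_0^y,x_1^y\}}$ has nonzero weight only on databases containing both pre-images of $y$, the probability that the measured database contains a claw for $y$ is, up to normalization constants coming from $\|\ket{D\cup\{x_0^y,x_1^y\}}\|$, exactly $\sum_{x,e,w,D\niton x_0^y,x_1^y}|\beta^k_{z,d,y,x,e,w,D}|^2$. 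If $\mathbb{E}_{k\gets\mathcal{F}}\sum|\beta^k|^2$ were non-negligible, $A$ would output a claw with non-negligible probability (by searching the measured database for a colliding pair), contradicting the claw-free property; $A$ is efficient because the compressed oracle is efficiently implementable \cite{Zhandry-how} and $P$ is polynomial-time. The noisiness of $\mathcal{F}$ is handled by replacing $f_{k,b}$ with $f'_{k,b}$, at the cost of negligible trace distance via the Hellinger bound in Definition \ref{def: ntcfs}.

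Second I would compute the success probability exactly. By Lemma \ref{lem: compressed oracle equiv}, $\Pr[P^H\text{ valid}]$ equals the probability, in the compressed simulation, that after running $P^{\comp}$, measuring $(z,d,y)$, applying $\mathsf{StdDecomp}_{x_0^y}\circ\mathsf{StdDecomp}_{x_1^y}$ controlled on $y$, and measuring the database values $v_0,v_1$ at the two pre-images, one has $z=d\cdot(x_0^y\oplus x_1^y)\oplus v_0\oplus v_1$. The decompression acts differently on the three branches: on the $\alpha_b$-branch only the missing pre-image $x_{1-b}^y$ is freshly added (yielding a factor $1/\sqrt{2}$), on the $\beta$-branch both pre-images are already recorded in $\ket{-}$ form (no factor), and on the $\gamma$-branch both pre-images are added (yielding $1/2$). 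Reading off the coefficient $C(v_0,v_1)$ at $\ket{z,d,y,x,e,w}\ket{D\cup(x_0^y,v_0),(x_1^y,v_1)}$, letting $\bar z:=z\oplus d\cdot(x_0^y\oplus x_1^y)$, and applying $|a+b|^2+|a-b|^2=2(|a|^2+|b|^2)$ to the two valid pairs $(v_0,v_0\oplus\bar z)$, one obtains an expression of the form
\begin{equation*}
\Pr[\text{succ}]\;=\;\mathbb{E}_{k}\!\!\sum_{\substack{z,d,y,x,e,w\\ D\niton x_0^y,x_1^y}}\!\!\Big(|\alpha^k_0+(-1)^{\bar z}\alpha^k_1|^2\;+\;2\,|(-1)^{\bar z}\beta^k+\tfrac12\gamma^k|^2\Big),
\end{equation*}
and a symmetric expression for $\Pr[\text{fail}]$ in which $+$ is replaced by $-$ inside both squared norms.

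Third I would derive (ii) and (iii). For (ii), combine $\Pr[\text{succ}]\geq\tfrac12+\delta$ with $|\alpha_0+(-1)^{\bar z}\alpha_1|^2\leq 2(|\alpha_0|^2+|\alpha_1|^2)$, expand $2|(-1)^{\bar z}\beta+\tfrac12\gamma|^2$, kill the $\beta^2$ and $\beta\gamma$ cross-terms using (i) together with $\mathbb{E}_k\sum|\beta\gamma|\leq\sqrt{\mathbb{E}_k\sum|\beta|^2}\cdot\sqrt{\mathbb{E}_k\sum|\gamma|^2}=\negl$, and finally eliminate $\mathbb{E}_k\sum|\gamma|^2$ via the normalization identity $\mathbb{E}_k\sum(2|\alpha_0|^2+2|\alpha_1|^2+4|\beta|^2+|\gamma|^2)=1$. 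For (iii), apply the analogous argument to $\Pr[\text{fail}]\leq\tfrac12-\delta$: the failure expression upper-bounds $\mathbb{E}_k\sum|\alpha_0-(-1)^{\bar z}\alpha_1|^2$ minus a $|\beta-(-1)^{\bar z}\gamma|^2$ quantity which, by (i) and Cauchy-Schwarz, is at least $\mathbb{E}_k\sum|\gamma|^2-\negl$; substituting the normalization identity for $\mathbb{E}_k\sum|\gamma|^2$ yields the stated inequality comparing $\mathbb{E}_k\sum|\alpha_0-(-1)^{z+d\cdot(x_0^y\oplus x_1^y)}\alpha_1|^2$ to $\mathbb{E}_k\sum(|\alpha_0|^2+|\alpha_1|^2)-2\delta+\negl$.

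The main obstacle is the careful bookkeeping around the compressed-oracle normalization conventions: the states $\ket{D\cup X}$ have norm $\sqrt{2^{|X|}}$ rather than $1$, so the three branches of $\ket{\psi^k_{\text{final}}}$ carry different implicit weights in the unit-norm identity, and the decompression step introduces different scalar factors ($1/\sqrt{2},1,1/2$) on the $\alpha,\beta,\gamma$ branches respectively. These constants propagate through the $|a\pm b|^2$ identity and must be tracked precisely in order to recover exactly the constants $2\delta-\negl$ in (ii) and $-2\delta+\negl$ in (iii); any sloppiness here will produce a factor-of-two gap. The claw-freeness step and the Cauchy-Schwarz step for the $\beta$-$\gamma$ cross-terms are conceptually straightforward once this accounting is in place.
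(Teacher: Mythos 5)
Your proposal takes essentially the same route as the paper's proof: switch to the compressed phase oracle via Lemma \ref{lem: compressed oracle equiv}, apply $\mathsf{StdDecomp}$ at the two pre-images, compute the success (equivalently, failure) probability via the identity $|a+b|^2 + |a-b|^2 = 2|a|^2 + 2|b|^2$, absorb the $\beta$-terms using claw-freeness and Cauchy--Schwarz, and close with the normalization constraint on $\mathbb{E}_k\sum|\gamma^k|^2$. (The paper phrases everything through $\Pr[\textnormal{fail}]\leq\tfrac12-\delta$ rather than $\Pr[\textnormal{succ}]$, but that is cosmetic.) The bookkeeping hazard you flag is, however, real and does bite: with the literal, \emph{unnormalized} $\ket{D\cup X}$ of Section \ref{sec: xor technical}, your closed form for $\Pr[\textnormal{succ}]$ and your normalization identity $\mathbb{E}_k\sum(2|\alpha^k_0|^2+2|\alpha^k_1|^2+4|\beta^k|^2+|\gamma^k|^2)=1$ are mutually consistent, but pushing the algebra through then yields $\mathbb{E}_k\sum\sum_b|\alpha^k_b|^2 \geq \delta-\negl$, a factor of two below the lemma's $2\delta-\negl$. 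The lemma statement and the paper's own proof are implicitly written in the \emph{normalized} convention, with each $\ket{D\cup X}$ carrying a $2^{-|X|/2}$ prefactor, so that the normalization reads $\mathbb{E}_k\sum(\sum_b|\alpha^k_b|^2+|\beta^k|^2+|\gamma^k|^2)=1$ and the closed form becomes $\Pr[\textnormal{succ}] = \tfrac12\,\mathbb{E}_k\sum\big(|\alpha^k_0+(-1)^{z+d\cdot(x_0^y\oplus x_1^y)}\alpha^k_1|^2+|(-1)^{z+d\cdot(x_0^y\oplus x_1^y)}\beta^k+\gamma^k|^2\big)$; switching to that convention reproduces the stated constants exactly. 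So this is a convention mismatch that you already anticipated, not a gap in the argument.
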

In words, this lemma establishes the following:
\begin{itemize}
    \item The state of the oracle algorithm after a compressed oracle simulation has at most negligible weight on databases containing a claw. This is point (i).
    \item In the regime where $\delta$ is non-negligible, there must be non-negligible weight on $y,D$ such that $D$ contains at least one of the two pre-images of $y$. Moreover, on average over $k$, there is non-negligible weight on $z,d,y,x,e,w,D$, such that $\alpha^k_{z,d,y,x,e,w,D, 0}$ is a non-negligible fraction of $\alpha^k_{z,d,y,x,e,w,D, 1}$ and viceversa.
    \item In the regime where the winning probability is close to $1$, almost all the weight must be on $y,D$ such that $D$ contains at least one of the two pre-images of $y$. Moreover, it must be that, approximately, $\alpha^k_{z,d,y,x,e,w,D, 0} = (-1)^{z + d \cdot (x_0^y \oplus x_1^y)} \alpha^k_{z,d,y,x,e,w,D, 1}$.
\end{itemize}

The proof uses ideas encountered in the previous section.
\begin{proof}[Proof of Lemma \ref{lem: clean structure}]
By an analogous calculation as in the proof of Lemma \ref{lem: xor}, we have that, for all $\lambda$,
\begin{align}
& \frac12 - \delta(\lambda) \nonumber \\
    &\geq \Pr[z \neq d \cdot (x_0^y \oplus x_1^y) \oplus H(x_0^y) \oplus H(x_1^y): k\gets \mathcal{F}(1^{\lambda}),  (z, d, y) \gets P^H(k), H \gets \mathsf{Bool}(n)] \nonumber \\
    &= \,\mathbb{E}_{k} \bigg(\frac12 \sum_{\substack{z,d,y,x,e,w \\ D \niton x_0^y, x_1^y}} \big| \alpha^k_{z,d,y,x,e,w,D,0} - (-1)^{z + d \cdot (x_0^y \oplus x_1^y)} \alpha_{z,d,y,x,e,w,D,1}\big |^2 \nonumber \\
    &\,\,\,\,+ \frac12 \sum_{\substack{z,d,y,x,e,w \\ D \niton x_0^y, x_1^y}} \big| \gamma^k_{z,d,y,x,e,w,D} - (-1)^{z + d \cdot (x_0^y \oplus x_1^y)}  \beta^k_{z,d,y,x,e,w,D}  \big |^2 \bigg) \,. \label{eq: 410}
\end{align}
Now, notice that there exists a negligible function $\negl$, such that, for all $\lambda$,
\begin{equation}
    \mathbb{E}_{k}\sum_{\substack{z,d,y,x,e,w \\ D \niton x_0^y, x_1^y}}|\beta^k_{z,d,y,x,e,w,D}|^2 = \negl(\lambda) \,, \label{eq: 430}
\end{equation}
otherwise this would immediately yield an efficient algorithm to extract a claw. This, combined with \eqref{eq: 410}, straightforwardly implies that, for all $\lambda$,
\begin{align}
\frac12 - \delta(\lambda) +\negl'(\lambda)
    &\geq
    \,\mathbb{E}_{k} \bigg(\frac12 \sum_{\substack{z,d,y,x,e,w \\ D \niton x_0^y, x_1^y}} \big| \alpha^k_{z,d,y,x,e,w,D,0} - (-1)^{z + d \cdot (x_0^y \oplus x_1^y)} \alpha_{z,d,y,x,e,w,D,1}\big |^2 \nonumber \\
    &\,\,\,\,\,\,\,\,\,\,\,\,\,\,\,\,\,\,+ \frac12 \sum_{\substack{z,d,y,x,e,w \\ D \niton x_0^y, x_1^y}} \big| \gamma^k_{z,d,y,x,e,w,D}\big |^2 \bigg) \,, \label{eq: 420}
\end{align}
for some other negligible function $\negl'$. Equation \eqref{eq: 430} also implies that, for all $\lambda$,
\begin{equation}
\label{eq: 450}
   \mathbb{E}_k \sum_{\substack{z,d,y,x,e,w \\ D \niton x_0^y, x_1^y}} \big| \gamma^k_{z,d,y,x,e,w,D}\big |^2 \geq  1 - \mathbb{E}_k \sum_{\substack{z,d,y,x,e,w \\ D \niton x_0^y, x_1^y}} \sum_{b \in \{0,1\}} \big| \alpha^k_{z,d,y,x,e,w,D,b}\big |^2 - \negl(\lambda) 
\end{equation}

Plugging this into \eqref{eq: 420}, gives 
\begin{align}
\frac12 - \delta(\lambda) +\negl'(\lambda)
    &\geq
    \,\mathbb{E}_{k} \bigg(\frac12 \sum_{\substack{z,d,y,x,e,w \\ D \niton x_0^y, x_1^y}} \big| \alpha^k_{z,d,y,x,e,w,D,0} - (-1)^{z + d \cdot (x_0^y \oplus x_1^y)} \alpha_{z,d,y,x,e,w,D,1}\big |^2 \nonumber \\
    &\,\,\,\,\,\,\,\,\,\,\,\,\,\,\,\,\,\,+ \frac12 - \frac12 \mathbb{E}_k \sum_{\substack{z,d,y,x,e,w \\ D \niton x_0^y, x_1^y}} \sum_{b \in \{0,1\}} \big| \alpha^k_{z,d,y,x,e,w,D,b}\big |^2 - \negl(\lambda) \bigg) \,, \label{eq: 440}
\end{align}

Now, to prove (ii), suppose for a contradiction that there exists some polynomial $p$ such that, for infinitely many $\lambda$,
\begin{equation*}\mathbb{E}_{k} \sum_{\substack{z,d,y,x,e,w \\ D \niton x_0^y, x_1^y}} \sum_{b \in \{0,1\}} |\alpha^k_{z,d,y,x,e,w,D,b}|^2 < 2 \delta(\lambda) - p(\lambda) \,.
\end{equation*} 
Then, for any such $\lambda$, plugging this into \eqref{eq: 440}, and rearranging, gives
\begin{equation} 
\negl'(\lambda) + \negl(\lambda) \geq \frac12 \mathbb{E}_{k}  \sum_{\substack{z,d,y,x,e,w \\ D \niton x_0^y, x_1^y}} \big| \alpha^k_{z,d,y,x,e,w,D,0} - (-1)^{z + d \cdot (x_0^y \oplus x_1^y)} \alpha_{z,d,y,x,e,w,D,1}\big |^2 + 2p(\lambda) \,,
\end{equation}
which is a contradiction. 

Finally, (iii) follows similarly from \eqref{eq: 440}.
\end{proof} 

Next, we proceed to stating and proving two corollaries of Lemmas \ref{lem: technical} and \ref{lem: structure xor high success} from the previous section. In this case, we do not restrict ourselves to the case of guessing a single equation. So, registers $\mathsf{Z}, \mathsf{D}, \mathsf{Y}$, each consist of $L$ sub-registers, for some $L \in \mathbb{N}$. So $\mathsf{Y} = \mathsf{Y}_1\ldots\mathsf{Y}_L$, and similarly for $\mathsf{Z}$ and $\mathsf{D}$. We will write $y = y_1 \ldots y_L$ where $y_i$ is the $i$-th image; $z = z_1\ldots z_L$, where $z_i \in \{0,1\}$; $d = d_1 \ldots d_L$, where $d_i \in \{0,1\}^n$.

We introduce some additional notation. For a subset $X \subseteq \X$, we define the following state of the database register $\mathsf{O}$,  $$D_{X} := \sum_{w_x \in \{0,1\}: x \in X} (-1)^{\sum_x w_x} \ket{\{(x, w_x): x \in X\}}\,.$$ 

Analogously to the previous section, let $\mathcal{S}_{\mathsf{comp}}$ be the set of states, on registers $\mathsf{ZDYXEWO}$, that can be reached by running a compressed oracle simulation. For an NTCF key $k$, let $\mathcal{S}_{k, \mathsf{no-claw}}$ be the set of states, on registers $\mathsf{ZDYXEWO}$, that have zero weight on $y, D$ such that $D$ contains a claw for $y_i$ for some $i$. 

Further, for $l \in [L]$, define the projector 
$$\Pi_{k, \leq l} := \sum_{\substack{y, D: \\ |\{ i: D \cap \{x_0^{y_i}, x_1^{y_i}\} \neq \emptyset\} |\leq l  }} \ket{y}\bra{y}_{\mathsf{Y}} \otimes \ket{D}\bra{D}_{\mathsf{O}}\,.$$
Denote by $\mathcal{S}_{k, \leq l}$ the set of states $\ket{\psi}$ on registers $\mathsf{ZDYXEWO}$ such that $\Pi_{k, \leq l} \ket{\psi} =\ket{\psi}$.


Finally, define $$\Pi_{\mathsf{valid}, k} := \sum_{\substack{z,d,y,D:\\  \forall i, \,\, z_i \oplus d_i \cdot (x_0^{y_i} \oplus x_1^{y_i}) = D(x_0^{y_i}) \oplus D(x_1^{y_i}) }} \ket{z,d,y}\bra{z,d,y} \otimes \ket{D}\bra{D} \,,$$
where the pre-images are with respect to key $k$. In words, $\Pi_{\mathsf{valid},k}$ is the projector onto states for which all $L$ equations are valid, with respect to key $k$.

\begin{cor}
\label{cor: 1}
Let $k$ be an NTCF key. Let $\ket{\psi} \in \mathcal{S}_{\mathsf{comp}} \cap \mathcal{S}_{k, \mathsf{no-claw}} \cap \mathcal{S}_{k, \leq l}$. Then, 
$$\| \Pi_{\mathsf{valid},k} \mathsf{DecompAll} \ket{\psi}\|^2 \leq  2^{-L+l} \,. $$
\end{cor}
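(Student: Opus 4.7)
The plan is to reduce Corollary \ref{cor: 1} to Lemma \ref{lem: technical} by conditioning on the values of the $\mathsf{Y}$ and $\mathsf{D}$ registers. The key observation is that once we fix $y = y_1 \cdots y_L$ and $d = d_1 \cdots d_L$, the set $X_y := \{x_b^{y_i} : i \in [L], b \in \{0,1\}\}$ is determined by the trapdoor $t_k$, and the validity condition $z_i \oplus d_i \cdot (x_0^{y_i} \oplus x_1^{y_i}) = D(x_0^{y_i}) \oplus D(x_1^{y_i})$ becomes exactly the XOR-guessing condition $t_i = D(x_0^{y_i}) \oplus D(x_1^{y_i})$ after the affine relabeling $t_i := z_i \oplus d_i \cdot (x_0^{y_i} \oplus x_1^{y_i})$. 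Since $d$ is fixed, this relabeling is a unitary on the $\mathsf{Z}$ register which commutes with $\mathsf{DecompAll}$ and with $\Pi_{\mathsf{valid},k}$ in the obvious way.

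Concretely, first decompose $\ket{\psi}$ in the computational basis on $\mathsf{Y} \otimes \mathsf{D}$, writing
\[
\ket{\psi} \;=\; \sum_{y,d} \ket{y}_{\mathsf{Y}} \ket{d}_{\mathsf{D}} \, \ket{\psi_{y,d}}_{\mathsf{ZXEWO}} \,,
\]
so that $\sum_{y,d} \|\ket{\psi_{y,d}}\|^2 = 1$. Since $\mathsf{DecompAll}$ acts only on $\mathsf{O}$ and the projector $\Pi_{\mathsf{valid},k}$ is diagonal in the $(z,d,y,D)$ basis, we have
\[
\bigl\| \Pi_{\mathsf{valid},k}\, \mathsf{DecompAll}\, \ket{\psi} \bigr\|^2 \;=\; \sum_{y,d} \bigl\| \Pi^{(y,d)}_{\mathsf{valid}}\, \mathsf{DecompAll}\, \ket{\psi_{y,d}} \bigr\|^2 \,,
\]
where $\Pi^{(y,d)}_{\mathsf{valid}}$ is the projector on $\mathsf{Z} \otimes \mathsf{O}$ onto basis elements $(z, D)$ with $z_i = D(x_0^{y_i}) \oplus D(x_1^{y_i}) \oplus d_i\cdot(x_0^{y_i}\oplus x_1^{y_i})$ for all $i$.

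Next I would observe that the hypotheses $\ket{\psi} \in \mathcal{S}_{\mathsf{comp}} \cap \mathcal{S}_{k,\mathsf{no\text{-}claw}} \cap \mathcal{S}_{k,\le l}$ pass to each fiber, namely $\ket{\psi_{y,d}} \in \mathcal{S}_{\mathsf{comp}} \cap \mathcal{S}_{\mathsf{no\text{-}claw}(X_y)} \cap \mathcal{S}_{X_y, \le l}$ with $X_y = \{x_b^{y_i}\}$ (this is immediate once one realizes that the $\mathsf{Y}$ register controls which $X$ is relevant, and the no-claw / $\le l$ properties are defined fiber-wise via the projectors above). Now define the $z$-shift unitary $U_{y,d} : \ket{z} \mapsto \ket{z \oplus s(y,d)}$ on $\mathsf{Z}$, where $s(y,d)_i := d_i \cdot (x_0^{y_i} \oplus x_1^{y_i})$. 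Then $U_{y,d} \otimes I_{\mathsf{O}}$ maps $\Pi^{(y,d)}_{\mathsf{valid}}$ to exactly the projector $\Pi_{\mathsf{valid}}$ of Lemma \ref{lem: technical}, and commutes with $\mathsf{DecompAll}$. Applying Lemma \ref{lem: technical} to the (normalized) state $(U_{y,d}\otimes I)\ket{\psi_{y,d}}/\|\ket{\psi_{y,d}}\|$ with the set $X_y$ yields
\[
\bigl\| \Pi^{(y,d)}_{\mathsf{valid}}\, \mathsf{DecompAll}\, \ket{\psi_{y,d}} \bigr\|^2 \;\le\; 2^{-L+l} \cdot \|\ket{\psi_{y,d}}\|^2 \,.
\]
Summing over $y,d$ gives $\bigl\| \Pi_{\mathsf{valid},k}\, \mathsf{DecompAll}\, \ket{\psi} \bigr\|^2 \le 2^{-L+l}$, as desired.

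The main potential obstacle is verifying that the fiber decomposition actually respects all three membership conditions used to invoke Lemma \ref{lem: technical}. The $\mathcal{S}_{\mathsf{comp}}$ property is straightforward since $\mathsf{Y}, \mathsf{D}$ are just additional registers; the no-claw and $\le l$ properties hold fiber-wise because the defining projectors are block-diagonal in the $\mathsf{Y}$-basis (they act as the identity on $\mathsf{D}$ and project $\mathsf{O}$ conditional on $y$). Once this book-keeping is done, the relabeling $U_{y,d}$ absorbs the $d$-dependence into a trivial isometry, and Lemma \ref{lem: technical} applies off the shelf.
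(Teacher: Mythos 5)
Your argument is correct and follows essentially the same route as the paper: decompose the state fiber-wise over the image register and reduce each fiber to Lemma \ref{lem: technical}. The one cosmetic difference is how you reconcile the projector $\Pi_{\mathsf{valid},k}$ (whose condition on each coordinate reads $z_i \oplus d_i\cdot(x_0^{y_i}\oplus x_1^{y_i}) = D(x_0^{y_i})\oplus D(x_1^{y_i})$) with the plain XOR projector $\Pi_{\mathsf{valid}}$ of the lemma: you additionally condition on $d$ and apply the shift $U_{y,d}\colon \ket{z}\mapsto\ket{z\oplus s(y,d)}$ so that $\mathsf{Z}$ directly plays the role of the output register $\mathsf{T}$, whereas the paper conditions only on $y$ and (as made explicit in the proof of Corollary \ref{lem: structure}) appends a fresh register $\mathsf{T}$ and computes $z_i\oplus d_i\cdot(x_0^{y_i}\oplus x_1^{y_i})$ into it coherently. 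Both are minor bookkeeping choices that make the lemma apply off the shelf; your version is arguably a bit more explicit about why the inherited membership in $\mathcal{S}_{\mathsf{comp}}\cap\mathcal{S}_{\mathsf{no\text{-}claw}(X_y)}\cap\mathcal{S}_{X_y,\le l}$ is preserved.
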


\begin{proof}
For a fixed $y$, let 
$$\Pi_{\mathsf{valid}, k,y} := \sum_{\substack{z,d,D:\\ \forall i, \,\, z_i \oplus d_i \cdot (x_0^{y_i} \oplus x_1^{y_i}) = D(x_0^{y_i}) \oplus D(x_1^{y_i})  }} \ket{z,d}\bra{z,d} \otimes \ket{D}\bra{D} \,.$$

Then, we can write $\Pi_{\mathsf{valid},k}$ as 
\begin{equation*}
    \Pi_{\mathsf{valid},k} = \sum_{y} \ket{y}\bra{y} \otimes \Pi_{\mathsf{valid},k,y}
\end{equation*}

Now, we can write $\ket{\psi} = \sum_y \alpha_y \ket{y} \otimes \ket{\phi_y}$, for some $\alpha_y$, $\ket{\phi_y}$. Then,
\begin{align}
\| \Pi_{\mathsf{valid}, k} \mathsf{DecompAll} \ket{\psi}\|^2  &= \sum_y |\alpha_y|^2 \| \Pi_{\mathsf{valid}, k, y} \mathsf{DecompAll} \ket{\phi_y} \|^2 \nonumber\\
&\leq (\sum_y |\alpha_y|^2 ) \cdot 2^{-L+l}  \nonumber\\
&= \|\ket{\psi}\|^2 \cdot  2^{-L+l} = 2^{-L+l}\,, \nonumber
\end{align}
where the second line follows by an application of Lemma \ref{lem: technical}, using the fact that, for each $y$, $\ket{\phi_y}$ inherits the property that the database register contains no claws, and there are at most $l$ indices such that the database contains some pre-image corresponding to each of these indices.
\end{proof}

The following is a corollary of Lemma \ref{lem: structure xor high success}
\begin{cor}
\label{lem: structure}
Let $k$ be a choice of claw-free function pair. Let $\ket{\psi} \in \mathcal{S}_{\mathsf{comp}} \cap \mathcal{S}_{k, \mathsf{no-claw}}$. Suppose 
$$\| \Pi_{\mathsf{valid}, k} \circ \mathsf{DecompAll} \ket{\psi} \|^2 \geq 1 - \epsilon \,,$$
Then, for each $i$, there exists a (not necessarily normalized) state \begin{align}
\ket{\psi'}= & \sum_{\substack{z,d,y,x,e,w \\ D \niton x_0^{y_i}, x_1^{y_i}}}\alpha_{z,d,y,x,e,w,D} \ket{z,d,y,x,e,w} \Big(\ket{D \cup x_{0}^{y_i}} + (-1)^{ z + d\cdot (x_0^y \oplus x_1^y)} \ket{D \cup x_1^{y_i}}\Big) \nonumber\\
+&\sum_{\substack{z,d,y,x,e,w  \\ D \niton x_0^{y_i}, x_1^{y_i}}} \beta_{z,d,y,x,e,w,D} \ket{z,d,y,x,e,w} \Big( \ket{D} + (-1)^{z + d\cdot (x_0^y \oplus x_1^y)} \ket{D \cup \{x_0^{y_i},x_1^{y_i}\}}\Big) \,,
\end{align}
for some coefficients $\alpha_{z,d,y,x,e,w,D}$, $\beta_{z,d,y,x,e,w,D}$, such that
$$\| \ket{\psi} - \ket{\psi}' \|^2 \leq 2 \epsilon \,.$$
\end{cor}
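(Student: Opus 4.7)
The plan is to condition on the value of the $\mathsf{Y}$ register and then apply Lemma \ref{lem: structure xor high success} slice by slice, after a simple change of basis that reduces the corollary's validity condition to the XOR condition treated by that lemma.

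First I would decompose $\ket{\psi}$ in the standard basis on $\mathsf{Y}$,
$$\ket{\psi} = \sum_y \alpha_y \ket{y}_{\mathsf{Y}} \otimes \ket{\phi_y},$$
with each $\ket{\phi_y}$ normalized on registers $\mathsf{ZDXEWO}$ and $\sum_y |\alpha_y|^2 = 1$. Since $\ket{\psi} \in \mathcal{S}_{\mathsf{comp}} \cap \mathcal{S}_{k,\mathsf{no-claw}}$, each $\ket{\phi_y}$ inherits the compressed-oracle form and has zero weight on databases containing a claw for any of the $y_j$. The projector $\Pi_{\mathsf{valid},k}$ is block-diagonal in $\mathsf{Y}$ and $\mathsf{DecompAll}$ acts trivially on $\mathsf{Y}$, hence
$$\|\Pi_{\mathsf{valid},k}\,\mathsf{DecompAll}\,\ket{\psi}\|^2 \;=\; \sum_y |\alpha_y|^2\,\|\Pi_{\mathsf{valid},k,y}\,\mathsf{DecompAll}\,\ket{\phi_y}\|^2 \;\geq\; 1-\epsilon,$$
where $\Pi_{\mathsf{valid},k,y}$ denotes the fixed-$y$ block of $\Pi_{\mathsf{valid},k}$. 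Setting $\epsilon_y := 1 - \|\Pi_{\mathsf{valid},k,y}\,\mathsf{DecompAll}\,\ket{\phi_y}\|^2$, we get $\sum_y |\alpha_y|^2 \epsilon_y \leq \epsilon$.

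Second, for each $y$ I would apply the classical bijection on the $(\mathsf{Z}_i,\mathsf{D}_i)$ registers sending $(z_i, d_i) \mapsto (\tilde{t}_i, d_i)$ with $\tilde{t}_i := z_i \oplus d_i \cdot (x_0^{y_i} \oplus x_1^{y_i})$; folding $d$ into the work register, the validity condition on $\ket{\phi_y}$ becomes exactly the XOR condition $\tilde{t}_i = D(x_0^{y_i}) \oplus D(x_1^{y_i})$ appearing in Lemma \ref{lem: structure xor high success}. Applying that lemma at the fixed index $i$ (with $x_0^{y_i}, x_1^{y_i}$ playing the role of $x_0^i, x_1^i$) yields a state $\ket{\phi'_y}$ of the claimed symmetric form with $\|\ket{\phi_y}-\ket{\phi'_y}\|^2 \leq 2\epsilon_y$. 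Undoing the relabeling replaces each $(-1)^{\tilde{t}_i}$ by $(-1)^{z_i + d_i \cdot (x_0^{y_i} \oplus x_1^{y_i})}$, matching the phase in the corollary.

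Finally I would reassemble $\ket{\psi'} := \sum_y \alpha_y \ket{y}\otimes \ket{\phi'_y}$. By orthogonality of the $\ket{y}$,
$$\|\ket{\psi}-\ket{\psi'}\|^2 \;=\; \sum_y |\alpha_y|^2\,\|\ket{\phi_y}-\ket{\phi'_y}\|^2 \;\leq\; 2\sum_y |\alpha_y|^2 \epsilon_y \;\leq\; 2\epsilon,$$
and $\ket{\psi'}$, read off in the $(z,d,y)$ basis, is exactly of the form prescribed. The main obstacle is essentially bookkeeping: verifying that the no-claw property and membership in $\mathcal{S}_{\mathsf{comp}}$ descend to each conditional state $\ket{\phi_y}$, and that the $y$-controlled relabeling commutes with $\mathsf{StdDecomp}$ on the database register (which it does, since the relabeling acts only on $\mathsf{ZD}$), so that Lemma \ref{lem: structure xor high success} applies verbatim.
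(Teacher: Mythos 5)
Your proposal is correct and follows essentially the same route as the paper's proof: decompose conditionally on $y$, use block-diagonality of $\Pi_{\mathsf{valid},k}$ (equivalently, work with $\Pi_{\mathsf{invalid},k}$) to distribute the error $\epsilon$ over the $y$-blocks, apply Lemma~\ref{lem: structure xor high success} to each block after a change of variables that turns the validity predicate into the XOR predicate, and reassemble. The only cosmetic difference is that you implement the change of variables as an in-place relabeling $(z_i,d_i)\mapsto(z_i\oplus d_i\cdot(x_0^{y_i}\oplus x_1^{y_i}),d_i)$ on $\mathsf{ZD}$, whereas the paper appends a fresh output register $\mathsf{T}$ and copies $z_i\oplus d_i\cdot(x_0^{y_i}\oplus x_1^{y_i})$ into it; both are $y$-controlled isometries that commute with the oracle register and map $\mathcal{S}_{\mathsf{comp}}$ to itself, so they serve the same purpose.
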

Notice that the above straightforwardly implies that the normalization of $\ket{\psi'}$ is $O(\sqrt{\epsilon})$-close to $\ket{\psi}$ in Euclidean distance.

\begin{proof}
Define $\Pi_{\mathsf{invalid},k} := I - \Pi_{\mathsf{valid},k}$. For a fixed $y$, let 
$$\Pi_{\mathsf{invalid}, k,y} := \sum_{\substack{z,d,D:\\ \forall i, \,\, z_i \oplus d_i \cdot (x_0^{y_i} \oplus x_1^{y_i}) = D(x_0^{y_i}) \oplus D(x_1^{y_i})  }} \ket{z,d}\bra{z,D} \otimes \ket{D}\bra{D} \,.$$

Then, we can write $\Pi_{\mathsf{invalid},k}$ as 
\begin{equation*}
    \Pi_{\mathsf{invalid},k} = \sum_{y} \ket{y}\bra{y} \otimes \Pi_{\mathsf{invalid},k,y}
\end{equation*}

Now, we can write $\ket{\psi} = \sum_y \gamma_y \ket{y} \otimes \ket{\phi_y}$, for some $\gamma_y$, $\ket{\phi_y}$. Then, 
\begin{equation}
\epsilon \geq \| \Pi_{\mathsf{invalid}, k} \circ \mathsf{DecompAll} \ket{\psi} \|^2 = \sum_y |\gamma_y|^2 \| \Pi_{\mathsf{invalid},k,y} \circ \mathsf{DecompAll} \ket{\phi_y}\|^2 \,. \label{eq: 300}
\end{equation}

Since $\ket{\psi} \in S_{\mathsf{comp}}$, then, for each $y$, the state $\ket{\phi_y}$ inherits the same property (as a state on all other registers except the $\mathsf{Y}$ register). Thus, we can apply Lemma \ref{lem: structure xor high success}, and deduce that, for each $y$, for each $i$, there exists a (not necessarily normalized) state 
\begin{align}
\ket{\phi_y'}= & \sum_{\substack{z,d,x,e,w \\ D \niton x_0^{y_i}, x_1^{y_i}}}\alpha_{z,d,y,x,e,w,D} \ket{z,d,x,e,w} \Big(\ket{D \cup x_{0}^{y_i}} + (-1)^{z_i + d_i\cdot (x_0^{y_i} \oplus x_1^{y_i})} \ket{D \cup x_1^{y_i}}\Big) \nonumber\\
+&\sum_{\substack{z,d,x,e,w  \\ D \niton x_0^{y_i}, x_1^{y_i}}} \beta_{z,d,y,x,e,w,D} \ket{z,d,x,e,w} \Big( \ket{D} + (-1)^{z_i + d_i\cdot (x_0^{y_i} \oplus x_1^{y_i})} \ket{D \cup \{x_0^{y_i},x_1^{y_i}\}}\Big) \,,
\end{align} such that 
\begin{equation}
  \|  \ket{\phi_y} - \ket{\phi_y'} \|^2 \leq 2 \cdot \|\Pi_{\mathsf{invalid},k,y} \circ \mathsf{DecompAll} \ket{\phi_y}\|^2 \,.
\end{equation}
Formally, in order to match the syntax of Lemma \ref{lem: structure xor high success}, we append an output register $\mathsf{T}$ to $\ket{\phi_y}$, and apply the unitary that maps $\ket{0}_{\mathsf{T}}\ket{z,d,y} \mapsto \ket{ z_i \oplus d_i \cdot (x_0^{y_i} \oplus x_1^{y_i})}_{\mathsf{T}}\ket{z,d,y} $. Then, applying Lemma \ref{lem: structure xor high success} gives precisely the above.

Now, let $\ket{\psi'} = \sum_y \gamma_y \ket{y} \ket{\phi_y'}$. Then, plugging this into \eqref{eq: 300}, gives 
$$ \|\ket{\psi} - \ket{\psi'} \|^2 \leq 2\epsilon \,.$$
Notice that $\ket{\psi'}$ has the desired form.
\end{proof}

\subsection{Extracting a claw}
\label{sec: extracting a claw}
In this section we complete the proof of perfect unexplainability of Construction \ref{cons: unexplainable scheme}, building on the machinery developed in Section \ref{sec: the structure of strategies}.

Suppose for a contradiction that there exist algorithms $\mathsf{Verify}$ and $\mathsf{Explain}$ which violate Definition \ref{def: perfect unexp}. Then, there exists $m$ such that 
$$\Pr_{\pk}[\mathsf{Verify}(\pk, c, m, w) = 1: (c,w) \gets \mathsf{Explain}(\mathsf{pk},m)]$$
is non-negligible.

Moreover, recall that $\mathsf{Verify}(\mathsf{pk}, c, m, w) = 0$ if $c$ is not in the support of the distribution of $\mathsf{Enc}(\mathsf{pk},m)$ (we say that the tuple $(\pk,c,m)$ is inconsistent in this case). In this section, we will prove a slightly stronger result, and we will only assume that $\mathsf{Verify}(\mathsf{pk}, c, m, w) = 0$ with exponentially small probability (in the security parameter) if $c$ is not in the support of the distribution of $\mathsf{Enc}(\mathsf{pk},m)$.

Without loss of generality, let us assume that $m=0$, so that a valid ciphertext $c = (z,d,y)$ simply corresponds to $L$ valid ``equations'', i.e. $z_i = d_i \cdot (x_0^{y_i} \oplus x_1^{y_i})\oplus H(x_0^{y_i}) \oplus H(x_1^{y_i})$ for all $i \in [L]$.

At a high level, the proof proceeds in two steps.

\begin{itemize}
    \item[(i)] The first key observation is that, because by definition $\mathsf{Verify}$ never accepts an inconsistent tuple  (or except with exponentially small probability), then it must be the case that whenever $\mathsf{Verify}$ accepts, it \emph{must} itself have queried at a superposition of both pre-images in the following more precise sense. 

    For an NTCF key $k$, denote by $\mathsf{Verify}_k$ the algorithm $\mathsf{Verify}$ where we fix the choice of claw-free functions to be $k$. Let $A$ be \emph{any} efficient oracle algorithm that, on input a choice of claw-free functions, outputs tuples $(z,d,y,w)$. Suppose we run a compressed oracle simulation of $A$, followed by $\mathsf{Verify}$, i.e. we run $(\mathsf{Verify}_k \circ A(k))^{\comp}$. Let the state right before measurement of $\mathsf{Verify}$'s output be:
    $$ \alpha \ket{0} \ket{\phi_0} + \beta \ket{1} \ket{\phi_1} \,, $$
    where the first qubit is the output qubit of $\mathsf{Verify}$, and $\ket{\phi_0}$, $\ket{\phi_1}$ are some states on the remaining registers (including the database register). Then, except with negligible probability over the choice of $k$, the following holds: if $\beta$ is non-negligible, then $\ket{\phi_1}$ has weight only on databases containing, for each $i$, either $x_0^{y_i}$ or $x_1^{y_i}$, and moreover, the weights for each pre-image are approximately equal. A bit more precisely, we first argue that, for any state $\ket{\psi}$ which results from a compressed oracle simulation,
    $$ \| (\ket{1}\bra{1}_{\mathsf{Q}} \otimes (I-\Pi_{\mathsf{valid}})]) \mathsf{DecompAll} \circ \mathsf{Verify}_k^{\comp} \ket{\psi} \|^2$$ 
    must be \emph{exponentially small}. Otherwise, this would violate the fact that $\mathsf{Verify}$ accepts inconsistent tuples only with exponentially small probability. This is the content of Lemma \ref{lem: 3}.
    
    Since $\ket{\phi_1}$ is a state resulting from a compressed oracle simulation, then, if $\beta$ is non-negligible, it must be that 
     $$ \| (\ket{1}\bra{1}_{\mathsf{Q}} \otimes \Pi_{\mathsf{valid}}) \mathsf{DecompAll} \circ \mathsf{Verify}_k^{\comp} \ket{\phi_1} \|^2 \geq 1 - O(2^{-\lambda^c}) \,,$$
     for some $c>0$.
    
    We can then appeal to Corollary \ref{lem: structure} to deduce that $\ket{\phi_1}$ has the right structure, i.e. weight only on databases containing, for each $i$, either $x_0^{y_i}$ or $x_1^{y_i}$, with the weights for each pre-image approximately equal.
    \item[(ii)] We then construct an extraction algorithm which succeeds at extracting a claw with non-negligible probability (thus giving a contradiction). To provide some intuition, let us first describe a simplified version of the algorithm in the case $L=1$. This is only sufficient to extract a claw if $\mathsf{Explain}$ succeeds at convincing $\mathsf{Verify}$ with high probability, e.g. $9/10$ (as opposed to merely non-negligible).
\begin{itemize}
    \item[(a)] Run $\mathsf{Explain}^{\comp}(k)$, and measure the output registers to obtain $z,d,y$. Moreover, check if the database register at this point contains a pre-image of $y$. If so, measure it. Denote this by $x_b^y$.
    \item[(b)] Run $\mathsf{Verify}_k^{\comp}$ on the leftover state from the previous step, and measure the output register. Conditioned on ``accept'', measure the database register. If the database contains $x_{\bar{b}}^y$, output the claw $(x_0^y, x_1^y)$.
\end{itemize}

As we also described in the technical overview, the idea behind this algorithm is that, thanks to the first observation, the state conditioned on obtaining ``accept'' in step (b) has weight only on databases containing either $x_0^y$ or $x_1^y$, and moreover, the weights for each pre-image are approximately equal. This implies that, conditioned on observing ``accept'' in step (b), the final measurement of the algorithm is guaranteed to produce one of the two pre-images approximately \emph{uniformly at random}. 

Now, notice that the algorithm already has a constant probability of obtaining one of the two pre-images in step (a) (assuming $\mathsf{Explain}$ succeeds with probability $\frac{9}{10}$). To see this, notice that if this weren't the case, then $\mathsf{Explain}$ would be producing valid equations at most with probability close to $\frac12$ (since this is the probability of producing a valid equation when the database register does not contain any of the two pre-images), and therefore the probability that $\mathsf{Verify}$ accepts $\mathsf{Explain}$'s output would also be at most close to $\frac12$ (instead of being $\frac{9}{10}$). Thus, what is left to show is that the probability of obtaining ``accept'' in step (ii) \emph{conditioned on finding a pre-image in step} (i) is still noticeable (in fact, constant).

To see that this is the case, notice that, using the notation introduced earlier, the fact that $\mathsf{Explain}$ succeeds with probability $\frac{9}{10}$ implies that the coefficient $\beta$ is such that $|\beta|^2 > \frac{9}{10}$. The question is then: if we look at the state after running $\mathsf{Explain}^{\comp}(k)$, which databases are contributing to the weight on $\beta$? The key is that databases that do not contain any pre-image can at most contribute weight $\frac12$ (this is the content of Lemma \ref{cor: 1} from Section \ref{sec: the structure of strategies}, for the special case of $L=1$.). Thus at least $\frac{4}{10}$ of the weight in $|\beta|^2$ must come from databases containing one of the two pre-images. Thus, conditioned on obtaining a pre-image after step (a), the algorithm still has a constant probability of obtaining the other pre-image after the measurement in step (b). 

Things are a bit more tricky if we wish to show that $\mathsf{Explain}$ succeeds at most with negligible probability. Then, the argument above no longer works. The first step is to modify the encryption algorithm to be a parallel repetition of the single-shot algorithm, as in Construction \ref{cons: unexplainable scheme}. Let $L$ denote the number of repetitions. Let us try to carry out a similar argument as before. We can again deduce that $|\beta|^2$ must be non-negligible if $\mathsf{Explain}$ succeeds with non-negligible probability. Now, paralleling our analysis in the $L=1$ case, what we would like to establish is that there exists some (efficiently implementable) measurement $M$ of the database register which satisfies the following:
\begin{itemize}
    \item[(a)] The outcome of $M$ is a string $x$ (potentially a pre-image).
    \item[(b)] Conditioned on outcome $x$, running $\mathsf{Verify}$ on the post-measurement state still leads to acceptance with non-negligible probability.
\end{itemize}
While condition (b) was straightforwardly true in the $L=1$ case, it need not be in the case of a general $L$. The reason is that for a given $y = y_1 \ldots y_L$, there are now exponentially many possible subsets of pre-images of the $y_i$. It could be the case that a very favorable interference of all of these exponentially many branches leads to a \emph{non-negligible} $|\beta|^2$, but conditioned on obtaining any particular $x$, $\mathsf{Verify}$ may only accept with \emph{exponentially small} probability. Our approach to overcome this barrier is different than earlier, and is a bit more involved.

We show that there exists a sufficiently small (polynomial-size) set of measurements such that the following guarantee holds: at least one of these measurements produces a pre-image of some $y_i$ with non-negligible probability. Moreover, the post-measurement state conditioned on producing a pre-image leads to $\mathsf{Verify}$ accepting with non-negligible probability. This is essentially the content of Lemma \ref{lem: last}. We provide some high-level intuition before presenting the formal proof. 

Consider the first image $y_1$. Consider the measurement $M_1$ that checks whether a database contains a pre-image of $y_1$ or not. We can write the state before this measurement as 
$$ \ket{\phi_\text{yes}} + \ket{\phi_{\text{no}}} \,,$$
where $\ket{\phi_\text{yes}}$ is the projection onto the former outcome, and $\ket{\phi_{\text{no}}}$ on the latter.

One of the two \emph{unnormalized} post-measurement states must contribute \emph{non-negligible} weight to acceptance (meaning that running $\mathsf{Verify}$ on it will result in non-negligible amplitude on the accept outcome). In fact, suppose for a contradiction that both contributions were exponentially small, then, no matter how constructive the interference is between the two branches, the resulting weight on accept would still be exponentially small (since there are only two branches here).

Now, if we are lucky and it is  $\ket{\phi_\text{yes}}$ that leads to a non-negligible weight on accept, then we are almost done. We simply perform measurement $M_1$. Then with non-negligible probability we will collapse to $\ket{\phi_\text{yes}}$. Now, $\ket{\phi_\text{yes}}$ is made up of two components: one with databases containing $x_0^{y_1}$, and one with databases containing $x_1^{y_1}$. One of the two (unnormalized) components must contribute non-negligible amplitude to accept. Then, if we measure to extract a pre-image, we collapse to this component with non-negligible probability. 

Suppose instead that we are unlucky, and it is only $\ket{\phi_\text{no}}$ that contributed non-negligible weight to accept. For simplicity, let's take this non-negligible weight to be $1/p$. Then, we can sort of iterate the argument, going to the next pre-image $y_2$. We can define the 2-bit outcome measurement $M_2$ that checks whether the database contains a pre-image of $y_1$ (this is the first bit of the outcome) and whether it contains a pre-image of $y_2$ (this is the second bit).

Notice that we can write:
$$\ket{\phi_\text{no}} = \ket{\phi_\text{no, yes}} + \ket{\phi_\text{no, no}} \,,$$
where $\ket{\phi_\text{no, yes}}$ is the component that has weight only on databases that contain a pre-image of $y_2$ (and no pre-image of $y_1$), and $\ket{\phi_\text{no, no}}$ has weight on databases that do not contain a pre-image of $y_2$ (and still no pre-image of $y_1$). If the (un-normalized) state $\ket{\phi_\text{no, yes}}$ contributes non-negligible weight to accept (upon applying $\mathsf{Verify}$) then we are in good shape, since $M_2$ will collapse to $\ket{\phi_\text{no, yes}}$ with non-negligible probability. Otherwise, almost all of the $1/p$ contribution (except a negligible amount) must come from $\ket{\phi_\text{no, no}}$. 

We can continue to iterate this say $L/2$ times. If we are unlucky every time, it means that $\ket{\phi_\text{no, no, \ldots, no}}$ (with $L/2$ no's) contributes $1/p$ weight (minus a negligible quantity) to acceptance. The key observation now is that states with databases that have a constant fraction of pre-images missing can only lead to acceptance with \emph{exponentially} small probability (this was formalized in Corollary \ref{cor: 1}). So this is a contradiction. 

Thus, along the way, some $\ket{\phi_\text{no, \ldots, no, yes}}$ must be such that $\|\ket{\phi_\text{no, \ldots, no, yes}} \|$ is non-negligible, and moreover $\mathsf{Verify} \ket{\phi_\text{no, \ldots, no, yes}}$ has non-negligible weight on accept.

Note, importantly that we do \emph{not} perform all of the $L/2$ measurements that we defined simultaneously. Instead, only \emph{one} measurement (picked uniformly at random) is performed. The \emph{existence} of some good $\ket{\phi_\text{no, \ldots, no, yes}}$ guarantees that with $\frac{1}{L/2}$ probability we will be making a measurement which results in a collapse to the good $\ket{\phi_\text{no, \ldots, no, yes}}$ with non-negligible probability.

\end{itemize}

We now proceed to the formal proof. Since our construction and security are in the random oracle model, $\mathsf{Verify}$ and $\mathsf{Explain}$ are oracle algorithms. 

As in the previous section, we denote the registers that $\mathsf{Explain}$ acts on as: $\mathsf{Z}, \mathsf{D}, \mathsf{Y}$, corresponding to the equation; $\mathsf{X}, \mathsf{E}$, corresponding to the query registers; $\mathsf{W}$, corresponding to the witness. For ease of notation, we omit keeping track of the input registers and any auxiliary work register. Moreover, recall that we are taking $m=0$ without loss of generality, so we omit keeping track of an $\mathsf{M}$ register.

$\mathsf{Verify}$ acts on registers: $\mathsf{Z, D, Y, W}$, the input registers, and $\mathsf{Q}$ an output register. Again, we omit keeping track of an input register containing the public key. 

When running a compressed oracle simulation, we denote by $\mathsf{O}$ the database register.

For ease of notation we denote $Q := \mathsf{Verify}$ and $P:= \mathsf{Explain}$.

Denote by $Q_k$ the algorithm $Q$ with fixed input the NTCF key $k$, and similarly denote by $Q_{k,z,d,y}$ the algorithm $Q$ with fixed inputs $k,z,d,y$. For $y \in \{0,1\}^{m \times L}$, we write $y = y_1 \ldots y_L$, where $y_i \in \{0,1\}^m$ for each $i$ (here $m$ is the size of the bit-decomposition of an element in the co-domain of $f_{k,0}, f_{k,1}$).

Recall that we denote by $\mathsf{PhO}$ and $\comp$ respectively the phase oracle and the compressed phase oracle (as defined in Section \ref{sec: comp oracles}). Finally, using a similar notation as in the previous section, we denote by $\mathcal{S}_{\mathsf{comp}}$ the set of (normalized) states on registers $\mathsf{ZDYXEWO}$ of the form:
$$ \sum_{z,d,y,x,e,w,X \subseteq \{0,1\}^n } \alpha_{z,d,y,x,e,w, X} \ket{z,d,y,x,e,w}\ket{D_X} \,.$$
These are states that can be reached by running a compressed oracle simulation.

For an NTCF key $k$, let $\mathcal{S}_{k, \mathsf{no-claw}}$ be the set of (normalized) states, on registers $\mathsf{ZDYXEWO}$, that have zero weight on $y, D$ such that $D$ contains a claw for $y_i$ for some $i$. Let $\Pi_{\mathsf{valid},k}$, acting on registers $\mathsf{ZDYXEO}$, be the projector onto states for which all $L$ equations are valid, with respect to key $k$. We will omit writing the subscript $k$ when it is clear from the context.

\begin{lem}
\label{lem: 3}
There exists a constant $c>0$ such that, for all $\lambda$, for all $\ket{\psi} \in \mathcal{S}_{\mathsf{comp}}$, for all $k$:
$$ \| (\ket{1}\bra{1}_{\mathsf{Q}} \otimes (I-\Pi_{\mathsf{valid}})]) \mathsf{DecompAll} \circ Q_k^{\comp} \ket{\psi} \|^2 \leq 2^{-\lambda^c} \,.$$
\end{lem}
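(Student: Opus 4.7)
The plan is to reduce the quantum expression in question to a classical probability in the standard random oracle model, and then invoke the soundness property of $\mathsf{Verify}$.

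First, since $\ket{\psi} \in \mathcal{S}_{\mathsf{comp}}$, there exists some oracle algorithm $A$ whose execution under a compressed oracle simulation produces $\ket{\psi}$. Let $B = Q_k \circ A$ be the composite algorithm. The key observation is that $\| (\ket{1}\bra{1}_{\mathsf{Q}} \otimes (I-\Pi_{\mathsf{valid}})) \mathsf{DecompAll} \circ Q_k^{\comp} \ket{\psi} \|^2$ equals the joint probability, in the compressed oracle simulation of $B$, of observing $\mathsf{Q}=1$ together with computational-basis values $z, d, y, D$ for which at least one of the equations $z_i = d_i \cdot (x_0^{y_i} \oplus x_1^{y_i}) \oplus D(x_0^{y_i}) \oplus D(x_1^{y_i})$ fails. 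Applying $\mathsf{DecompAll}$ before the validity projection is essential because $\Pi_{\mathsf{valid}}$ is defined on computational basis states of the database register: decompression puts the database into its full truth-table form, so that the projector correctly tests the realized function values at the two pre-images of each $y_i$.

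Next, by Lemma~\ref{lem: compressed oracle equiv}, this compressed-picture probability equals the analogous quantity in the standard random oracle model, namely
\begin{equation*}
\Pr_{H \gets \mathsf{Bool}(n)}\!\left[\,\mathsf{Q}=1 \,\land\, \exists\, i:\; z_i \neq d_i \cdot (x_0^{y_i} \oplus x_1^{y_i}) \oplus H(x_0^{y_i}) \oplus H(x_1^{y_i})\,\right]
\end{equation*}
measured after running $B = Q_k \circ A$ with a uniformly random oracle $H$. Equivalently, this is the probability that $\mathsf{Verify}^H$ accepts the tuple $(\pk, (z,d,y), 0, w)$ while $(z,d,y)$ is inconsistent with $H$, where $(z,d,y,w)$ is the output of $A^H$.

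Finally, the bound follows from the soundness of $\mathsf{Verify}$. I would decompose the state produced by $A^H$ in the computational basis of $\mathsf{Z}, \mathsf{D}, \mathsf{Y}$ and the oracle register, and condition on fixing the classical values $(z, d, y, H)$, leaving a normalized conditional quantum witness $w$ on register $\mathsf{W}$. For each branch on which $(z,d,y)$ is inconsistent with $H$, the soundness property of $\mathsf{Verify}$ --- which holds for arbitrary quantum witnesses --- bounds the acceptance probability of $\mathsf{Verify}^H(\pk, (z, d, y), 0, w)$ by $2^{-\lambda^{c}}$ for some constant $c > 0$. Summing over all branches weighted by their respective probabilities yields the overall bound $2^{-\lambda^c}$. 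The main subtlety to attend to is that the conditional witness $w$ may be entangled with the oracle register and other ancillas before $\mathsf{Verify}$ is applied; however, once we condition on a classical $H$ by first measuring the oracle register, $\mathsf{Verify}^H$ acts on $w$ independently of those other systems, so the soundness bound applies cleanly branch-by-branch.
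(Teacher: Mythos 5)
Your proposal is correct and takes essentially the same route as the paper: interpret the quantity as the probability that $Q$ accepts an $(z,d,y)$ inconsistent with the (decompressed) database, transfer this to the standard random-oracle picture, condition on $(z,d,y,H)$, and invoke the soundness guarantee of $\mathsf{Verify}$ branch-by-branch. The paper carries out the transfer via the explicit operator identity $\mathsf{DecompAll}\circ Q_k^{\comp}=Q_k^{\mathsf{PhO}}\circ\mathsf{DecompAll}$ together with the commutation of $Q_k^{\mathsf{PhO}}$ with the projector onto a fixed $H$, rather than by positing a preparing algorithm $A$ and invoking Lemma~\ref{lem: compressed oracle equiv} (which strictly speaking covers only the output bit, not the joint output-and-database event); both fillings of that step are routine, so the two arguments coincide in substance.
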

\begin{proof}
Suppose for a contradiction that, for all $c>0$, there exists $\lambda$,  $\ket{\psi} \in \mathcal{S}_{\mathsf{comp}}$, and $k$, such that:
\begin{equation} 
\label{eq: 250}
\| (\ket{1}\bra{1}_{\mathsf{Q}} \otimes (I-\Pi_{\mathsf{valid}})]) \mathsf{DecompAll} \circ Q_k^{\comp} \ket{\psi} \|^2 > 2^{-\lambda^c} \,. \end{equation}

Then, we will show that, for all $c>0$, there exists $\lambda$, $z,d,y, H$, and a state $\ket{\phi_H}$ on registers $\mathsf{XEW}$ (plus some purifying registers) such that $z \neq  d \cdot (x_0^y \oplus x_1^y) \oplus H(x_0^y) \oplus H(x_1^y)$, but 
$$ \| \ket{1}\bra{1}_{\mathsf{Q}} Q^H_{k} (\ket{z,d,y} \otimes \ket{\phi_H}   )\|^2 > 2^{-\lambda^c}\,.$$
which would contradict the fact the fact that $Q$ accepts inconsistent answers with exponentially small probability.

For a function $H \in \mathsf{Bool}(n)$, let $\Pi_{H}$ be the projector onto the database corresponding to the truth table of $H$. Then, 
$$ I-\Pi_{\mathsf{valid}} = \sum_{z,d,y} \sum_{\substack{H: \\z \neq d \cdot (x_0^y \oplus x_1^y) \oplus H(x_0^y) \oplus H(x_1^y) }} \ket{z,d,y}\bra{z,d,y} \otimes \Pi_H \,.$$

We can thus rewrite Equation \eqref{eq: 250} as saying that, for all $c$, there exists $\lambda$, and a state $\ket{\psi} = \sum_{z,d,y} \ket{z,d,y} \otimes \ket{\phi_{z,d,y}} \in \mathcal{S}_{\mathsf{comp}}$, such that
\begin{equation}
\label{eq: 251}
\sum_{z,d,y}  \sum_{\substack{H: \\z \neq d \cdot (x_0^y \oplus x_1^y) \oplus H(x_0^y) \oplus H(x_1^y) }} \| (\ket{1}\bra{1}_{\mathsf{Q}} \otimes \Pi_H ) \,  \mathsf{DecompAll} \circ Q^{\comp}_{k} (\ket{z,d,y} \otimes \ket{\phi_{z,d,y}}) \|^2 > 2^{-\lambda^c} \,.
\end{equation}

This implies that there exists $d, z, y$ such that
\begin{equation}
\label{eq: 251}
\sum_{\substack{H: \\z \neq d \cdot (x_0^y \oplus x_1^y) \oplus H(x_0^y) \oplus H(x_1^y) }} \| (\ket{1}\bra{1}_{\mathsf{Q}} \otimes \Pi_H ) \,  \mathsf{DecompAll} \circ Q^{\comp}_{k} \ket{z,d,y} \otimes \ket{\phi_{z,d,y}}_{\mathsf{XEWO}} \|^2 > 2^{-\lambda^c} \,.
\end{equation}

Let $\ket{\phi} : = \ket{\phi_{z,d,y}}$ from now on, where $z,d,y$ satisfy the above equation.

Notice that $ \mathsf{DecompAll} \circ Q_{k, z,d,y}^{\comp} \ket{\psi} =   Q_{k,m}^{\mathsf{PhO}} \circ \mathsf{DecompAll} \ket{\psi}$ for any $\ket{\psi} \in \mathcal{S}_{\mathsf{comp}}$. Hence, the LHS of \eqref{eq: 251} is equal to
\begin{equation}
\label{eq: 252}
     \sum_{\substack{H: \\z \neq d \cdot (x_0^y \oplus x_1^y) \oplus H(x_0^y) \oplus H(x_1^y) }} \| ( \ket{1}\bra{1}_{\mathsf{Q}} \otimes  \Pi_H )\, Q^{\mathsf{PhO}}_{k} \circ \mathsf{DecompAll} \ket{ z,d,y} \otimes \ket{\phi}\|^2  \,.
\end{equation}
Now, we can write $ \mathsf{DecompAll} \ket{\phi} = \sum_H \ket{\phi_H}_{\mathsf{XEW}} \otimes \ket{H}_{\mathsf{O}} $ for some $\ket{\phi_H}$, where $\ket{H}$ denote the database corresponding to the truth table of the function $H$. Then, \eqref{eq: 252} is equal to 
\begin{align}
\label{eq: 253}
    &\sum_{\substack{H: \\z \neq  d \cdot (x_0^y \oplus x_1^y) \oplus H(x_0^y) \oplus H(x_1^y) }} \| ( \ket{1}\bra{1} \otimes  \Pi_H )\, Q^{\mathsf{PhO}}_{k} \ket{z,d,y} \otimes \ket{\phi_{H}} \otimes \ket{H}  \|^2  \\
    =&  \sum_{\substack{H: \\z \neq  d \cdot (x_0^y \oplus x_1^y) \oplus H(x_0^y) \oplus H(x_1^y) }} \| \ket{1}\bra{1}_{\mathsf{Q}} \, Q^{\mathsf{PhO}}_{k}\ket{z,d,y} \otimes \ket{\phi_{H}} \otimes \ket{H}  \|^2 
\end{align}
where the last line uses the fact that $Q^{\mathsf{PhO}}_{k}$ commutes with $\Pi_H$. It follows straightforwardly that there exists $H$ such that $z \neq m \oplus d \cdot (x_0^y \oplus x_1^y) \oplus H(x_0^y) \oplus H(x_1^y)$ and 
$$ \| \ket{1}\bra{1}_{\mathsf{Q}} \, Q^{\mathsf{PhO}}_{k}\ket{z,d,y} \otimes \ket{\phi_{H}} \otimes \ket{H}  \|^2 > 2 \cdot 2^{-\lambda^c} \,.$$
The latter is equivalent to
$$  \| \ket{1}\bra{1}_{\mathsf{Q}} \, Q^{H}_{k} \ket{z,d,y}\otimes \ket{\phi_{H}}  \|^2 > 2 \cdot 2^{-\lambda^c}\,,$$
as desired.

\end{proof}


Now, suppose that we knew the following lemma were true. Then we will show that this yields an algorithm to extract a claw. Let $I \subseteq [L]$, and $b_I = (b_i: i \in I)$ where $b_i \in \{0,1, \perp\}$. Let $\Pi^{b_I}_{y,I}$ be the projector (acting on register $\mathsf{O}$) onto databases $D_X$ such that: for $i \in I$, if $b_i = \perp$, then $X \niton x_0^{y_i}, x_1^{y_i}$, otherwise $X$ contains $x_{b_i}^{y_i}$. Notice that for any $y$ and $I \subseteq \{0,1\}^L$, the projective measurement $M_{y,I} := \{\Pi^{b_I}_{y,I}\}_{b_I}$ can be performed efficiently.

In the following lemma, let $\ket{\psi_k}$ be the state of $P^{\comp}(k)$ just before measurement of the output.

\begin{lem}
\label{lem: last}
There exists a non-negligible function $g$, a sequence of sets $\mathcal{J}_{\lambda} \subseteq \K$, with $\frac{|\mathcal{J}_{\lambda}|}{|\K|} \geq g(\lambda)$, an efficiently generatable sequence of sets $\mathcal{S}_{good, \lambda}$ of pairs $(I, b_I)$, where $I \subseteq [L]$, and $b_I = (b_i: i \in I)$ where $b_i \in \{0,1, \perp\}$, such that the following holds for all $\lambda$:
\begin{itemize}
\item[(a)] $|\mathcal{S}_{good, \lambda}| = \poly(\lambda)$.
    \item[(b)] $b_I \neq \perp^{|I|}$ for all $(I,b_I) \in \mathcal{S}_{good, \lambda}$.
    \item[(c)] For $k\in \mathcal{J}_{\lambda}$, there exists $(I,b_I) \in \mathcal{S}_{good, \lambda}$ such that $\left \| \ket{1}\bra{1}_{\mathsf{Q}} Q_k^{\comp} (\sum_y \ket{y}\bra{y} \otimes \Pi_{y,I}^{b_I}) \ket{\psi_k} \right\|^2 \geq g(\lambda)$.
\end{itemize}
\end{lem}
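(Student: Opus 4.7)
The plan is to define $\mathcal{S}_{good, \lambda} := \{([l], (\perp,\ldots,\perp, b)) : l \in [L/2],\, b \in \{0,1\}\}$ (where the tuple has $l-1$ leading $\perp$'s followed by $b$). This set has size $L$, is efficiently generatable, and satisfies $b_I \neq \perp^{|I|}$, giving (a) and (b). For each $(I, b_I) = ([l], (\perp,\ldots,\perp,b)) \in \mathcal{S}_{good,\lambda}$, write $\Pi_l^b := \sum_y \ket{y}\bra{y}_{\mathsf{Y}} \otimes \Pi^{b_I}_{y,I}$, and let $\Pi_{\text{bad}} := \sum_y \ket{y}\bra{y}_{\mathsf{Y}} \otimes \Pi^{(\perp,\ldots,\perp)}_{y,[L/2]}$. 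Up to a negligible-distance correction coming from the subspace of databases containing a claw (which, by the claw-free property of $\mathcal{F}$, carries only negligible weight for all but a negligible fraction of $k$, so that we may restrict attention to $\mathcal{S}_{k,\mathsf{no-claw}}$), these projectors form a resolution of the identity on $\ket{\psi_k}$, yielding the decomposition $\ket{\psi_k} = \Pi_{\text{bad}} \ket{\psi_k} + \sum_{l\in[L/2],\, b\in\{0,1\}} \Pi_l^b \ket{\psi_k}$.

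The key technical step is to show that the bad branch contributes only negligibly to acceptance, i.e.\ $\|\ket{1}\bra{1}_{\mathsf{Q}} Q_k^{\comp} \Pi_{\text{bad}}\ket{\psi_k}\|^2 = \negl(\lambda)$. First, $\Pi_{\text{bad}}\ket{\psi_k}$ lies in $\mathcal{S}_{\mathsf{comp}} \cap \mathcal{S}_{k,\mathsf{no-claw}} \cap \mathcal{S}_{k, \leq L/2}$ by construction (at least $L/2$ of the indices have no pre-image in the supporting databases). WLOG assume $Q$ first copies its inputs $(z,d,y)$ to a fresh register before any other operation; then $Q_k^{\mathsf{PhO}}$ fixes the $\mathsf{ZDY}$ computational basis, and since the phase oracle fixes the $\mathsf{O}$-basis as well, one has $[Q_k^{\mathsf{PhO}}, \Pi_{\mathsf{valid},k}] = 0$. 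Combining this with the intertwining $\mathsf{DecompAll} \circ Q_k^{\comp} = Q_k^{\mathsf{PhO}} \circ \mathsf{DecompAll}$ and the unitarity of $Q_k^{\mathsf{PhO}}$, one obtains
\[
\|\Pi_{\mathsf{valid}} \mathsf{DecompAll} \circ Q_k^{\comp} \Pi_{\text{bad}}\ket{\psi_k}\|^2 = \|\Pi_{\mathsf{valid}} \mathsf{DecompAll}\, \Pi_{\text{bad}}\ket{\psi_k}\|^2 \leq 2^{-L/2},
\]
by Corollary \ref{cor: 1}. Combining with Lemma \ref{lem: 3} (applied to the normalization of $\Pi_{\text{bad}}\ket{\psi_k}$) to control the accept-weight on invalid states then yields the claimed negligible bound.

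Finally, I would conclude by a triangle inequality and a Markov-type averaging argument. The hypothesis that $P$ convinces $Q$ with non-negligible probability gives $\mathbb{E}_k \|\ket{1}\bra{1}_{\mathsf{Q}} Q_k^{\comp} \ket{\psi_k}\|^2 \geq p(\lambda)$ for some non-negligible $p$, so one takes $\mathcal{J}_\lambda := \{k : \|\ket{1}\bra{1}_{\mathsf{Q}} Q_k^{\comp} \ket{\psi_k}\|^2 \geq p(\lambda)/2\}$; by Markov, $|\mathcal{J}_\lambda|/|\K| \geq p(\lambda)/2$. For each $k \in \mathcal{J}_\lambda$, the triangle inequality applied to the decomposition of $\ket{\psi_k}$, together with the negligible bad-branch bound and $|\mathcal{S}_{good,\lambda}| = L = \poly(\lambda)$, forces some $(I, b_I) \in \mathcal{S}_{good,\lambda}$ to achieve $\|\ket{1}\bra{1}_{\mathsf{Q}} Q_k^{\comp} \Pi_l^b \ket{\psi_k}\|^2 \geq g(\lambda)$ for a non-negligible $g$ (roughly $p(\lambda)/(4L^2)$), establishing (c). The main obstacle will be carefully justifying the commutation $[Q_k^{\mathsf{PhO}}, \Pi_{\mathsf{valid}}] = 0$ (requiring a mild WLOG reformulation of $Q$) and the clean reduction to $\mathcal{S}_{k,\mathsf{no-claw}}$ so that Corollary \ref{cor: 1} applies; once these are in place, the combinatorial part falls out readily.
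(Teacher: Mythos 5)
Your proposal is correct and closely mirrors the paper's argument: same choice of $\mathcal{S}_{good,\lambda}$ (the prefix projectors $([l],(\perp^{l-1},b))$), same resolution-of-identity decomposition of $\ket{\psi_k}$ into the bad branch $\Pi_{\mathrm{bad}}\ket{\psi_k}$ plus the good branches $\Pi_l^b\ket{\psi_k}$ (modulo the negligible no-claw correction), and the same Markov-plus-triangle-inequality finish with a $1/\poly(L)$ loss. The one genuine deviation is how the bad-branch bound $\|\ket{1}\bra{1}_{\mathsf{Q}}Q_k^{\comp}\Pi_{\mathrm{bad}}\ket{\psi_k}\|^2 = \negl(\lambda)$ is established. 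You pass $\Pi_{\mathsf{valid}}$ through $Q_k^{\mathsf{PhO}}$ using the commutation $[Q_k^{\mathsf{PhO}},\Pi_{\mathsf{valid},k}]=0$ (which indeed holds under the mild WLOG that $Q$ copies $z,d,y$ to a fresh register before acting, since the phase oracle is diagonal on $\mathsf{O}$ and $Q$ then fixes the $\mathsf{ZDY}$ basis), apply Corollary~\ref{cor: 1} to bound the valid component, and invoke Lemma~\ref{lem: 3} for the invalid component. The paper avoids invoking the commutation: it bounds the valid contribution only via the contraction $\|\ket{1}\bra{1}_{\mathsf{Q}}Q_k^{\mathsf{PhO}}\Pi_{\mathsf{valid}}\cdots\|\leq\|\Pi_{\mathsf{valid}}\cdots\|$ together with Corollary~\ref{cor: 1}, and then argues by contradiction that a non-negligible accept weight would force a non-negligible accept weight on the invalid component, re-deriving the contradiction with $Q$'s soundness inline (essentially the argument of Lemma~\ref{lem: 3} applied to a specific $H,z,d,y$). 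Your route is modular and slightly cleaner in that it gives an orthogonal decomposition with no square-root loss and reuses Lemma~\ref{lem: 3} as a black box; the paper's route avoids having to justify the WLOG reformulation of $Q$. Both are sound; just make sure to state the WLOG copy assumption explicitly and verify that it is compatible with the definition of $\mathsf{Verify}$ used elsewhere.
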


With this lemma in hand, we can show that the following algorithm $\mathsf{Ext}$ succeeds at extracting a claw. 
\begin{itemize}
    \item[(i)] On input $k$, run $P^{\comp}(k)$ and obtain $z,d,y$. 
    \item[(ii)] Sample $(I,s_I) \gets \mathcal{S}_{good, \lambda}$, and perform the measurement $M_{y,I}$ on the database register of the leftover state from step (i). If the outcome is $s_I$, then the database contains at least a pre-image of $y_i$ for some $i \in I$. Let $x_b^{y_i}$ be this pre-image. 
    \item[(iii)] Run $Q_k^{\comp}$, and measure the database register. If the latter contains $x_{\bar{b}}^{y_i}$ output $(x_0^{y_i}, x_1^{y_i})$.
\end{itemize}

Let $\mathcal{J}_{\lambda}$ and $\mathcal{S}_{good, \lambda}$ be as in Lemma \ref{lem: last}, and let $(I, b_I) \in \mathcal{S}_{good, \lambda}$ be such that it satisfies condition (c) in Lemma \ref{lem: last}.

The above algorithm samples this pair with probability $\frac{1}{|\mathcal{S}_{good, \lambda}|} = \frac{1}{\poly(\lambda)}$ in step (ii). There exists some negligible function $\negl$, such that, for any $k \in \mathcal{J}_{\lambda}$, up to $\negl(\lambda)$ distance, we have
\begin{equation}
     Q_k^{\comp} (\sum_y \ket{y}\bra{y} \otimes \Pi_{y,I}^{b_I}) \ket{\psi_k} = \alpha \ket{0} \ket{\phi_0} + \beta \ket{1} \ket{\phi_1}
     \label{eq: 510}
\end{equation}
for some $\alpha, \beta$ such that $ |\beta| \geq g(\lambda) $, and for some  $\ket{\phi_0},\ket{\phi_1} \in \mathcal{S}_{\mathsf{comp}} \cap \mathcal{S}_{k, \mathsf{no-claw}}$. To see that $\ket{\phi_0},\ket{\phi_1} \in \mathcal{S}_{\mathsf{comp}}$, notice that $\mathcal{H}_\mathsf{Q}\otimes S_{\mathsf{comp}}$ is invariant under the action of $\Pi_{y,I}^{b_I}$, and is invariant under the action of $M \otimes \tilde{O}$, where $M$ is any linear operator on $H_{\mathsf{QZDYW}}$, and $\tilde{O} = \mathsf{StdDecomp}^{-1} \circ O \circ \mathsf{StdDecomp}$. Moreover, \eqref{eq: 510} must be true for some $\ket{\phi_0},\ket{\phi_1} \in  \mathcal{S}_{k, \mathsf{no-claw}} $, otherwise the following would be an efficient algorithm to obtain a claw: run $P^{\comp}(k)$; measure the $\mathsf{Y}$ register to obtain $y$; perform the measurement $M_{y,I}$ to obtain $b_I$ followed by a standard basis measurement of the database register. With non-negligible probability the database would contain a claw for $y_i$ for some $i$.

For the same reason as above, since $\ket{\psi_k} \in \mathcal{S}_{\mathsf{comp}}$, it is also the case that $(\sum_y \ket{y}\bra{y} \otimes \Pi_{y,I}^{b_I}) \ket{\psi_k} \in \mathcal{S}_{\mathsf{comp}}$. Hence, we can apply Lemma \ref{lem: 3} to deduce that
$$ \| (\ket{1}\bra{1}_{\mathsf{Q}} \otimes (I-\Pi_{\mathsf{valid}})]) \mathsf{DecompAll} \circ Q_k^{\comp} (\sum_y \ket{y}\bra{y} \otimes \Pi_{y,I}^{b_I}) \ket{\psi_k}\|^2 \leq 2^{-\lambda^c} \,.$$
for some $c>0$.

Plugging \eqref{eq: 510} into the above, gives

$$ |\beta|^2 \cdot \| (I-\Pi_{\mathsf{valid}}) \mathsf{DecompAll} \ket{\phi_1} \|^2 \leq 2^{-\lambda^c} \,.$$

Hence, 
$$ |\beta|^2 \cdot \| (\Pi_{\mathsf{valid}}) \mathsf{DecompAll} \ket{\phi_1} \|^2 \geq |\beta|^2 - 2^{-\lambda^c} \,. $$



Since $\beta$ is non-negligible, then there exists a polynomial $p$, such that $\beta(\lambda) > \frac{1}{p(\lambda)}$ for infinitely many $\lambda$. Let $\Lambda$ denote this set. Then, we have that, for $\lambda \in \Lambda$, 
$$\| (\Pi_{\mathsf{valid}}) \mathsf{DecompAll} \ket{\phi_1} \|^2 \geq  1- O(2^{-\lambda^{c'}})\,. $$
for some $c'>0$.

We can apply Corollary \ref{lem: structure} to deduce that, for each $i$, $\ket{\phi_1}$ is $O(2^{-\lambda^{c'}})$-close to a state of the form 
\begin{align}
    & \sum_{\substack{z,d,y,x,e,w \\ D \niton x_0^{y_i}, x_1^{y_i}}}\alpha_{z,d,y,x,e,w,D} \ket{z,d,y,x,e,w} \Big(\ket{D \cup \{x_{0}^{y_i}\}} + (-1)^{d\cdot(x_0^{y_i} + x_1^{y_i})} \ket{D \cup \{x_1^{y_i}\}}\Big) \nonumber\\
+&\sum_{\substack{z,d,y,x,e,w  \\ D \niton x_0^{y_i}, x_1^{y_i}}} \beta_{z,d,y,x,e,w,D} \ket{z,d,y,x,e,w} \Big( \ket{D} + (-1)^{d\cdot(x_0^{y_i} + x_1^{y_i})} \ket{D \cup \{x_0^{y_i},x_1^{y_i}\}}\Big) \,,
\end{align}    
for some $\alpha_{z,d,y,x,e,w,D}$ and $\beta_{z,d,y,x,e,w,D}$.

Now, notice that, for all but a negligible fraction of $k$, the weight on the second branch must be negligible in $\lambda$, otherwise this would yield an algorithm to recover a claw. Thus, for any polynomial $q$, for sufficiently large $\lambda \in \Lambda$ (and for all but negligible fraction of $k \in \mathcal{J}$), we have that 
$\ket{\phi_1}$ is $\frac{1}{q(\lambda)}$-close to a state of the form 
\begin{equation}
    \sum_{\substack{z,d,y,x,e,w \\ D \niton x_0^{y_i}, x_1^{y_i}}}\alpha_{z,d,y,x,e,w,D} \ket{z,d,y,x,e,w} \Big(\ket{D \cup \{x_{0}^{y_i}\}} + (-1)^{m+d\cdot(x_0^{y_i} + x_1^{y_i})} \ket{D \cup \{x_1^{y_i}\}}\Big) \,.
\end{equation}

Then, for any polynomial $q$, and for sufficiently large $\lambda \in \Lambda$, conditioned on $\mathsf{Ext}$ having sampled a pair $(I,s_I)$ satisfying condition (c) (which happens with probability $\frac{1}{\poly(\lambda)}$), the algorithm $\mathsf{Ext}$ obtains a different pre-image at step (iii) (for the same $y_i$) than the one obtained at step (ii) with probability $1/2 - \frac{1}{q(\lambda)}$.

Thus, all in all, $\mathsf{Ext}$ outputs a claw with non-negligible probability.

To conclude the proof of Theorem \ref{thm: 2}, we are left to prove Lemma \ref{lem: last}.

\begin{proof}[Proof of Lemma \ref{lem: last}]
For $i \in \{1,\ldots,L/2\}$, define $I_i := [i]$, and for $s \in \{0,1\}$, define $b_{i, s} := (\perp^{i-1}, s)$.

By hypothesis, $P$ returns a valid equation and proof with non-negligible probability. This implies that there exists a polynomial $p$, and an infinite sized set $\Lambda$ such that the following holds for all $\lambda \in \Lambda$: there exists a set $\mathcal{J}_{\lambda} \subseteq \K$ with $\frac{|\mathcal{J}_{\lambda}|}{|\K|} \geq 1/p(\lambda)$ such that, for $k \in \mathcal{J}_{\lambda}$, $\Pr[ \textnormal{Verifier accepts}|k] \geq \frac{1}{p(\lambda)}$. For the rest of the proof, we restrict our attention to $\lambda \in \Lambda$ and $k \in \mathcal{J}_{\lambda}$.

Denote by $\ket{\psi_k}$ the state of $P^{\comp}(k)$ just before measurement of the output. We argue that it must be the case that for any polynomial $q$, for all large enough $\lambda$, for all but a negligible fraction of $k \in \mathcal{J}_{\lambda}$,
$$ \left \| \ket{1}\bra{1}_{\mathsf{Q}} Q_k^{\comp} (\sum_y \ket{y}\bra{y} \otimes \Pi_{y,I_{L/2}}^{(\perp)^{\frac{L}{2}}}) \ket{\psi_k} \right\|^2 \leq \frac{1}{q(\lambda)}\,.$$

Suppose not, then there is some polynomial $q$, and a polynomial $r$ such that for infinitely many $\lambda \in \Lambda$, there exists some subset $\mathcal{J}'_{\lambda} \subseteq \mathcal{J}_{\lambda} $, with $|\mathcal{J}'_{\lambda}| \geq \frac{1}{r(\lambda)} | \mathcal{J}_{\lambda}| \geq \frac{1}{r(\lambda)p(\lambda)}$, such that
\begin{align}
    &\left \| \ket{1}\bra{1}_{\mathsf{Q}} (Q_k^{\mathsf{PhO}} \circ \mathsf{DecompAll}) (\sum_y \ket{y}\bra{y} \otimes \Pi_{y,I_{L/2}}^{(\perp)^{\frac{L}{2}}}) \ket{\psi_k} \right\|^2 \nonumber \\
    = &\left \| \ket{1}\bra{1}_{\mathsf{Q}} Q_k^{\comp} (\sum_y \ket{y}\bra{y} \otimes \Pi_{y,I_{L/2}}^{(\perp)^{\frac{L}{2}}}) \ket{\psi_k} \right\|^2 \nonumber \\
    >& \frac{1}{q(\lambda)} 
    \label{eq: 560}
\end{align}
where the first equality follows from the fact that, for $\phi \in \mathcal{S}_{\mathsf{comp}}$, $Q_k^{\mathsf{PhO}} \circ \mathsf{DecompAll} = \mathsf{DecompAll} \circ Q_k^{\comp}$, and that $\ket{1}\bra{1}_{\mathsf{Q}}$ commutes with $\mathsf{DecompAll}$. Let $\Lambda' \subseteq \Lambda$ be this infinitely sized set of $\lambda$. 

For convenience, denote $\ket{\psi'_k} := \left(\sum_y \ket{y}\bra{y} \otimes  \Pi_{y,I_{L/2}}^{(\perp)^{\frac{L}{2}}} \right)\ket{\psi_k}$. Let $\Pi_{k,y,\mathsf{claw}}$ denote a projection onto databases that contain a claw for $y_i$ for some $i$.

Then, for all $\lambda \in \Lambda'$, for all but a negligible fraction of $k \in \mathcal{J}'_{\lambda}$, 
$$ \big\| (\sum_y \ket{y}\bra{y} \otimes  \Pi_{k,y, \mathsf{claw}}) \ket{\psi'_k} \big\|^2 = \negl(\lambda) \,.$$
Otherwise, we immediately obtain a strategy to recover a claw with non-negligible probability.

Thus, for all $\lambda \in \Lambda'$, for all but a negligible fraction of $k \in \mathcal{J}'_{\lambda}$, we have that the (unnormalized) state $\ket{\psi'_k}$ is negligibly close to a state $C \cdot \ket{\psi_k''}$, for some $0 \leq C \leq 1$ and $\ket{\psi_k''} \in \mathcal{S}_{\mathsf{no-claw}}$.

Applying Corollary \ref{cor: 1}, we have that, for all $\lambda \in \Lambda'$, for all but a negligible fraction of  $k \in \mathcal{J}'_{\lambda}$,
\begin{equation}
    \big\| \Pi_{\mathsf{valid}} \mathsf{DecompAll} \ket{\psi'_k} \big\|^2 \leq 2^{-L/2} +\negl(\lambda) \,. \label{eq: 570}
\end{equation}

Combining \eqref{eq: 560} and \eqref{eq: 570}, we get that, for all $\lambda \in \Lambda'$, for all but a negligible fraction of $k \in \mathcal{J}'_{\lambda}$,
\begin{equation}
    \big\| (\ket{1}\bra{1}_{\mathsf{Q}} (Q_k^{\mathsf{PhO}}) \circ  (I-\Pi_{\mathsf{valid}}) \circ \mathsf{DecompAll} \ket{\psi'} \big\|^2  > \frac{1}{q(\lambda)}(1-2^{L/2}) - \negl(\lambda)\,. 
\end{equation}

Now, we can write 
$$ \mathsf{DecompAll} \ket{\psi'_k}= \frac{1}{\sqrt{2^{n+1}}}\sum_{H} \ket{\psi_{k,H}} \ket{H} \,,$$
for some $\ket{\psi_{k,H}}$, where here we are abusing notation slightly and equating a full database to the function it defines (also note that $2^{n+1}$ is the number of functions from $n$ bits to $1$ bit).

Then, for all $\lambda \in \Lambda'$, for all but a negligible fraction of $k \in \mathcal{J}'_{\lambda}$, we have 
\begin{equation}
    \frac{1}{2^{n+1}} \sum_H \big\| (\ket{1}\bra{1}_{\mathsf{Q}} (Q_k^{\mathsf{PhO}}) \circ  (I-\Pi_{\mathsf{valid}})\ket{\psi_{k,H}} \ket{H} \big\|^2  > \frac{1}{q(\lambda)}(1-2^{L/2}) - \negl(\lambda)\,. \label{eq: 30}
\end{equation}

Then \eqref{eq: 30} implies that there exists $H$ such that:
\begin{equation}
    \big\| (\ket{1}\bra{1}_{\mathsf{Q}} (Q_k^{\mathsf{PhO}}) \circ  (I-\Pi_{\mathsf{valid}})\ket{\psi_{k,H}} \ket{H} \big\|^2  \geq \frac{1}{2q(\lambda)}(1-2^{L/2}) \,.
\end{equation}

We can further write $\ket{\psi_{k,H}} = \sum_{z,d,y} \ket{z,d,y}\ket{\psi_{k,z,d,y,H}}$. Then, all in all, we get that there exists $k$, $H$ and $z,d,y$ such that
\begin{align*}
    \Pr[\mathsf{acc} =1   \, \land \,  z_i \neq d_i \cdot (x_0^{y_i} \oplus x_1^{y_i}) \oplus H(x_0^{y_i}) \oplus  &H(x_1^{y_i}) \textnormal{ for some $i$ } : \mathsf{acc} \gets Q_k^{H} \ket{z,d,y} \ket{\psi_{k,z,d,y,H}}] \\
    &\geq \frac{1}{2q(\lambda)}(1-2^{L/2})\,.
\end{align*}

This contradicts the hypothesis that $Q$ accepts invalid $z,d,y$'s only with exponentially small probability. Hence, we have shown that, for any polynomial $q$, for all large enough $\lambda \in \Lambda$, for all but a negligible fraction of $k \in \mathcal{J}_{\lambda}$,
\begin{equation}
\left \| \ket{1}\bra{1}_{\mathsf{Q}} Q_k^{\comp} (\sum_y \ket{y}\bra{y} \otimes \Pi_{y,I_{L/2}}^{(\perp)^{\frac{L}{2}}}) \ket{\psi_k} \right\|^2 \leq \frac{1}{q(\lambda)}\,.  \label{eq: 31}
\end{equation}

Let us restrict our attention to $\lambda, k$ as above.
Recall that, for $i \in \{1,\ldots,L/2\}$, we have defined $I_i := [i]$, and for $s \in \{0,1\}$, $b_{i, s} := (\perp^{i-1}, s)$.


We argue that there exists $i \in [\frac{L}{2}]$ and $s \in \{0,1\}$ such that
$$ \left \| \ket{1}\bra{1}_{\mathsf{Q}} Q_k^{\comp} (\sum_y \ket{y}\bra{y} \otimes \Pi_{y,{I_i}}^{b_{i,s}}) \ket{\psi} \right\|^2 \geq \frac{1}{p(\lambda)L^2} \,.$$

Suppose not, then a straightforward argument (which uses the definition of $p$ at the start of this proof) shows that it must be
\begin{equation}
\left \| \ket{1}\bra{1}_{\mathsf{Q}} Q_k^{\comp} (\sum_y \ket{y}\bra{y} \otimes \Pi_{y,I_{L/2}}^{(\perp)^{\frac{L}{2}}}) \ket{\psi} \right\|^2 \geq \frac{1}{p(\lambda)}\left(1-\frac{1}{L^2}\right)^{\frac{L}{2}}\,,
\end{equation}
which is greater than $\frac{1}{2p(\lambda)}$ for large enough $L$. This contradicts \eqref{eq: 31} for the choice $q = 2\cdot p$. 

\end{proof}

\section{Stronger protection against coercion before-the-fact using quantum ciphertexts}
\label{sec:quantum-ciphertexts}

In this work, we have focused on quantum encryption algorithms that output a \emph{classical} ciphertext. Here, we briefly discuss the possibility of protecting against coercion in the more general setting where (part of) the ciphertext is allowed to be a quantum state. We outline how, in this setting, it is possible to achieve an even stronger notion of protection against coercion before-the-fact. Recall from Section~\ref{sec: before-the-fact} that a perfectly unexplainable encryption scheme (like the one we presented in Section~\ref{sec: unexp scheme}) protects against coercion before-the-fact (in the formal sense of Definition~\ref{def: before-attack}). Informally, the latter notion says that there is no way for an attacker to prescribe to a sender how to encrypt (in a way that it can later verify) \emph{before} seeing the public key. This is in contrast to the classical world, where an attacker can always prescribe the input randomness, even without knowing the public key. Now, what if an attacker knows the public key? Is there any way to protect against coercion before-the-fact?

If the ciphertext is classical, then there is no hope: the attacker can just honestly compute a ciphertext, and prescribe to the sender that she should submit this ciphertext. Since the ciphertext is classical (and known to the attacker), the attacker can always verify that the sender submitted what was prescribed. However, if part of the ciphertext is allowed to be a \emph{quantum} state, then it becomes in principle possible that the attacker might not be able to verify. Here, we sketch a scheme satisfying this stronger notion of protection against coercion. The scheme is a variation on the ideas already explored in this paper. 
Let $\{(f_{k,0},f_{k,1})\}_k$ be a family of trapdoor claw-free function pairs. For simplicity, assume they are not ``noisy''.

The public key is again a trapdoor claw-free function key $k$, and the secret key is a corresponding trapdoor~$t_k$.
\begin{itemize}
    \item \emph{Encryption}: To encrypt a bit $m$, under public key $k$:
    \begin{itemize}
        \item Create the uniform superposition $\sum_{b\in\{0,1\} , x \in \{0,1\}^n} \ket{b}\ket{x}$. Then, compute $f_{k,0}$ and $f_{k,1}$ in superposition, controlled on the first qubit. The resulting state is
        $$ \sum_{b\in\{0,1\} , x \in X} \ket{b}\ket{x}\ket{f_{k,b}(x)} \,.$$
        \item Measure the image register, and let $y$ be the outcome. The leftover quantum state is
        $$ \frac{1}{\sqrt{2}} \ket{0} \ket{x_0} + \frac{1}{\sqrt{2}} \ket{1} \ket{x_1} \,,$$
        where $x_0$ and $x_1$ are the pre-images of $y$.
        Output the ciphertext $$ \left( \frac{1}{\sqrt{2}} \ket{0} \ket{x_0} + (-1)^m \frac{1}{\sqrt{2}}\ket{1} \ket{x_1}, y \right) \,,$$
        which is obtained by applying $Z^m$ to the first qubit.
    \end{itemize}
    \item \emph{Decryption}: On input $c = (\ket{\psi}, y)$, use the trapdoor $t_k$ to compute the pre-images $x_0, x_1$ of $y$. For $b \in \{0,1\}$, let $\ket{\phi_b} = \frac{1}{\sqrt{2}} \ket{0} \ket{x_0} + (-1)^b \frac{1}{\sqrt{2}} \ket{1} \ket{x_1}$. Perform the measurement 
    $$\{\ket{\phi_0}\bra{\phi_0}, \ket{\phi_1}\bra{\phi_1}, I - \ket{\phi_0}\bra{\phi_0} - \ket{\phi_1}\bra{\phi_1} \} \,.$$
    assigning outcomes $0$, $1$, and ``$\perp$'' respectively to the three projectors. Output the measurement outcome.
\end{itemize}
Note that this scheme satisfies the following property. A sender who receives from an attacker a ciphertext encrypting some bit $m$, i.e.\ a ciphertext of the form $\left( \frac{1}{\sqrt{2}} \ket{0} \ket{x_0} + (-1)^m \frac{1}{\sqrt{2}}\ket{1} \ket{x_1}, y \right)$, can turn this into a ciphertext encrypting the opposite bit $\bar{m}$ by applying a $Z$ gate to the first qubit. Crucially, the (computationally bounded) attacker cannot distinguish $\frac{1}{\sqrt{2}} \ket{0} \ket{x_0} + \frac{1}{\sqrt{2}}\ket{1}\ket{x_1}$ from $\frac{1}{\sqrt{2}} \ket{0} \ket{x_0} - \frac{1}{\sqrt{2}}\ket{x_1}$ (even though the attacker created the original state, and might have kept some side information!). This is because the ability to distinguish would imply an efficient algorithm to extract a claw. This can be seen, for example, as a consequence of the equivalence between distinguishing and ``swapping''~\cite{aaronson2020hardness}. We remark that for this reduction to hold it is crucial that part of the ciphertext that the attacker gives to the sender is \emph{classical} (namely, the string $y$).

\bibliographystyle{alpha}
\bibliography{references}

\end{document}